\documentclass{article}
\usepackage[T1]{fontenc}
\usepackage[margin=1in]{geometry}
\usepackage{amsfonts,amsmath,amsthm,amssymb,mathtools}  

\mathtoolsset{centercolon}
\usepackage{xfrac,nicefrac}
\usepackage{mathdots}
\usepackage{mleftright}  
\let\left\mleft
\let\right\mright

\usepackage{xspace}
\xspaceaddexceptions{]\}}  
\usepackage{regexpatch}

\usepackage{bm,bbm,dsfont}  
\usepackage{caption}
\usepackage[normalem]{ulem}
\usepackage{enumitem}

\usepackage{booktabs}
\usepackage{graphicx}
\usepackage{float}
\usepackage{subcaption}  
\usepackage{tcolorbox}
\usepackage{tikz}
\usetikzlibrary{decorations.pathreplacing}
\usetikzlibrary{calc}
\usetikzlibrary{positioning}
\usetikzlibrary{arrows.meta}
\usetikzlibrary{math}
\usetikzlibrary{patterns}
\usetikzlibrary{trees}

\usetikzlibrary{trees}

\usepackage[linesnumbered,boxed,ruled,vlined]{algorithm2e}

\SetCommentSty{mycommfont}

\usepackage{thmtools,thm-restate}
\usepackage[colorlinks,citecolor=blue,linkcolor=blue,urlcolor=red]{hyperref}

\theoremstyle{plain}

\newtheorem{theorem}{Theorem}[section]  
\newtheorem{lemma}[theorem]{Lemma}

\newtheorem{proposition}[theorem]{Proposition}
\newtheorem{corollary}[theorem]{Corollary}

\newtheorem{claim}[theorem]{Claim}

\theoremstyle{definition}  
\newtheorem{definition}[theorem]{Definition}

\newenvironment{proofof}[1]{\begin{proof}[Proof of #1]}{\end{proof}}

\AtBeginDocument{%
}

\usepackage[capitalise]{cleveref}
\crefname{algocf}{Algorithm}{Algorithms}
\Crefname{algocf}{Algorithm}{Algorithms}
\crefname{claim}{Claim}{Claims}
\Crefname{claim}{Claim}{Claims}
\crefname{statement}{Statement}{Statements}
\Crefname{statement}{Statement}{Statements}

\newfloat{Distribution}{htbp}{loa}
\crefname{Distribution}{Distribution}{Distributions}
\Crefname{Distribution}{Distribution}{Distributions}
\newfloat{Protocol}{htbp}{loa}
\crefname{Protocol}{Protocol}{Protocols}
\Crefname{Protocol}{Protocol}{Protocols}

\SetKwProg{Function}{Function}{:}{}

\SetKwProg{CodeBlock}{}{}{}
\SetKwProg{Repeat}{Repeat}{:}{}

\DeclarePairedDelimiter{\ceil}{\lceil}{\rceil}
\DeclarePairedDelimiter{\floor}{\lfloor}{\rfloor}

\DeclarePairedDelimiter{\bk}{(}{)}
\DeclarePairedDelimiter{\Bk}{[}{]}
\DeclarePairedDelimiter{\BK}{\{}{\}}

\DeclarePairedDelimiter{\abs}{\lvert}{\rvert}

\DeclarePairedDelimiterX\mysetbase[2]{\lbrace}{\rbrace}{#1\,\delimsize\vert\,#2}
\NewDocumentCommand{\myset}{sO{}m m}{%
  \IfBooleanTF{#1}
    {\mysetbase*{#3}{#4}}
    {\mysetbase[#2]{#3}{#4}}
}

\DeclareMathOperator*{\E}{\mathbb{E}}
\DeclareMathOperator*{\Var}{Var}

\let\Pr\PrAux
\DeclareMathOperator{\poly}{poly}

\DeclareMathOperator*{\ind}{\mathbbm{1}}

\renewcommand{\tilde}{\widetilde}

\newcommand{\defeq}{\coloneqq}
\newcommand{\eps}{\varepsilon}

\newcommand{\N}{\mathbb{N}}

\newcommand{\Z}{\mathbb{Z}}
\let\origl\l
\renewcommand{\l}{\ifmmode\ell\else\origl\fi}
\renewcommand{\emptyset}{\varnothing}
\renewcommand{\epsilon}{\eps}

\newcommand{\defn}[1]{\emph{\boldmath\textbf{#1}}}

\usepackage{regexpatch}
\makeatletter
\xpatchcmd\thmt@restatable{%
\csname #2\@xa\endcsname\ifx\@nx#1\@nx\else[{#1}]\fi
}{%
\ifthmt@thisistheone
\csname #2\@xa\endcsname\ifx\@nx#1\@nx\else[{#1}]\fi
\else
\csname #2\@xa\endcsname[{Restated}]
\fi}{}{}
\makeatother

\usepackage{physics}

\newcommand{\disp}{\textup{disp}}
\newcommand{\pc}{\textup{pc}}
\newcommand{\Poi}{\operatorname{Poisson}}

\usepackage[disable]{todonotes}

\usepackage{newunicodechar}
\usepackage{silence}
\newunicodechar{♢}{\tikz \node[inner sep=1.5,draw,diamond] {};}
\newunicodechar{☆}{\tikz \node[inner sep=1,draw,star,star point ratio=2] {};}
\newunicodechar{△}{\triangle}
\newunicodechar{⬜}{\kern 0.5pt\tikz \node[inner sep=1.7,draw,regular polygon,regular polygon sides=4] {};\kern 0.5pt}
\newunicodechar{◯}{\tikz[baseline=-3pt] \node[inner sep=1.7,draw,cloud,cloud puffs=4,cloud puff arc=190] {};}

\newunicodechar{⊥}{\bot}
\newunicodechar{•}{\item}
\newunicodechar{✓}{\checkmark}
\newunicodechar{✗}{\xmark}\def\xmark{\ding{55}}
\newunicodechar{…}{\dots}
\newunicodechar{≔}{\coloneqq}
\newunicodechar{⁻}{^-}
\newunicodechar{⁺}{^+}
\newunicodechar{₋}{_-}
\newunicodechar{₊}{_+}
\newunicodechar{ℓ}{\ell}
\newunicodechar{•}{\item}
\newunicodechar{…}{\dots}
\newunicodechar{≔}{\coloneqq}
\newif\ifnormopen\normopenfalse
\newunicodechar{‖}{\ifnormopen\rVert\normopenfalse\else\rVert\normopentrue\fi}
\newunicodechar{≤}{\leq}
\newunicodechar{≥}{\geq}
\newunicodechar{≰}{\nleq}
\newunicodechar{≱}{\ngeq}
\newunicodechar{⊕}{\oplus}
\newunicodechar{⊗}{\otimes}
\newunicodechar{≠}{\neq}
\newunicodechar{¬}{\neg}
\newunicodechar{≡}{\equiv}
\newunicodechar{₀}{_0}
\newunicodechar{₁}{_1}
\newunicodechar{₂}{_2}
\newunicodechar{₃}{_3}
\newunicodechar{₄}{_4}
\newunicodechar{₅}{_5}
\newunicodechar{₆}{_6}
\newunicodechar{₇}{_7}
\newunicodechar{₈}{_8}
\newunicodechar{₉}{_9}
\newunicodechar{ₚ}{_p}
\newunicodechar{ₙ}{_n}
\newunicodechar{ₐ}{_a}
\newunicodechar{ₑ}{_e}
\newunicodechar{ₕ}{_h}
\newunicodechar{ₖ}{_k}
\newunicodechar{ₗ}{_l}
\newunicodechar{ₘ}{_m}
\newunicodechar{ₛ}{_s}
\newunicodechar{ₜ}{_t}
\newunicodechar{ₓ}{_x}
\newunicodechar{⁰}{^0}
\newunicodechar{¹}{^1}
\newunicodechar{²}{^2}
\newunicodechar{³}{^3}
\newunicodechar{⁴}{^4}
\newunicodechar{⁵}{^5}
\newunicodechar{⁶}{^6}
\newunicodechar{⁷}{^7}
\newunicodechar{⁸}{^8}
\newunicodechar{⁹}{^9}
\newunicodechar{ⁿ}{^n}
\newunicodechar{∈}{\in}
\newunicodechar{∉}{\notin}
\newunicodechar{⊂}{\subset}
\newunicodechar{⊃}{\supset}
\newunicodechar{⊆}{\subseteq}
\newunicodechar{⊇}{\supseteq}
\newunicodechar{⊄}{\nsubset}
\newunicodechar{⊅}{\nsupset}
\newunicodechar{⊈}{\nsubseteq}
\newunicodechar{⊉}{\nsupseteq}
\newunicodechar{∪}{\cup}
\newunicodechar{∩}{\cap}
\newunicodechar{∀}{\forall}
\newunicodechar{∃}{\exists}
\newunicodechar{∄}{\nexists}
\newunicodechar{∨}{\vee}
\newunicodechar{∧}{\wedge}
\newunicodechar{⊼}{\bar{\wedge}}
\newunicodechar{⊽}{\bar{\vee}}
\newunicodechar{⌊}{\lfloor}
\newunicodechar{⌋}{\rfloor}
\newunicodechar{⌈}{\lceil}
\newunicodechar{⌉}{\rceil}
\newunicodechar{·}{\cdot}
\newunicodechar{∘}{\circ}
\newunicodechar{×}{\times}
\newunicodechar{↑}{\uparrow}
\newunicodechar{↓}{\downarrow}
\newunicodechar{→}{\rightarrow}
\newunicodechar{←}{\leftarrow}
\newunicodechar{⇒}{\Rightarrow}
\newunicodechar{⇐}{\Leftarrow}
\newunicodechar{↔}{\leftrightarrow}
\newunicodechar{⇔}{\Leftrightarrow}
\newunicodechar{↦}{\mapsto}
\newunicodechar{∅}{\emptyset}
\newunicodechar{∞}{\infty}
\newunicodechar{≅}{\cong}
\newunicodechar{≈}{\approx}
\newunicodechar{ℓ}{\ell}
\newunicodechar{↪}{\hookrightarrow}

\newunicodechar{ℂ}{\mathbb{C}}
\newunicodechar{ℍ}{\mathbb{H}}
\newunicodechar{ℕ}{\mathbb{N}}
\newunicodechar{ℙ}{\mathbb{P}}
\newunicodechar{ℚ}{\mathbb{Q}}
\newunicodechar{ℝ}{\mathbb{R}}
\newunicodechar{ℤ}{\mathbb{Z}}

\newunicodechar{α}{\alpha}
\newunicodechar{β}{\beta}
\newunicodechar{γ}{\gamma}
\newunicodechar{Γ}{\Gamma}
\newunicodechar{δ}{\delta}
\newunicodechar{Δ}{\Delta}
\newunicodechar{ε}{\varepsilon}
\newunicodechar{ζ}{\zeta}
\newunicodechar{η}{\eta}
\newunicodechar{θ}{\theta}
\newunicodechar{Θ}{\Theta}
\newunicodechar{ι}{\iota}
\newunicodechar{κ}{\kappa}
\newunicodechar{λ}{\lambda}
\newunicodechar{Λ}{\Lambda}
\newunicodechar{μ}{\mu}
\newunicodechar{ν}{\nu}
\newunicodechar{ξ}{\xi}
\newunicodechar{Ξ}{\Xi}
\newunicodechar{π}{\pi}
\newunicodechar{Π}{\Pi}
\newunicodechar{ρ}{\rho}
\newunicodechar{σ}{\sigma}
\newunicodechar{Σ}{\Sigma}
\newunicodechar{τ}{\tau}
\newunicodechar{υ}{\upsilon}
\newunicodechar{ϒ}{\Upsilon}
\newunicodechar{φ}{\varphi}
\newunicodechar{ϕ}{\phi}
\newunicodechar{Φ}{\Phi}
\newunicodechar{χ}{\chi}
\newunicodechar{ψ}{\psi}
\newunicodechar{Ψ}{\Psi}
\newunicodechar{ω}{\omega}
\newunicodechar{Ω}{\Omega}

\newunicodechar{ℋ}{\mathcal{H}}

\title{Quadratic Probing Revisited: \\ Smoothed Analysis and the Fall of Robin Hood}
\author{
  Yang Hu\thanks{Tsinghua University. \texttt{y-hu22@mails.tsinghua.edu.cn}}
  \and
  William Kuszmaul\thanks{Carnegie Mellon University. \texttt{kuszmaul@cmu.edu}}
  \and
  Jingxun Liang\thanks{Carnegie Mellon University. \texttt{jingxunl@andrew.cmu.edu}}
  \and
  Stefan Walzer\thanks{Karlsruhe Institute of Technology. \texttt{stefan.walzer@kit.edu}}
  \and
  Huacheng Yu\thanks{Princeton University. \texttt{hy2@cs.princeton.edu}}
  \and
  Renfei Zhou\thanks{Carnegie Mellon University. \texttt{renfeiz@andrew.cmu.edu}}
}
\date{}

\begin{document}

\maketitle

\begin{abstract}
    Quadratic probing is one of the most widely used open-addressing hash-table schemes in practice, but after more than half a century, even its most basic performance guarantees remain poorly understood. In this paper, we revisit quadratic probing through the lens of a smoothed variant in which each key follows a random probe sequence where its $k$th probe is expected at offset $\Theta(k^2)$. This is simultaneously a toy model for better understanding regular quadratic probing and a natural hashing scheme in its own right.
    We analyse smoothed quadratic probing for both Robin Hood ordering and anti-Robin Hood ordering and reveal a surprising separation: At load factor $1-\varepsilon$, anti-Robin Hood achieves an expected query time of $\Theta(\log \varepsilon^{-1})$, which matches the conjectured expected average successful query time for regular quadratic probing, while Robin Hood falls short at $\Theta(\varepsilon^{-1/2})$.
    Our analysis generalises to degree-$d$ probing for any $d \ge 1$ with expected query time $O(\max(\log \varepsilon^{-1}, \varepsilon^{1-2/d}))$ for anti-Robin Hood and $\Theta(\varepsilon^{-1/d})$ for Robin Hood.
    
    Finally, we go beyond smoothed analysis: using the probabilistic method, we show that for every $d \ge 2$, almost every random fixed-offset degree-$d$ probing sequence achieves expected query time $O(\log \varepsilon^{-1})$ under anti-Robin Hood ordering, simultaneously over all admissible table sizes and load factors. Thus, while quadratic probing itself remains elusive, we prove that essentially all quadratic-probing-like fixed-offset schemes achieve the ideal performance under the anti-Robin Hood ordering.
\end{abstract}

\section{Introduction}
\label{sec:introduction}

In 1954, a group of engineers at IBM devised a simple but highly effective way to store data in a computer \cite{KnuthVol3}. Their data structure, known today as a \defn{linear probing hash table}, worked as follows. Elements are stored in an array of some size $n$, and a new element $x$ can be inserted by placing it in the first available position from the sequence $h(x), h(x) + 1, h(x) + 2, h(x) + 3, \ldots \pmod n$. Likewise, to query an element $x$, one can search the positions $h(x) + 1, h(x) + 2, h(x) + 3, \ldots \pmod n$ until either $x$ is located, or until a free slot is found (meaning $x$ isn't present).

The linear-probing hash table quickly became the dominant data structure used at IBM \cite{bashe1986ibm} (it was even taught to users in a 1958 manual \cite{IBM1958RAMACProgrammersGuide}). In 1957, IBM researcher Peterson wrote a research paper investigating the data structure \cite{peterson1957addressing}. Peterson was especially interested in how the hash table behaves when filled to a high load factor $1 - \epsilon$ (meaning that $(1 - \epsilon)n$ elements are present). He conjectured -- somewhat contrary to his own experimental evidence -- that the expected insertion time in such a hash table should behave as $O(\epsilon^{-1})$. This conjecture was subsequently disproven in a 1962 paper by Donald Knuth \cite{knuth1962notes}, who showed that the expected insertion time is $\Theta(\epsilon^{-2})$. Knuth's result revealed a \emph{clustering effect}, in which keys form runs that are quadratically longer than one might intuitively expect.

Consider now a \emph{successful} query, i.e.\ a query for a key that is stored in the table. Such a query time matches the time that the key's insertion took, which means that the expected query time of a key in the table depends on the insertion order.
The \emph{average} expected query time of all keys stored in a table with load factor $1-ε$ is hence the average expected insertion time and therefore
\[
  \int_{ε}^1 δ^{-2} dδ = O(ε^{-1}).
\]
There is an elegant insertion policy that makes this the expected query time of all keys, even keys not stored in the table, which otherwise have insertion time $Θ(ε^{-2})$. This policy is known as Robin Hood ordering (named in \cite{CelisLaMu85,celis1986robin} but see also \cite{AmbleKn74}):
If a key $x$ probes a position that is already occupied by a key $y$, then whichever key has travelled further from its origin gets to stay while the other continues probing. Note that during queries a key can stop probing when coming across a key that it would have had priority over.
The name is a reference to the character Robin Hood's philosophy of taking from the rich and giving to the poor -- likewise the Robin Hood policy aims to equalize query times as much as possible.


To improve insertion time, Maurer proposed in 1968 a variant of linear probing called \defn{quadratic probing} \cite{maurer1968programming}, in which queries follow the probe sequence 
$$h(x), h(x) + 1^2, h(x) + 2^2, h(x) + 3^2, \ldots.$$
Researchers \cite{maurer1968programming,hopgood1972quadratic} observed that, with this probe sequence, elements were able to ``escape'' the clusters that would have formed with linear probing. 
They conjectured \cite{hopgood1972quadratic} that quadratic probing really does achieve an expected insertion time of $O(\epsilon^{-1})$, which implies (by an integration argument like above) average expected query times of $O(\log \varepsilon^{-1})$ for keys stored in the table. This conjecture, though never proven, remains widely believed today \cite{horowitz1992fundamentals, lafore2017data, necaise2010data, weiss2014dsaa_cpp}, and is the reason that hash tables like Google's Swiss Table (arguably the most widely used high-performance hash table in practice \cite{chern2017learnhashtable}) default to (variations of) quadratic probing \cite{abseil}.

From a theory perspective, the quadratic probing hash table has proven remarkably hard to analyze. After more than half a century, we still cannot say almost \emph{anything} nontrivial about how the hash table behaves. Not only can we not prove an insertion time of $O(\epsilon^{-1})$; we cannot even prove an insertion time of $f(\epsilon^{-1})$ for \emph{any} function $f$; or, even that a hash table that is 50\% full supports constant expected-time insertions. The only nontrivial result that we do know is that, when the hash table is filled to approximately 8.7\% full, it does indeed support $O(1)$ expected-time insertions \cite{kuszmaul2024towards}. 

Our inability to analyze quadratic probing's insertion time has also prevented us from answering several (perhaps less obvious) questions:
\begin{enumerate}
\item Does the Robin Hood policy improve unsuccessful query times, i.e.\ of keys not stored in the table, as it does for linear probing?
\item Does the Robin Hood policy equalize successful query times, i.e.\ of keys stored in the table, without affecting their average, as it does for linear probing?
\item What is the effect of selecting offsets using a degree-$d$ polynomial for $d \not\in \{1, 2\}$? \todo{I feel emphasizing this point more is benificial: We actually shows that $d = 2$ is the best of two worlds with best insertion time and locality.}
\end{enumerate}
Note that, whereas the conjectured insertion time is mostly a theoretical issue (practitioners already believe the bound of $O(\epsilon^{-1})$), these additional questions are not.\footnote{See, e.g., the discussion of Question (1) in \cite{chern2017learnhashtable}, or Question (2) in \cite{jagielski1976cubic,weiss2014dsaa_cpp}.} Their answers are of equal interest to practitioners as to theoreticians.

\subsection{Contribution}
\paragraph{Smoothed quadratic probing. } In this paper, we argue that, even without a complete analysis of quadratic probing, it is still possible to make significant progress on the questions above (and, in the process, to make significant progress towards the analysis of quadratic probing itself). To do this, we introduce what we call \emph{smoothed quadratic probing}, a variant of quadratic probing that ends up being more amenable to analysis. In smoothed quadratic probing, each element $x$ is assigned an \emph{offset sequence} $0  < r_1(x) < r_2(x) < \cdots$ where each $j \in \mathbb{N}_+$ is included independently with probability $1/\sqrt{j}$. Queries are then performed as in standard quadratic probing, but where each element $x$ follows its \emph{own} (approximately) quadratic sequence, examining positions $h(x),  h(x) + r_1(x), h(x) + r_2(x), \ldots \pmod n$. 

Note that smoothed quadratic probing can be viewed in one of two ways. On one hand, it offers a \emph{smoothed analysis} of quadratic probing -- allowing us to reason about how we think quadratic probing should behave, but through the lens of a more accommodating data structure. On the other hand, it is on its own a completely valid hash table, which can in principle be used in place of quadratic probing.

The main purpose of this paper is to revisit questions (1) to (3) above, but in the context of smoothed quadratic probing. Our analyses lead us to several unexpected answers (of potential interest to both theoreticians and practitioners alike). And the techniques that we introduce end up being applicable even beyond the smoothed version of the data structure -- allowing us to prove in our final result that there \emph{exist} concrete (non-smoothed) variations of quadratic probing with provable guarantees.

\paragraph{The fall of Robin Hood.}
If the average successful query time for quadratic probing has expectation $O(\log ε^{-1})$ as conjectured, then we would expect the Robin Hood policy to distribute the query costs evenly so all keys in the table have expected query time $O(\log ε^{-1})$.
Our first result says that this is \emph{not} what we get in the smoothed case.
While the expected successful query times are equalized, they become \emph{equally bad} at $Θ(ε^{-1/2})$ (and amortized expected insertion times rise to a corresponding $Θ(ε^{-3/2})$). This suggests a positive answer to Question 1 but a negative answer to Question 2.
The intuition is that the Robin Hood policy, by pressuring elements to be placed \emph{closer} to their home position, partially brings back the clustering effects of linear probing that quadratic probing was designed to overcome. In a sense the hash table behaves like a hybrid between linear and quadratic probing, exhibiting some but not all of the local clustering. 

Our analysis extends to smoothed degree-$d$ probing for any $d \ge 1$ (which samples offsets so that each element's offset sequence grows approximately as a degree-$d$ polynomial). We prove that, in general, using the Robin Hood ordering for such a hash table results in expected query time $\Theta(\epsilon^{-1/d})$ and amortized expected insertion time $\Theta(\epsilon^{-(1 + 1/d)})$. Note that, as $d$ grows, this gets closer and closer to the $O(\epsilon^{-1})$ bound one might hope for, but never reaches it.


\paragraph{What about anti-Robin Hood?} Given the poor performance of Robin Hood ordering, the next question we consider is whether a better ordering might actually be the \emph{anti-Robin Hood} ordering -- this is the ``poor get poorer'' ordering, in which we give priority to elements that have traveled \emph{less} far. 

Somewhat unexpectedly, this ordering turns out to perform much better. We prove that smoothed quadratic probing, with the anti-Robin Hood ordering, achieves expected query time $O(\log \epsilon^{-1})$ and an amortized expected insertion time $O(\epsilon^{-1})$. 

Here, again, we are able to extend the tradeoff curve to smoothed degree-$d$ probing for any $d \ge 1$. The expected query time becomes 
$O(\max(\log \epsilon^{-1}, \epsilon^{1 - 2/d}))$,
gracefully decreasing from $\epsilon^{-1}$ at $d = 1$ to $O(\log \epsilon^{-1})$ for all $d \ge 2$; likewise, the amortized expected insertion time becomes
$O(\max(\epsilon^{-1}, \epsilon^{-2/d}))$,
decreasing from $\epsilon^{-2}$ at $d = 1$ to $O(\epsilon^{-1})$ for all $d \ge 2$. 
We prove that these tradeoffs are tight for $d \le 2$, which means that $d = 2$ is, in fact, the smallest value of $d$ that is able to fully escape the clustering effects of linear probing. 

Combined, our results give a complete answer to Question (3) for both the Robin Hood and anti-Robin Hood orderings of smoothed quadratic probing. 

\paragraph{The best scheme so far.}
Even without resolving the behavior of the standard, non-smoothed quadratic-probing sequence itself, our smoothed analysis already gives a practice-relevant message: it distinguishes between natural design choices and points to the most promising one.
As an intermediate summary, among the degree-$d$ probing schemes discussed so far, smoothed quadratic anti-Robin Hood probing is the best in terms of query time:
\begin{itemize}
  • \emph{Stable} degree-$d$ probing (that does not move elements after they are inserted) takes $Ω(ε^{-1})$ time for negative queries.
  • Among the considered ordered probing schemes, anti-Robin Hood performs better than Robin Hood.
  • Anti-Robin Hood probing requires $d ≥ 2$ to avoid clustering effects and achieve query time $O(\log ε^{-1})$.
  • Among the schemes with $d ≥ 2$, quadratic probing, i.e.\ $d = 2$, is the most local, and thus supposedly the most cache friendly one.
\end{itemize}
This motivates focusing on this scheme to improve it further.

\paragraph{Beyond smoothed analysis: fixed-offset quadratic probing via the probabilistic method.} Finally, the main result of the paper is a strengthening of our anti-Robin Hood analysis to go beyond a smoothed analysis. Define a \defn{fixed-offset degree-$d$ hash table} to be a hash table in which each element follows the probe sequence
$$h(x), h(x) + r_1, h(x) + r_2, h(x) + r_3, \ldots \pmod n,$$
where the sequence $r_1, r_2, r_3, \ldots$ is the same for every element (it is not smoothed), and where $r_i = \Theta(i^d)$. Our final result proves, via the probabilistic method, that for any $d \ge 2$, there \emph{exists} a fixed-offset degree-$d$ hash table that, under the anti-Robin Hood ordering, supports queries in expected time $O(\log \epsilon^{-1})$ and insertions in amortized expected time $O(\epsilon^{-1})$. 

In fact, we prove a stronger result: that if one selects a fixed-offset degree-$d$ hash table at \emph{random} -- meaning that the offset sequence $r_1, r_2, r_3, \ldots$ is obtained by selecting each $r_i$ independently and uniformly at random from some appropriate interval -- then with probability arbitrarily close to $1$ (probability $1 - \delta$ for a positive $\delta$ of our choice), the resulting hash table supports queries in expected time $O(\log \epsilon^{-1})$ and insertions in amortized expected time $O(\epsilon^{-1})$. 

We also prove that query time $O(\log \epsilon^{-1})$ is optimal, not just for the specific hash tables we consider, but for any fixed-offset open-addressed hash table using the anti-Robin Hood ordering. Thus, we can conclude that, for tables constructed with the anti-Robin Hood ordering, almost all fixed-offset degree-$2$ (or higher) hash tables are optimal.

This puts us in the following intriguing situation. We can now prove that essentially all quadratic-probing-\emph{like} hash tables support (when implemented with the anti-Robin Hood ordering) the ideal query and insertion time that one would hope for. But, we still cannot say anything about quadratic probing itself. All we know is that, \emph{if quadratic probing fails to achieve these time bounds, it is because the specific offset sequence $0, 1^2, 2^2, \ldots$ behaves highly unlike most degree-$2$ sequences}. Whether or not this is the case remains open.

\begin{table}
  \caption{Summary of results on linear and quadratic probing hash tables using Robin Hood (RH), anti-Robin Hood (ARH) or default insertion policies.}
  \label{tab:results}
  \begin{tabular}{lrlll}
    \toprule
    probing scheme
    & policy 
    & query time 
    & insertion time$\ddagger$
    & reference
    \\
    \midrule
    
    linear
    & default
    & $Θ(ε^{-1})\dagger$
    & $Θ(ε^{-2})$
    & \cite{knuth1962notes}
    \\
    
    linear
    & RH
    & $Θ(ε^{-1})$
    & $Θ(ε^{-2})$
    & \cite{CelisLaMu85,celis1986robin} 
    \\
    
    quadratic
    & default
    & \textcolor{gray}{$Θ(\log ε^{-1})\dagger$} \footnotesize (conjectured)
    & \textcolor{gray}{$Θ(ε^{-1})$} \footnotesize (conjectured)
    & \cite{maurer1968programming}
    \\
    
    fixed-offset degree-$(d{≥}2)$
    & ARH
    & $Θ(\log ε^{-1})$
    & $Θ(ε^{-1})$
    & Thm.\,\ref{thm:anti-robinhood-expected-bounds}, Thm.\,\ref{thm:anti-rh-harmonic-probe-tail}
    \\
    
    smoothed degree-$d$
    & ARH
    & $O(\max(\log ε^{-1},ε^{1-2/d}))$
    & $O(\max(ε^{-1},ε^{-2/d}))$
    & Cor.\,\ref{cor:smoothed-arh-upper-bounds}
    \\
    
    $↪$ smoothed quadratic
    & ARH
    & $O(\log ε^{-1})$
    & $O(ε^{-1})$
    & \footnotesize (special case)
    \\
    
    smoothed degree-$d$
    & RH
    & $Θ(ε^{-1/d})$
    & $Θ(ε^{-1-1/d})$
    & Thm.\,\ref{thm:rh-expected-bounds}, Cor.\,\ref{cor:rh-displacement-lower}
    \\
    
    $↪$ smoothed quadratic
    & RH
    & $Θ(ε^{-1/2})$
    & $Θ(ε^{-3/2})$
    & \footnotesize (special case)
    \\
    \bottomrule
  \end{tabular}\\
  \footnotesize $\dagger$ For the “default” insertion policy we report average expected query times of all keys stored in a table of load factor $1-ε$. Keys that are inserted “late” or not at all may have larger query times.\\
  $\ddagger$ Amortized expected insertion time at load factor $1-ε$. The “amortized” qualifier is needed when we derive insertion times from query times by differentiating, see \cref{sec:prelim}.
\end{table}

\subsection{Other related work}
In the decades since quadratic probing was first introduced, the study of hash tables has grown to be a major subfield of data structures. In addition to research on quadratic probing, there have also been long lines of work on analyzing double hashing \cite{mitzenmacher2012peeling, guibas1976analysis, guibas1978analysis, lueker1988more, lueker1993more}, variations of cuckoo hashing \cite{pagh2004cuckoo,DW07,FriezePetti,FPSS05,polylogAlan,FPS13,wearmin,FriezeJohansson,newinsertion,Walzer,bell20241,KM25}, and other open-addressed hash-tables \cite{brent1973reducing,celis1986robin,yao1985uniform,munro1986techniques,bender2022linear,farach2024optimal,bender2024tight,braverman2024tight}. There has also been extensive work on compact and succinct dictionaries \cite{pagh1999low,raman2003succinct,arbitman2010backyard,bender2023iceberg,bender2024modern,li2023tight,li2024dynamic,liu2020succinct}, hash tables with high-probability guarantees \cite{stashcuckoo,deamortizedcuckoo,dietzfelbinger1994dynamic,goodrich2012cache,kuszmaul2022hash}, non-oblivious open-addressing \cite{fiat1988nonoblivious,fiat1993implicit,bender2025optimal}, etc. 

Besides quadratic probing, another approach that has been used to eliminate linear probing's clustering effects is \emph{graveyard hashing} \cite{bender2022linear,braverman2024tight}, a variation of linear probing that strategically rearranges elements over time in order to keep the expected insertion time at $O(\epsilon^{-1})$. This approach improves insertions, but cannot do anything for queries (indeed, \emph{all} orderings of a linear-probing hash table have expected query time at least $\Omega(\epsilon^{-1})$ \cite{peterson1957addressing,bender2022linear}).

An interesting feature of our work is that it takes a very different approach from the previous state of the art for analyzing quadratic probing \cite{kuszmaul2024towards} (which was able to analyze unordered quadratic probing up to load $\approx 0.089$). Whereas the previous analysis was witness-tree based, the analyses in this paper take a much more stochastic-processes flavor. We are optimistic that this may open the door to further progress on the analysis of (non-smoothed) quadratic probing in the future. 

The results in this paper (and in most work on open addressing) assume access to fully random hash functions. In recent decades, there has also been a long line of work on techniques for simulating such hash functions efficiently. This includes constructions that offer high independence without compromising evaluation time \cite{Siegel1989FastHighPerformanceHash,Siegel1995TR684ExtremelyRandom,PaghPagh,Siegel2004ExtremelyRandomConstantTime,OstlinPagh,dahlgaard2015hashing,thorup2013simple,bercea2023locally}, and techniqeus for applying such hash functions to hash tables (see, e.g., the splitting trick in \cite{splitting}).

\subsection{Outline}
In the body of the paper, we present our results in essentially the opposite order of how they are described above (this is to put the results which are likely of most interest first). 

We begin in Section \ref{sec:anti-robinhood} by presenting our results on anti-Robin Hood hashing. To streamline the argument, we first prove a result that is slightly different from any of those described above -- we show that for any $d \ge 1$ and fixed $\epsilon > 0$, a random degree-$d$ probing hash table (so fixed-offset rather than smoothed) achieves expected query time $O(\max(\log \epsilon^{-1}, \epsilon^{1 - 2/d}))$ (here the randomness is partly over the random offset sequence itself). This result is interesting in its own right, and sets up all the machinery we need to prove the two results discussed above in the introduction: (1) an analysis of smoothed quadratic probing for any $d \ge 1$; and (2) a proof for $d > 2$ that almost all (random) fixed-offset degree-$d$ hash tables achieve the desired time bounds (for all $\epsilon$ and large enough $n$).

Next, in Section \ref{sec:robinhood}, we present our results on smoothed degree-$d$ probing using Robin Hood ordering, proving, for any $d \ge 1$, expected query time $O(\epsilon^{-1/d})$ and amortized expected insertion time $O(\epsilon^{-(1 + 1/d)})$.

Finally, in Section \ref{sec:lower-bounds}, we prove lower bounds for both Robin Hood and anti-Robin Hood probing schemes. An interesting aspect of (several) of the lower bounds is that they actually use the \emph{upper bounds} as a starting point -- we are able to argue that if a polynomial-probing hash table achieves sufficiently good tail upper bounds, then it is \emph{forced} to incur certain expected-time lower bounds. This basic technique may also be of independent interest.

\section{Preliminaries}
\label{sec:prelim}

An \defn{open-addressing hash table} stores a set $S \subseteq [U]$ of keys in an array of $n$ slots, indexed by the residues in $[n]$.  Each key $x\in[U]$ has a \defn{probe sequence}
\[
    h_0(x),h_1(x),h_2(x),\ldots \in [n]
\]
and insertions and queries inspect these slots in order.

Throughout the paper, we will be interested in probe sequences of the form
\[
    h_i(x)=h(x)+r_i(x) \pmod n,
\]
where $h:[U] \to [n]$ is the \defn{base hash function} (assumed to be uniformly random and fully independent), and where $0 = r_0(x) < r_1(x) < r_2(x) < \cdots$ is the \defn{offset sequence} used for key $x$. For random fixed-offset hash tables (which we will define shortly), we will use $0 = r_0 < r_1 < r_2 < \cdots$ to refer to the shared offsets.


If key $x$ is stored in the slot $h_i(x)=h(x)+r_i(x) \pmod n$ after examining the probes $h_0(x), h_1(x), \ldots, h_i(x)$, we say that $x$ has \defn{displacement} $r_i(x)$ and \defn{probe complexity} $i$, denoted by $\disp(x)$ and $\pc(x)$, respectively.

\paragraph{The Robin Hood and anti-Robin Hood orderings.}
We study two priority rules for resolving collisions during insertion, the \defn{Robin Hood ordering} and the \defn{anti-Robin Hood ordering}.  Let
\[
    \gamma:[U]\to[0,1]
\]
be an independent continuous tie-breaking hash function.  Suppose keys $x$ and $y$ both probe the same slot $a$, with $x$ reaching $a$ at displacement $\Delta_x$ and $y$ reaching $a$ at displacement $\Delta_y$.  Under Robin Hood ordering, the key with larger displacement has higher priority at $a$; under anti-Robin Hood ordering, the key with smaller displacement has higher priority.  Ties are broken by the larger value of $\gamma$. 


The insertion algorithm is the same for both orderings once the priority rule is fixed.  To insert a key $x$, scan forward in the probe sequence of the current key.  If the next slot is empty, place the current key there and stop.  If the slot contains a key $y$ with higher priority at that slot, leave $y$ in place and continue scanning for the current key.  If the current key has higher priority than $y$, place the current key in the slot, evict $y$, and continue the insertion with $y$ starting from its next probe.\footnote{\label{fn:displacement-ambiguity}Strictly speaking, in the current definition of displacement, it is not straightforward to compute the displacement of $y$, which is required for comparing priorities -- given the current position of $y$ and its hash, there might still be multiple offsets $r_i, r_j$ such that $r_i = r_j \pmod n$, thus the displacement of $y$ cannot be solely determined by its position and hash. Indeed, to determine the displacement, we need to simulate the entire operation sequence to know which offset $y$ has examined. We nevertheless use this definition because it is convenient for the analysis.  In an implementation, one could instead define the displacement of a key $y$ placed at slot $a$ as $(a-h(y) \bmod n)$.  Our analysis shows that, with high probability, no key ever uses an offset $r_i>n$, and hence the two definitions agree throughout the execution.}
The process stops when some current key reaches an empty slot.

An important property of both the Robin Hood and anti-Robin Hood orderings is that the resulting hash tables are necessarily history independent. This is captured by the following lemma, which is proven formally in Section \ref{app:reductions}.

\begin{restatable}[History Independence of the Two Orderings]{lemma}{historyindependence}
  \label{lem:history-independence-of-rh-and-arh}
  For any specific probing scheme, the hash table obtained from either Robin Hood or anti-Robin Hood ordering is history independent: given the set of inserted keys, the final table state is independent of the order in which the keys were inserted.
\end{restatable}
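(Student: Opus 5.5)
The plan is to recast both insertion procedures as runs of the Gale--Shapley deferred-acceptance algorithm, and then invoke the classical fact that deferred acceptance produces the same (proposer-optimal) stable matching regardless of the order in which proposals are made (McVitie--Wilson). The algorithm is the same for both orderings; only the slots' preference rule differs.

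\textbf{Setting up the matching instance.} Fix the set $S$ of inserted keys, the base hash $h$, the offset sequences and the tie-breaker $\gamma$. I treat each \emph{probe} $(x,i)$ as a potential edge between key $x$ and slot $h_i(x)$.
\begin{itemize}
\item Key $x$ prefers its probes in the order $i=0,1,2,\ldots$.
\item Slot $a$ ranks all probes $(x,i)$ with $h_i(x)=a$ by the priority rule. Under Robin Hood this is larger $r_i(x)$ first; under anti-Robin Hood it is smaller $r_i(x)$ first. In both cases ties are broken by larger $\gamma(x)$.
\end{itemize}
Let me check that this ranking is strict. Two probes of the same key at the same slot have distinct displacements, since the offsets are strictly increasing. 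Two probes of distinct keys with equal displacement are separated by $\gamma$, because $\gamma$ is continuous and hence injective on $S$ almost surely. So every slot has a strict total order on its incoming probes.

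A key may reach the same slot at several indices (when $r_i\equiv r_j \pmod n$). I handle this by letting proposals be indexed by the pair $(x,i)$ rather than by $x$ alone. A matching assigns each key exactly one probe index and each slot at most one probe. Stability means: no key $x$ and index $j$ smaller than its assigned index such that slot $h_j(x)$ is empty or holds a probe it ranks below $(x,j)$.

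\textbf{Insertion is deferred acceptance.} Inserting $x$ means $x$ proposes to $h_0(x),h_1(x),\ldots$ in turn. An occupied slot keeps whichever of the two probes it ranks higher, and the rejected key continues from its next probe. Empty slots accept every proposal. This is exactly a step of deferred acceptance in which only one ``free'' key is active at a time. Inserting the keys in some order $\sigma$ is therefore one particular schedule of proposals.

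\textbf{Order independence.} I will reprove the standard lemma in this multi-edge setting.
\begin{itemize}
\item \emph{Invariant.} No key is ever rejected from a probe $(x,i)$ that it holds in some stable matching. The proof is the usual one: at the first such rejection, the slot's new holder $y$ was itself rejected earlier only from probes it cannot hold stably. Combining this with the stable matching containing $(x,i)$ produces a blocking pair, which is a contradiction.
\item \emph{Termination.} Each key's proposals follow a fixed sequence that never repeats an index. Also $|S|\le n$, and I assume (as the paper implicitly does) that each probe sequence eventually visits every slot. So the process ends with every key placed.
\item \emph{Conclusion.} The outcome is stable: a key only passes probes at which it was rejected by something the slot ranks higher, and slots only improve their holder. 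By the invariant, each key sits at its best stably-attainable probe, the key-optimal stable matching, which does not depend on $\sigma$.
\end{itemize}
Empty slots are determined by the matching, so the whole table state is independent of insertion order.

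\textbf{Main obstacle.} I expect the delicate step to be the invariant's proof in the presence of repeated slot visits and of slots with no ranking constraints (an empty slot accepts anyone). I will check carefully that the blocking-pair contradiction still goes through when the competing probe belongs to the same key at a different index.

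A second concern is termination when probe sequences do not cover all slots mod $n$. If needed, I would simply assume the probing scheme guarantees that every insertion terminates, which the algorithm's definition already presupposes.
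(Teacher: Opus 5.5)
Your proposal is correct and matches the paper's proof essentially step for step: the paper also casts insertion as key-proposing deferred acceptance over contracts indexed by true offset (equivalent to your probe indices). It likewise proves the rejection lemma and concludes via the unique key-optimal stable matching. The only difference is termination, which the paper sidesteps without any coverage assumption by truncating each key's contract list after the largest index proposed in either of the two completed executions; also, the same-key case you flagged is vacuous, since a key never holds one contract while proposing another.
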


Note that this insertion process maintains what we call the \defn{search invariant}: For any key $x$ that is in the hash table, if it is using its $j$-th probe $h_j(x)$, then all previous slots $h_1(x), h_2(x), \ldots, h_{j - 1}(x)$ in the probe sequence are occupied by keys $x' \neq x$ which have priority over $x$ at that slot.\footnote{Note that, by Lemma \ref{lem:history-independence-of-rh-and-arh}, it suffices to verify this invariant under the assumption that $x$ is the most-recently-inserted key, at which point the invariant is immediate from the insertion algorithm.}

Because of the search invariant, membership queries can be performed as follows. A membership query for a key $x$ scans the probe sequence of $x$ until it finds $x$, in which case the query returns true; or until it reaches a slot $j$ that would accept $x$ over its current contents, in which case the query returns false. More concretely, an unsuccessful query (i.e., a query to a key not present) stops at the first empty slot or at the first occupied slot whose incumbent has lower priority than $x$ at that slot. This behavior is guaranteed to be correct by the search invariant.

\paragraph{Query time versus insertion time. }Most of the analyses in this paper will focus on analyzing the time to perform a membership query on an element $x \in [U]$ at a load factor $1 - \epsilon$. 

It is therefore worth noting that any bound on query time immediately implies a bound on \emph{amortized} insertion time. Formally, we say that a hash table achieves \defn{amortized expected insertion time} $f(\epsilon)$ at load factor $1 - \epsilon$ if, for any $\delta$, the \emph{total} time taken by the first $(1 - \delta)n$ insertions is at most 
$$\sum_{i = 0}^{(1 - \delta)n - 1} f(1 - i/n)$$
in expectation. 
By the insertion algorithm, the total time of the first $(1 - \delta)n$ insertions is equal to the total query time of these $(1 - \delta)n$ inserted keys after the insertions.\footnote{For Robin Hood ordering, the more precise quantity is the \emph{probe complexity} of an existing key rather than its \emph{query time}, since a key may probe the same slot multiple times before it has high enough priority to occupy that slot.  Our Robin Hood bounds are proved for this probe-complexity notion.} Therefore, any worst-case expected query time bound of the form $g(\epsilon)$, where $g$ is a differentiable function, immediately implies an amortized expected insertion time bound of \begin{equation}f(\epsilon) := -\frac{d}{d\epsilon} g(\epsilon).
\label{eq:amortized}
\end{equation}
All of our results on amortized insertion time will follow directly from \eqref{eq:amortized}. Therefore, in our proofs, we will typically leave the amortized insertion analysis as an implicit consequence of the query-time bounds. 

\paragraph{Smoothed and randomized variations of polynomial probing.}
Let $d\ge1$ be fixed.  The deterministic \defn{degree-$d$ probing} sequence uses offsets
\[
    0,\floor{1^d},\floor{2^d},\ldots .
\]
Thus $d=1$ gives linear probing and $d=2$ gives quadratic probing.  Since tight query-time bounds for deterministic degree-$d$ probing remain open, this paper focuses on two randomized variants: \defn{smoothed degree-$d$ probing} and \defn{random fixed-offset degree-$d$ probing}.

\begin{definition}[Smoothed degree-$d$ probing hash table]
  In smoothed degree-$d$ probing, each key has its own random offset sequence.  The home offset $0$ is always included, and each positive offset $k\ge1$ is included independently with activation probability
  \[
      p(k)=k^{-(d-1)/d}.
  \]
  We say that the slot $h(x)+k$ is \defn{activated} for key $x$ when offset $k$ is included.
\end{definition}

\begin{definition}[Random Fixed-Offset Degree-$d$ Probing Hash Table]
  Let $\alpha>0$ be a fixed constant.  In random fixed-offset degree-$d$ probing, all keys share one random offset sequence.  For each $j\ge1$, the offset $r_j$ is sampled independently and uniformly from the \defn{sampling interval}
  \[
      I_j \defeq [\alpha j^d,\alpha(j+1)^d),
  \]
  and every key probes
  \[
      h(x),\ h(x)+r_1,\ h(x)+r_2,\ldots
      \pmod n .
  \]
\end{definition}

By definition, in a random fixed-offset degree-$d$ table, the probe complexity $\pc(x)$ and displacement $\disp(x)$ are related by $\pc(x) = \Theta(\disp(x)^{1/d} + 1)$. 
When discussing the smoothed variant, it will often be helpful to reference the following lemma (proven in Appendix \ref{app:reductions}), which gives an analogous relationship for smoothed degree-$d$ probing. 

\begin{restatable}[Displacement and Probe Complexity]{lemma}{displacementprobecomplexity}
  \label{lem:displacement-and-probe-complexity}
  Fix a constant $d\ge1$.  For any fixed key $x$ in a smoothed degree-$d$ probing hash table with either Robin Hood or anti-Robin Hood ordering,
  \[
    \E\Bk*{\pc(x)} = \Theta\bk*{\E\Bk*{\disp(x)^{1/d}} + 1}.
  \]
  The hidden constants depend only on $d$.
\end{restatable}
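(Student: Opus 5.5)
The plan is to condition on the displacement and reduce the claim to a statement about sums of independent Bernoulli variables. If $\disp(x)=D$, then $\pc(x)$ equals the number of activated offsets in $\{1,\dots,D\}$. Write $N(D)\defeq\sum_{k=1}^{D} A_k$, where $A_k\sim\mathrm{Bernoulli}(p(k))$ are independent. Its mean is
\[
  \mu(D)=\sum_{k\le D}k^{-(d-1)/d}=\Theta(D^{1/d}),
\]
uniformly for $D\ge1$. The difficulty is that $D$ is not independent of the activation indicators: the insertion process decides where $x$ stops by looking at exactly those coins. So I will not compute $\E[N(D)]$ conditionally. Instead I will prove that, with overwhelming probability, $N(m)$ is within a constant factor of $m^{1/d}$ \emph{simultaneously for all} $m$. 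That event depends only on $x$'s own coins, so plugging in the random index $m=\disp(x)$ costs nothing.

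\textbf{Step 1 (concentration).} For each fixed $m$, Chernoff bounds give
\[
  \Pr\bigl[N(m)\notin[c_1 m^{1/d},\,c_2 m^{1/d}]\bigr]\le \exp\bigl(-\Omega(m^{1/d})\bigr).
\]
Let $M$ be the largest $m$ at which this fails, with $M=0$ if it never fails. Summing over $m$,
\[
  \Pr[M\ge t]\le\sum_{m\ge t}\exp\bigl(-\Omega(m^{1/d})\bigr),
\]
which decays stretched-exponentially. In particular all moments of $M$ and of $M^{1/d}$ are $O_d(1)$.

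\textbf{Step 2 (sandwich).} Deterministically,
\[
  c_1\,\disp(x)^{1/d}-c_1 M^{1/d}\;\le\;\pc(x)\;\le\;c_2\,\disp(x)^{1/d}+N(M).
\]
For the lower bound, either $\disp(x)>M$, in which case the good event holds at $m=\disp(x)$, or $\disp(x)\le M$, in which case the left side is nonpositive. For the upper bound, if $\disp(x)\le M$ then $\pc(x)=N(\disp(x))\le N(M)$. Also $N(M)\le M$, and $\E[M]=O(1)$.

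\textbf{Step 3 (expectations).} Taking expectations in the sandwich gives
\[
  \E[\pc(x)]\le c_2\,\E[\disp(x)^{1/d}]+O(1)
\]
and
\[
  \E[\pc(x)]\ge c_1\,\E[\disp(x)^{1/d}]-O(1).
\]
The lower bound with an additive constant is not yet the claimed form. Two facts fix it. First, trivially $\pc(x)\ge0$. Second, $\disp(x)\ge\pc(x)$ pointwise, since offsets are distinct positive integers, so $\E[\disp(x)^{1/d}]$ is at least a constant multiple of $\Pr[\pc(x)\ge1]$. The additive constant is then absorbed by a case split. If $\E[\disp(x)^{1/d}]$ exceeds a large constant $C$, the lower bound is already $\Omega(\E[\disp^{1/d}]+1)$. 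Otherwise the claim reduces to $\E[\pc(x)]=\Theta(1)$ up to the bounded quantity $\E[\disp^{1/d}]$. Here I will use $\E[\pc(x)]\ge\Pr[\pc(x)\ge1]$ and $\E[\disp(x)^{1/d}]\le O(1)\cdot\E[\pc(x)]+O(1)$.

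The sandwich argument never uses the specific ordering, so it applies to both Robin Hood and anti-Robin Hood. The caveat is wrap-around: displacement could exceed $n$, and offsets congruent mod $n$ may repeat slots. This is harmless because $\pc(x)$ is still the count of activated offsets up to $\disp(x)$.

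\textbf{Main obstacle.} I expect the difficulty to lie in the exact form of the claim. Read literally, $\E[\pc]=\Theta(\E[\disp^{1/d}]+1)$ requires $\E[\pc]=\Omega(1)$ even when $x$ almost always sits at home, where $\E[\pc]$ can be near $0$. So the "$+1$" should be understood as query cost, i.e.\ $\pc+1$ probes are inspected. With that reading the lower bound is just $\E[\pc+1]\ge1$ combined with Step 3, and the uniform Chernoff union bound (Step 1) is the only real work.
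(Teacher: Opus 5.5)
Your argument is correct, but it follows a genuinely different route from the paper. The paper inserts $x$ last and writes $\pc(x)$ as the sum of the activation indicators $A_{x,r}$ over $0\le r\le D$, with $A_{x,0}=1$. It then observes that the event $\{D\ge r\}$ is decided before the bit $A_{x,r}$ is examined, so the two are independent. This gives the exact identity $\E[\pc(x)]=\E[\sum_{r=0}^{D}p(r)]$, and the lemma follows from $\sum_{r=0}^{m}p(r)=\Theta(m^{1/d}+1)$.

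You never use such a stopping-time property. Your uniform Chernoff event with the random threshold $M$ gives a pathwise sandwich that is valid for an arbitrary dependence of $D$ on $x$'s coins. The cost is a union bound over all $m$ and additive $O(1)$ error terms. The paper's route is shorter and exact, but it rests on the deferred-acceptance structure. One needs that $\{D\ge r\}$ is determined by the instance in which $x$'s proposals are truncated below offset $r$, and this deserves a moment's thought under Robin Hood, where $x$ can be evicted again after being placed. Your route is insensitive to the ordering and the insertion dynamics altogether. Since it is pathwise, it also transfers higher moments, e.g.\ $\E[(\pc(x)+1)^p]=\Theta_p(\E[\disp(x)^{p/d}]+1)$.

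On the ``$+1$'': your final reading is the right one and matches what the paper actually proves. Its sum starts at $r=0$ with $A_{x,0}=1$, i.e.\ it counts the home probe, so its identity is really about $\pc(x)+1$. With $\pc(x)=i$ taken literally, the lower bound fails, for instance for a key alone in the table. The intermediate ``two facts'' case split in your Step~3 does not by itself absorb the additive constant. It would need $\Pr[\pc(x)\ge1]=\Omega(1)$, a load-dependent fact you do not establish. You can drop that case split and simply use $\E[\pc(x)+1]\ge\max\{1,\,c_1\E[\disp(x)^{1/d}]-O(1)\}$.
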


\section{Upper Bounds for Anti-Robin Hood Ordering}
\label{sec:anti-robinhood}


In this section, we derive our upper bounds for anti-Robin Hood hashing. The majority of the section (Subsections \ref{subsec:anti-robinhood-intuition}, \ref{subsec:anti-robinhood-concentration}, \ref{subsec:anti-robinhood-negative-feedback-loop}, \ref{subsec:anti-robinhood-long-displacements}, \ref{subsec:anti-robinhood-proof-of-theorem}) is spent developing the necessary technical machinery to prove the following theorem, which allows us to reason about the behavior of a random fixed-offset degree-$d$ probing hash table, using the anti-Robin Hood ordering, at any load factor $1 - \epsilon$:

\begin{restatable}{theorem}{antirobinhoodexpectedbounds}
    \label{thm:anti-robinhood-expected-bounds}
    Let $d\ge1$ be fixed.  There is a constant $C_{\mathrm{fin}}$ such that the following holds.  Suppose
    \[
        C_{\mathrm{fin}}\log n\cdot \max\{n^{-1/2},n^{-1/d}\}
        \le
        \epsilon
        \le
        \frac12.
    \]
    For a random fixed-offset degree-$d$ probing hash table using anti-Robin Hood ordering at load factor $1-\epsilon$, the expected time to perform a successful or unsuccessful query is
    \begin{align*}
      \begin{cases}
        O\bk*{\epsilon^{1-2/d}} & \text{if }1\le d<2, \\
        O\bk*{\log \epsilon^{-1}} & \text{if }d\ge2.
      \end{cases}
    \end{align*}
\end{restatable}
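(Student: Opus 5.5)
By history independence (\cref{lem:history-independence-of-rh-and-arh}), I would treat the table at load factor $1-\epsilon$ as the result of inserting the keys in any order I like. The natural choice is \emph{increasing displacement}: under anti-Robin Hood ordering a key at displacement $\Delta$ is only ever displaced by keys that reach the slot with smaller displacement. So the set of slots occupied by keys at displacement $<\Delta$ is determined by the keys whose probes stop before offset $\Delta$.

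Concretely, for each offset index $j$, define the \emph{level-$j$ occupancy} $1-\epsilon_j$: the fraction of slots filled by keys whose probe complexity is at most $j$. A key (or a query) reaches probe $j+1$ exactly when its probes $h_0,\dots,h_j$ all hit slots already claimed by lower-displacement keys. The plan is to show that the fraction $q_j$ of keys still active after $j$ probes decays fast until it reaches $\Theta(\epsilon)$.

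The key steps are as follows.
\begin{enumerate}
\item \textbf{Approximate independence.} Because the offsets $r_j$ are random in $I_j$ and spread out as $\Theta(j^d)$, the slot $h(x)+r_{j+1}$ is nearly independent of the occupancy pattern seen at earlier probes. I would establish this by a concentration argument over the random offsets and the hash values. This gives the recursion
\[
q_{j+1}\approx q_j\,(1-\epsilon_j),\qquad 1-\epsilon_j \approx 1-q_{j}+\text{(slots already taken)}.
\]
In words, each still-active key succeeds with probability roughly equal to the current fraction of free slots.
\item \textbf{Negative feedback.} Analysing this recursion, the number of unplaced keys $m_j$ satisfies $m_{j+1}\approx m_j\cdot(1-\text{free}_j)$ with $\text{free}_j=\epsilon n+m_j$. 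Thus $m_j$ drops geometrically while $m_j\gg\epsilon n$; after $O(1)$ probes it falls below $\epsilon n$, and from then on it decays like $(1-\epsilon)^j$. The occupied region grows toward $1-\epsilon$, and the expected probe count for an arbitrary query is $\sum_j \Pr[\text{reach } j]$. I would bound this by $O(\log\epsilon^{-1})$ through a harmonic-type tail $\Pr[\pc> j]\lesssim \min(1,1/(\epsilon j)) $-ish summed up to $j\approx\epsilon^{-1}\log$. A more careful version of the recursion should give the tail $\Pr[\pc\ge j]=O(1/j)$ for $j\le\epsilon^{-1}$, followed by geometric decay; summing gives $\log\epsilon^{-1}$.
\item \textbf{Correlations at small scale for $d<2$.} For $d<2$, the offsets $r_j\approx j^d$ are dense enough that consecutive probes land in correlated regions, i.e., clusters of length $L$ built by keys at displacement up to $L$. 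I would handle this by a block decomposition at scale $L$: the density of free slots in a window of length $L$ fluctuates by $\sqrt{L}/L$, and it becomes comparable to $\epsilon$ at $L\approx\epsilon^{-2}$. Probe count $L^{1/d}=\epsilon^{-2/d}$ times a success probability of $\epsilon$ gives $\epsilon^{1-2/d}$.
\item \textbf{Long displacements.} Finally, I would show that with high probability no key exceeds offset $n$; this uses the lower bound on $\epsilon$ in terms of $n^{-1/2}$ and $n^{-1/d}$ together with a union bound.
\end{enumerate}

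The main obstacle is step 1: justifying near-independence of successive probes despite the shared offsets and wraparound. I would first try a martingale or exposure argument over the keys, sorted by displacement level, controlling the free-slot density in every window of length $\sim r_{j+1}-r_j$ via Chernoff-type bounds. This is where the $\log n$ factor in the hypothesis is consumed.
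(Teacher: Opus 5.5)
The main gap is your Step 1. It carries the entire difficulty, it is asserted rather than proved, and the near-independence it posits is false in the generality you state it.

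Conditional on $r_1,\dots,r_j$ and the table after $j$ rounds, the shared offset $r_{j+1}$ is a fresh uniform point of $I_{j+1}$. So a home $u$ that still has an unplaced key is served in round $j+1$ with probability exactly $T/|I_{j+1}|$, where $T$ is the number of free slots in $u+I_{j+1}$. What matters is therefore the \emph{local} free density ahead of surviving keys, not the global one, and surviving keys sit in crowded regions. Indeed, as in the paper's lower-bound argument, ``$u$ still has an unplaced key'' is increasing and ``$u+r_{j+1}$ is free'' is decreasing in the Poisson home counts. By \cref{thm:harris-inequality}, your mean-field recursion therefore \emph{over}estimates placements, which is the wrong direction for an upper bound.

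For $d<2$ the discrepancy is not even a constant factor. Your recursion does not depend on $d$, so it would give $O(\log\epsilon^{-1})$ for every $d$, contradicting the $\Omega(\epsilon^{1-2/d})$ lower bound of \cref{cor:anti-robinhood-displacement-lower} (e.g.\ $\Omega(\epsilon^{-1})$ for linear probing). Your proposed repair also cannot work. For $d=2$, a window of length $|I_{j+1}|=\Theta(j)$ at displacement $\Theta(j^2)$, in the range $j\lesssim\epsilon^{-1}$ that produces the $\log\epsilon^{-1}$, contains only $O(1)$ free slots in expectation. Chernoff bounds per window therefore give nothing.

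The missing idea is the paper's negative-feedback loop, which aggregates over all $\Theta(K^{1/d})$ probes whose sampling intervals cover a length-$K$ window at distance $[2K,4K]$.
\begin{itemize}
\item The number of free slots in a length-$K$ window of the $K$-reveal concentrates with $O(\sqrt K)$ fluctuations (\cref{prop:concentration}). This uses McDiarmid over the keys with the offsets fixed, plus the 1-Lipschitz property from insertion-chain monotonicity.
\item The still-unrevealed offsets then act as adaptive coin flips with head probabilities $T_j/|I_j|$ (\cref{lem:opportunity,lem:losinglotsofflips}).
\item Heavy homes are split off by \cref{lem:heavyhomes}.
\end{itemize}
This gives $P(5K)\le10^{-4}P(K)$ whenever $KP(K)\ge c\max\{\sqrt K,K^{(d-1)/d}\}$, hence $P(K)=O(K^{-\min\{1/2,1/d\}})$ at load $1$ (\cref{cor:critical-tail}). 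That is precisely the $K^{-1/2}$ tail your Step 3 heuristic presupposes for $d<2$ but never derives. \cref{lem:longreveal} then supplies the decay $\epsilon\exp(-\Omega(\min\{\epsilon^2K,\epsilon K^{1/d}\}))$ beyond $\max\{\epsilon^{-2},\epsilon^{-d}\}$.

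Several further steps also need repair.
\begin{itemize}
\item In Step 2, $m_{j+1}\approx m_j(1-\epsilon-m_j/n)$ does not fall geometrically while $m_j\gg\epsilon n$. It decays harmonically, $m_j/n\approx 1/j$, and needs $\Theta(\epsilon^{-1})$ rounds to reach $\epsilon n$. Geometric decay in $O(1)$ rounds would give $O(1)$ expected query time, contradicting \cref{thm:anti-rh-harmonic-probe-tail}; your later $O(1/j)$ tail is the correct picture.
\item The claim that each still-active key succeeds with probability about the free fraction ignores secondary clustering. Keys with a common home share their entire probe sequence, and at most one of them is placed per round; this is why \cref{lem:heavyhomes} is needed.
\item In Step 4, showing that with high probability no key passes offset $n$ does not bound an expectation. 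On the failure event the probe count is not a priori bounded, since the shared offsets need not quickly cover every residue modulo $n$. You need an expected completion bound that is independent of the tail event, as in \cref{lem:polynomial-completion} with $\E[\tau_n]=n^{O(1)}$. The $\log n$ in the hypothesis is consumed exactly there, to make $n^{B}P(n/10)=O(1)$, not in a union bound over windows.
\end{itemize}
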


  Taking the derivative of the expected query time with respect to $\epsilon$, we also get the following immediate corollary for amortized expected insertion time:
  \begin{corollary}
    \label{cor:anti-robinhood-amortized-expected-insertion-time}
    With the same setup as in Theorem \ref{thm:anti-robinhood-expected-bounds}, the amortized expected insertion time at load factor $1 - \epsilon$ is
    \begin{align*}
      \begin{cases}
        O\bk*{\epsilon^{-2/d}} & \text{if }1\le d<2, \\
        O\bk*{\epsilon^{-1}} & \text{if }d\ge2.
      \end{cases}
    \end{align*}
  \end{corollary}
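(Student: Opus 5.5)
The corollary follows from Theorem~\ref{thm:anti-robinhood-expected-bounds} by the differentiation principle \eqref{eq:amortized}. The plan has three steps: identify the total cost of the first insertions with a sum of query times, bound each term using the theorem at the final load, and integrate over the filling process.

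\textbf{Step 1: total insertion cost equals total query cost.} By Lemma~\ref{lem:history-independence-of-rh-and-arh}, the table after the first $m=(1-\delta)n$ insertions does not depend on insertion order. By the insertion algorithm, the total work of these $m$ insertions equals the sum, over the $m$ stored keys, of their successful query times in the final table. Under anti-Robin Hood ordering a key never probes the same slot twice, so query time and probe complexity coincide and the footnoted Robin Hood caveat does not arise.

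\textbf{Step 2: bound the sum of query times.} Theorem~\ref{thm:anti-robinhood-expected-bounds} gives expected query time $g(\delta)$ for every key at load factor $1-\delta$, where $g(\delta)=O(\delta^{1-2/d})$ for $d<2$ and $g(\delta)=O(\log\delta^{-1})$ for $d\ge2$. Hence the expected total cost is at most $(1-\delta)n\cdot g(\delta)$.

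\textbf{Step 3: compare with the target sum.} We must show this is at most $\sum_{i<m} f(\epsilon_i)$, where $\epsilon_i=1-i/n$ and $f=-g'$ up to constants. Here $f(\epsilon)=\Theta(\epsilon^{-2/d})$ for $d<2$ and $f(\epsilon)=\Theta(\epsilon^{-1})$ for $d\ge2$; both are positive and decreasing in $\epsilon$. By the Riemann-sum comparison,
\[
\sum_{i<m} f(\epsilon_i)\ \ge\ n\int_{\delta}^{1} f(\epsilon)\,d\epsilon .
\]
For $d\ge2$ this gives $n\log\delta^{-1}$. For $d<2$ (so $1-2/d<0$) it gives $\Theta(n(\delta^{1-2/d}-1))$, which is $\Omega(n\delta^{1-2/d})$ for $\delta\le 1/2$. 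In both cases this dominates $n\,g(\delta)$ after adjusting constants, which yields the bounds stated in the corollary.

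\textbf{Main obstacle: the range of admissible $\epsilon$.} The theorem requires $C_{\mathrm{fin}}\log n\cdot\max\{n^{-1/2},n^{-1/d}\}\le\delta\le 1/2$, and the corollary inherits the same setup, so $\delta$ is restricted to this range. For the early insertions, at loads below $1/2$, we use monotonicity: query time at a lower load is no larger, since the table state is a function of the key set and fewer keys can only shorten probe paths. Thus those insertions cost $O(1)$ each, which is absorbed because $f\ge$ const. With this, the derivative heuristic \eqref{eq:amortized} becomes rigorous, giving $O(\epsilon^{-2/d})$ for $1\le d<2$ and $O(\epsilon^{-1})$ for $d\ge2$.
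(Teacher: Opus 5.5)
Your proposal is correct and follows the paper's route. The paper obtains this corollary from the general principle in the preliminaries: the total cost of the first $(1-\delta)n$ insertions equals the total query cost of those keys in the final table, and one then differentiates the query-time bound. Your Riemann-sum comparison and your monotonicity argument for loads below $1/2$ simply make that derivative step explicit. One harmless inaccuracy: in the cyclic table a key can revisit a physical slot when two offsets agree modulo $n$. Step 1 only needs that each key's probe index advances monotonically during insertions, which gives total insertion work $\sum_x(\pc(x)+1)$ regardless.
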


  Note that Theorem \ref{thm:anti-robinhood-expected-bounds} focuses only on obtaining \emph{upper bounds}. Later in the paper, however, we will also prove matching lower bounds, showing that Theorem \ref{thm:anti-robinhood-expected-bounds} is asymptotically tight for any $d \ge 1$ (Section \ref{sec:lower-bounds}). 

  After proving Theorem \ref{thm:anti-robinhood-expected-bounds}, we then set our sights on proving a more powerful result: that, when $d \ge 2$, there exists a \emph{single} fixed-offset degree-$d$ probe sequence that works for \emph{all} $\epsilon$ and $n$ simultaneously. This allows us to, in Subsection \ref{sec:one-sequence-all-eps}, establish the main technical result of the paper: 

  \begin{restatable}[One offset sequence for all  $\epsilon$ and $n$]{theorem}{onesequencealleps}
    \label{thm:one-sequence-all-eps}
    Fix a constant $d\ge2$ and $\delta\in(0,1)$.  There are constants
    \[
        C_{\mathrm{sim}}=C_{\mathrm{sim}}(\delta,d)<\infty
        \qquad\text{and}\qquad
        C_*=C_*(\delta,d)<\infty
    \]
    such that, for a random fixed-offset degree-$d$ offset sequence $r_1, r_2, \ldots$, we have with probability $1 - \delta$ (over the randomness of $r_1, r_2, \ldots$) that the following holds: for every table size $n$ and every $\epsilon$ satisfying
    \begin{equation}
        C_{\mathrm{sim}}\log n\cdot\max\{n^{-1/2},n^{-1/d}\}
        \le
        \epsilon
        \le
        \frac12,
        \label{eq:one-sequence-all-eps-epsilon-condition}
    \end{equation}
    the worst-case expected query time (either successful or unsuccessful) at load factor $1 - \epsilon$ in an $n$-slot hash table using the anti-Robin Hood ordering with offset sequence $r_1, r_2, \ldots$ is at most
    \[
        C_*\log\epsilon^{-1}.
    \]
  \end{restatable}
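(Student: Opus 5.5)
The plan is to strengthen \cref{thm:anti-robinhood-expected-bounds} from an averaged statement into a statement about the offset sequence itself, and then take a union bound over a discretised family of pairs $(n,\epsilon)$. \Cref{thm:anti-robinhood-expected-bounds} bounds an expectation taken jointly over the offsets $r_1,r_2,\ldots$ and the hash functions. Its proof (concentration, negative feedback loop, long displacements) should in fact show more. For fixed $(n,\epsilon)$, call the offset sequence \emph{good} if certain deterministic regularity conditions hold on the prefix $r_1,\ldots,r_m$ with $r_m\approx n$. The conditions I have in mind say that, for every scale $2^k\le n$, the offsets in $[2^k,2^{k+1})$ reduced modulo $n$ are spread out: no short window of $[n]$ receives too many of them, and few pairs $i\ne j$ have $r_i\equiv r_j \pmod n$ or nearly so. Given such conditions, the conditional expected query time over $h,\gamma$ is at most $C\log\epsilon^{-1}$. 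So the first step is to re-read the earlier argument and isolate exactly which properties of the offsets it uses. Then we record that, when $d\ge2$, each such property fails with probability at most some $p(n,\epsilon)$, obtained by Chernoff bounds over the independent uniform choices $r_j\in I_j$.

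The second step is the reduction to countably many cases. Query time is monotone in the load: under anti-Robin Hood ordering, history independence (\cref{lem:history-independence-of-rh-and-arh}) and the monotone structure of insertions mean that adding keys can only lengthen probe paths. Hence it suffices to handle $\epsilon\in\{2^{-k}\}$, at the cost of a constant factor, since $\log(2\epsilon)^{-1}=\Theta(\log\epsilon^{-1})$. For $n$ we cannot discretise, because the residues $r_i \bmod n$ change with $n$. Instead we take a union bound over all $n$ and all $k$ with $2^{-k}\ge C_{\mathrm{sim}}\log n\cdot n^{-1/2}$, requiring
\[
\sum_{n}\ \sum_{k} p(n,2^{-k})\le\delta .
\]

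The main obstacle is making the failure probability small enough for this union bound. The earlier theorem presumably controls only an expectation, so its failure events may have merely constant probability. I would first make the failure probability decay polynomially in $n$, say $p(n,\epsilon)\le n^{-3}$, for all $n$ beyond some $n_0(\delta,d)$. This is where the factor $\log n$ in \eqref{eq:one-sequence-all-eps-epsilon-condition} is spent: because $\epsilon n^{1/2}\ge C_{\mathrm{sim}}\log n$, each relevant window of the table holds $\Omega(\log n)$ candidate offsets or empty slots, so Chernoff bounds give $n^{-c}$ tails with $c$ growing in $C_{\mathrm{sim}}$. The sum over $n\ge n_0$ and the $O(\log n)$ values of $k$ is then at most $\delta/2$.

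The remaining finitely many $n<n_0$ and their finitely many values of $\epsilon$ are handled crudely. There, the expected query time is bounded by $O(n_0^{d})$ whenever the first $n_0$ offsets are distinct modulo every $n<n_0$ (or the probe sequence otherwise covers all residues). Absorbing this bound into $C_*$ gives a probability-$(1-\delta/2)$ event, with constants depending only on $(\delta,d)$. The hard part will be checking that the conditional analysis really needs only high-probability properties of the offsets, and not averaging over them. The negative feedback loop step in particular may use averaging, and I would try to replace it by a bound on the number of near-collisions $r_i\equiv r_j$, conditioned on the offsets.
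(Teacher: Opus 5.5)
There is a genuine gap: both of your main steps fail, and each is missing an idea.

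\textbf{First step: regularity conditions cannot suffice.} No deterministic regularity conditions of the kind you describe (offsets spread out modulo $n$, few near-collisions) can suffice. The quadratic sequence $r_i=i^2$ already satisfies them on the prefix $r_m\approx n$. Within each dyadic range its offsets are distinct, never wrap around, and are evenly spaced at gap $2i+1=|I_i|$ (for $\alpha=1$). So your conditional lemma would prove $O(\log\epsilon^{-1})$ for anti-Robin Hood quadratic probing itself, which the paper explicitly leaves open. The earlier argument does not use regularity of the offsets; it uses \emph{fresh randomness}. In Lemma~\ref{lem:opportunity}, the $j$-opportunities are fixed before $r_j$ is drawn, so $u+r_j$ hits one with probability $T_j/|I_j|$. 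Accordingly, Lemma~\ref{lem:negative-feedback-loop} only bounds $\E[Y_2]$ averaged over the still-unrevealed offsets of the next scale. The averaging you hope to replace by a near-collision count cannot be removed; it has to be kept and propagated across scales.

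\textbf{Second step: the union bound over $n$ is impossible.} Requiring $p(n,\epsilon)\le n^{-3}$ and summing over all $n$ cannot work. For $K\le n/10$, the contribution of displacements up to $K$ is the same random variable for every $n\ge 10K$. It is a function of the offsets with $I_j\cap[0,K]\neq\emptyset$ and of local Poisson arrivals only. Hence the probability that this part is bad cannot decay in $n$, and $\sum_n p(n,\epsilon)$ diverges unless those probabilities vanish.

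\textbf{What the paper does instead.} The paper exploits exactly this $n$-independence. Set $K_i=5^iK_{\min}$ and $A_i=K_i^{1/d}P_0(K_i)$. Then $(A_i)$ is adapted to the offset filtration and satisfies $A_{i+1}\le 3A_i$. Moreover, $\E[A_{i+1}\mid\text{prefix}]\le 10^{-3}A_i$ whenever $A_i$ is large, by Lemma~\ref{lem:negative-feedback-loop} applied in conditional expectation. The fourth-moment drift bound, Lemma~\ref{lem:scale-drift}, then gives $\Pr[\sum_{i\le m}A_i>Rm]=O(R^{-4}m^{-2})$. With $\epsilon_j=2^{-j}$ and transition scale $m_j=\Theta(j)$, the failure probability is $O(R^{-4}j^{-2})$. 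This is summable over $j$ and holds for all $n$ simultaneously. The same event yields the hypothesis $P\le\eta\epsilon_j$ of Lemma~\ref{lem:longreveal}. The resulting contribution up to $n/10$ is then dominated by an $n$-independent infinite sum, with failure probability $e^{-\Omega(MRj)}$. Only displacements beyond $n/10$ require a union bound over $n$. That is where the $\log n$ in \eqref{eq:one-sequence-all-eps-epsilon-condition} is spent: it makes the completion contribution at most $n^{-q}$ and ensures $K_{m_j}\le n/10$.

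Your reduction to dyadic $\epsilon$ via monotonicity in the load is fine and matches the paper.
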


  As an immediate corollary, we get:
  
  \begin{corollary}
    \label{cor:one-sequence-amortized-expected-insertion-time}
    With the same setup as in Theorem \ref{thm:one-sequence-all-eps}, we have with probability $1 - \delta$ (over the randomness of $r_1, r_2, \ldots$) that, for all $\epsilon$ and $n$ satisfying \eqref{eq:one-sequence-all-eps-epsilon-condition}, the amortized expected insertion time at load factor $1 - \epsilon$ in an $n$-slot hash table using the anti-Robin Hood ordering with offset sequence $r_1, r_2, \ldots$ is at most $O(\epsilon^{-1})$ (where the hidden constant depends only on $d, \delta$).
  \end{corollary}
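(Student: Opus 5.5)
This corollary follows from Theorem \ref{thm:one-sequence-all-eps} together with the query-to-insertion conversion \eqref{eq:amortized}. The plan is as follows. Condition on the event $\mathcal{G}$, of probability at least $1-\delta$ over $r_1,r_2,\ldots$, guaranteed by Theorem \ref{thm:one-sequence-all-eps}. On $\mathcal{G}$, for every $n$ and every $\epsilon$ in the admissible range \eqref{eq:one-sequence-all-eps-epsilon-condition}, the worst-case expected query time at load factor $1-\epsilon$ is at most $C_*\log\epsilon^{-1}$. All remaining expectations are over the base hash function $h$ and the tie-breaker $\gamma$, with the offsets fixed. It is important that $\mathcal{G}$ is a single event covering all $n$ and $\epsilon$ simultaneously. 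Because of this, no union bound over load factors is needed, and the same sequence serves every prefix of the insertion process.

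Next I would redo the derivation behind \eqref{eq:amortized}, carefully enough to handle the range restriction. Fix $n$ and $\delta'$ such that $\epsilon=\delta'$ is admissible, and consider the first $m=(1-\delta')n$ insertions. By history independence (Lemma \ref{lem:history-independence-of-rh-and-arh}) and the insertion algorithm, their total cost equals the total successful-query cost of the keys present after step $m$. I would rather not bound that total by $m\cdot C_*\log\delta'^{-1}$, since summing insertion costs that way would overcount. Instead, bound the cost of the $i$th insertion directly. It equals the cost of the table at $i$ keys minus the cost of the table at $i-1$ keys, where the cost of a table is the sum of its keys' query times. So the expected total is a telescoping sum, and it equals $m$ times the expected successful query time of a uniformly random stored key at load $1-\delta'$. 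On $\mathcal{G}$ this is at most $(1-\delta')n\,C_*\log\delta'^{-1}$.

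It remains to compare this with $\sum_{i<(1-\delta')n} f(1-i/n)$ for $f(\epsilon)=C\epsilon^{-1}$. I read the paper's indexing as charging the $i$th insertion at load factor $i/n$, i.e.\ at remaining capacity $\epsilon_i = 1-i/n$, so this is the intended reading of $f(1-i/n)$. The sum is at least $Cn\int_{\delta'}^{1}\epsilon^{-1}\,d\epsilon = Cn\log\delta'^{-1}$. Hence any $C\ge C_*$ suffices, with a constant depending only on $d$ and $\delta$. Two edge cases need care. First, when $\delta'>1/2$ the theorem does not apply directly. There I would use monotonicity: the total cost of fewer insertions is at most the total cost at $\delta'=1/2$, which is $O(n)$ and is dominated by the sum anyway, since every term is at least $C$. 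Second, when $\delta'$ lies below the admissible range, the definition only needs to hold for admissible load factors, and the corollary's statement restricts to exactly those.

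The only delicate step is justifying that ``total insertion time equals total query time of the present keys'' is valid for the fixed offset sequence, including the subtlety of footnote \ref{fn:displacement-ambiguity}. I would appeal to the paper's remark following \eqref{eq:amortized} verbatim, so that step is essentially immediate.
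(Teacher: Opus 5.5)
Your proposal is correct and is essentially the paper's argument. The paper applies \eqref{eq:amortized} on the single probability-$(1-\delta)$ event of Theorem~\ref{thm:one-sequence-all-eps}, i.e.\ it differentiates $C_*\log\epsilon^{-1}$; your telescoping identity followed by a harmonic-sum comparison is that step done by hand. The identity says the total cost of the first $(1-\delta')n$ insertions equals the total successful-query cost at the final load $1-\delta'$, so only that one load needs to be admissible.

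\begin{itemize}
\item \textbf{Harmonic-sum inequality.} Your stated inequality goes the wrong way. In fact $\sum_{i<(1-\delta')n}(1-i/n)^{-1}=n(H_n-H_{\delta'n})$ is slightly \emph{smaller} than $n\log\delta'^{-1}$, by at most $1/\delta'$. The slack $\delta'n\log\delta'^{-1}$ from your factor $(1-\delta')$ absorbs this deficit throughout the admissible range, since there $\delta'^2n\ge 1$; alternatively, enlarge $C$ by a constant factor.
\item \textbf{The case $\delta'>1/2$.} This case lies outside the statement anyway. Your total-cost monotonicity argument there would also fail when $(1-\delta')n$ is small; per-key monotonicity of query time in the load is what works.
\end{itemize}
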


  Finally, after proving Theorem \ref{thm:one-sequence-all-eps}, we also establish for completeness a second (more straightforward) extension of Theorem \ref{thm:anti-robinhood-expected-bounds}, which is to show that Theorem \ref{thm:anti-robinhood-expected-bounds} also holds for \emph{smoothed} degree-$d$ probing (as a substitute for \emph{random} fixed-offset degree-$d$ probing). This is established in Subsection \ref{subsec:anti-robinhood-smoothed-analysis}.

\subsection{Intuition of the Proof}
\label{subsec:anti-robinhood-intuition}
In this subsection, we discuss the high-level ideas behind our analyses. We will focus for most of the section on analyzing smoothed quadratic probing, meaning that each key $x$ includes each offset $j$ in its probe sequence independently with probability $j^{-1/2}$. We will then briefly motivate some of the additional ideas needed to prove Theorems \ref{thm:anti-robinhood-expected-bounds} and \ref{thm:one-sequence-all-eps}.

\paragraph{Observation 1: local intervals behave in concentrated ways.} 
Define the $K$-reveal of the hash table to be what we get if we remove from the hash table any keys with displacements greater than $K$. An interesting feature of anti-Robin Hood ordering is that, for any threshold $K$ and any interval $I$ of $K$ slots, the state of $I$ in the $K$-reveal depends \emph{only} on the keys that hash to $I \cup (I - K)$. 

Moreover, many of the properties of $I$ that we might care about -- such as the number of free slots in $I$, or the number of keys that hash to $I$ but have displacement \emph{greater} than $K$ -- have low sensitivity to the addition or removal of a single key. 

So, for example, if we consider the number of free slots $F_I$ in $I$, we can argue both that:
\begin{enumerate}
    \item $F_I$ is determined by a small set of independent random variables, namely the hashes and probing choices of the keys with home positions in $I \cup (I - K)$.
    \item The addition or removal of a single key can change $F_I$ by at most $1$.
\end{enumerate}

This pair of properties is enough to obtain very strong concentration bounds on $F_I$. Even without knowing what $\E\Bk*{F_I}$ is, we can apply (an extension of) McDiarmid's inequality to argue that
$$\Pr[|F_I - \E\Bk*{F_I}| > u] \le 2 \exp\left(-\Omega\left(\frac{u^2}{K}\right)\right).$$
Likewise, a similar argument can be made for the number $O_I$ of keys that hash to $I$ but have displacement greater than $K$.

These concentration bounds allow us to dodge many of the independence issues that would normally arise in the analysis. Roughly speaking, they allow us to treat every $K$-length interval $I$ as behaving in the same way as any other, regardless of interactions between different intervals and keys. 

\paragraph{Observation 2: long displacements are rare.}
Now, let us turn our attention to the following basic question: Given a key $x$, how likely is it to have some large displacement $K$? We first observe that displacements $K \gg \epsilon^{-2}$ will be extremely rare. This is due to the following argument. 

Let $K \gg \epsilon^{-2}$, and consider the interval $I = [h(x), h(x) + K)$. By symmetry across the hash table, the expected number of free slots in $I$ is $\epsilon K$. Because $\epsilon K \gg \sqrt{K}$, our concentration bound from earlier tells us that, in the $K$-reveal, $I$ has at least $\Omega(\epsilon K)$ free slots. 

What is the probability, then, that key $x$ manages to make it through the entire interval $I$ without finding a free slot. By construction, $x$ includes every position in $I$ with probability at least $1 / \sqrt{K}$. The expected number of free slots that $x$ probes is therefore 
$$\approx \frac{1}{\sqrt{K}} \cdot \epsilon K \gg 1,$$
since $K \gg \epsilon^{-2}$. The probability that $x$ manages to ``miss all of these opportunities'' is very small. Thus, $x$ is very unlikely to have any displacement greater than $K$.

\paragraph{Observation 3: even \emph{short} displacements are surprisingly rare.}
The more interesting question is: What is the probability $P_K$ that a key $x$ has displacement at least $K$ for some $K \le \epsilon^{-2}$? 

Here, we are able to make an argument that is completely agnostic to the load factor $1 - \epsilon$. We argue that, even if $\epsilon = 0$, we have $P_K = O(1/\sqrt{K})$.

The main technical insight here is that there is a negative feedback loop: if the probability $P_K$ is very large, say $100 / \sqrt{K}$, then we can argue that $P_{2K} \le \frac{1}{10} P_K$. To see why this is the case, consider the following argument. 

Suppose $P_K \ge 100 / \sqrt{K}$, let $I_1$ be the interval $[h(x), h(x) + K)$, and let $I_2$ be the interval $[h(x) + K, h(x) + 2K)$. Notice that, for every element $x$ that does not get placed in the $K$-reveal, there is a corresponding free slot that must exist somewhere in the $K$-reveal. This means, critically, that in the $K$-reveal, the expected number of free slots in $I_2$ is at least $P_K \cdot K \ge 100 \sqrt{K}$.

How many of these $\approx 100 \sqrt{K}$ slots do we expect to still be free in the $2K$-reveal? Well, if $P_{2K} \ge \frac{1}{10} P_K \ge 10 / \sqrt{K}$, then we expect at least $10\sqrt{K}$ free slots to remain in $I_2$ in the $2K$-reveal. Since the number of such free slots is concentrated around its mean (as we argued earlier), let's assume for simplicity that there are exactly $10\sqrt{K}$ such free slots. 

Then, how likely is it that element $x$ managed to make it through the entire interval $I_2$ without finding a free slot? Here, we can make a similar argument to the one we used for large intervals. The element $x$ includes every position in $I_2$ with probability at least $1 / \sqrt{2K}$. The expected number of free slots that $x$ probes is therefore $$\frac{1}{\sqrt{2K}} \cdot 10\sqrt{K} \ge 5.$$ The probability that $x$ manages to \emph{miss} all of these opportunities is small, at most, say, $e^{-5} \le 1/10$. But this means that $P_{2K} \le \frac{1}{10} P_K$, after all. 

Thus, there is a negative feedback loop: if $P_K$ is very large, then there are many free slots in $I_2$ for the element $x$ to use, and the probability $P_{2K}$ of $x$ making it through $I_2$ without finding a free slot is very small. This negative feedback loop caps $P_K$ to $O(1/ \sqrt{K})$. 

\paragraph{Putting the pieces together: query time $O(\log \epsilon^{-1})$.}
The bound $P_K=O(1/\sqrt K)$ is exactly what is needed for query time.  Group displacements into dyadic ranges $[K,2K]$.  Conditional on the query reaching such a range, smoothed quadratic probing makes $O(\sqrt K)$ probes in that range in expectation, while the probability of reaching the range is at most $P_K$.  Hence the contribution of one dyadic range is
\[
    P_K\cdot O(\sqrt K)=O(1).
\]
There are only $O(\log \epsilon^{-1})$ dyadic ranges below $\poly(\epsilon^{-1})$, and displacements $K\gg\epsilon^{-2}$ are extremely rare.  Thus the full expected query time is $O(\log \epsilon^{-1})$.

The same basic argument also extends to smoothed degree-$d$ probing for any $d > 2$. The main difference is what happens in short intervals. Now, the negative feedback loop keeps $P_K$ not at $O(1 / \sqrt{K})$, but at $O(1 / K^{1/d})$. This is because, within interval $I_2$ (as defined earlier), the element $x$ expects to make $\Theta(K^{1/d})$ probes, so the critical threshold at which 
$$(\# \text{ probes}) \cdot (\text{fraction of free slots}) = \Theta(K^{1/d}) \cdot P_K \gg 1$$
happens at $P_K = O(1/K^{1/d})$. On the other hand, the \emph{query cost} of $x$, should it travel distance $K$, is only $O(K^{1/d})$. So the expected query time spent on displacements in $[K, 2K]$ continues to be $O(1)$. 

When $d < 2$, however, a different phenomenon occurs. The noise from our concentration bounds dominates any negative feedback loop. Thus, even though the negative-feedback loop argument from above would seem to indicate that $P_K = O(1/K^{1/d}) = o(1/\sqrt{K})$, we end up with $P_K = \Theta(1/ \sqrt{K})$ anyway. This issue turns out to be fundamental (we \emph{really do} have $P_K = \Theta(1/ \sqrt{K})$), and serves as the bottleneck all the way up to $K \approx \epsilon^{-2}$. The fact that a $P_{\epsilon^{-2}} = \Theta(\epsilon)$ fraction of keys have displacements greater than $\epsilon^{-2}$ leads to an overall query time of $\Theta(\epsilon^{1 -2/d})$. This query-time bound smoothly (and correctly) degrades from the classical $\Theta(\epsilon^{-1})$ bound for linear probing to the bound of $\epsilon^{-o(1)}$ for $d = 2$. 

\paragraph{Beyond smoothed analysis.}
Finally, one of the most interesting features of our analysis is that we can extend it to obtain an even stronger result: most sigificantly that, for any $d \ge 2$, there exists a \emph{concrete random fixed-offset degree-$d$} probe sequence that makes probes at the same asymptotic rate as degree-$d$ probing, and that achieves $O(\log \epsilon^{-1})$ expected query time for any element at any load factor $1 - \epsilon$ (Theorem \ref{thm:one-sequence-all-eps}). 

This extension is achieved by a probabilistic-method style argument. We construct the probe sequence $h(x), h(x) + r_1, h(x) + r_2, \ldots$ (where the $r_i$s are the same for all elements $x$) by constructing the sequence $r_1, r_2, \ldots$ piece by piece. We argue that, no matter how poorly some prefix of $r_1, r_2, \ldots$ performs, subsequent probes will still behave nicely, either enforcing the negative feedback loop from above (for small displacements) or ensuring that almost all elements get placed (for large displacements). We can then argue that, with probability close to $1$, every load factor $1 - \epsilon$ gets an expected query time (conditioned on the probe sequence above) that is $O(\log \epsilon^{-1})$.

\paragraph{Other technical considerations.}
We remark that, in the above discussion, we have focused on high-level analysis while omitting a number of technical difficulties. Some care is needed to handle the fact that the concentration bounds discussed above do not hold with probability $1$ (in fact, we often care about parameter regimes where they fail with constant probability), as well as the fact that different intervals $I_1, I_2$ do not behave independently of one another (we can mostly navigate this by Poissonizing the key set and focusing on intervals that have a healthy gap between them). Additionally, for $d < 2$, we will need to be careful in how we handle the transition from the small-interval parameter regime to the large-interval parameter regime, as a naive argument loses a super-constant factor in the query time here. 

Additionally, in the setting where we use a fixed offset sequence $r_1, r_2, \ldots$ for every key (rather than a different smoothed offset sequence for each key), there are several additional factors that require consideration. We must be careful to handle interactions between keys that share a home hash $h(x_1) = h(x_2)$. Such keys experience \emph{secondary-clustering} effects due to the fact that they share their entire probe sequences. Additionally, we must be careful to reason about query time in a way that allows us to, ultimately, obtain bounds for \emph{all $\epsilon$ simultaneously}. This requires a more intricate analysis than, say, simply applying a union bound over all values of $\epsilon$. 

Nonetheless, with careful handling of these additional factors, we will be able to, in the sections that follow, turn the high-level ideas above into formal analyses. 

\subsection{Concentration Bounds for Local Intervals}
\label{subsec:anti-robinhood-concentration}
Following the intuition in \cref{subsec:anti-robinhood-intuition}, we now start the proof of \cref{thm:anti-robinhood-expected-bounds} by establishing concentration bounds for local intervals.
We first work in a Poissonized model containing
\[
    \operatorname{Poisson}((1-\epsilon)n)
\]
keys.  We transfer the bounds to the fixed-load model at the end.

Given a table state, the \defn{$K$-reveal} is the partial table obtained by deleting every key whose displacement is greater than $K$.
The concentration proposition below works for any anti-Robin Hood hash table with an arbitrary fixed offset sequence.

\begin{proposition}
\label{prop:concentration}
Fix a offset sequence $0 = r_0 < r_1<r_2<\cdots$. Consider an Anti-Robin Hood hash table with this shared offset sequence and containing $\operatorname{Poisson}((1-\epsilon)n)$ keys.  Let $A$ be an interval of at most $K$ slots, where $K\le n/10$.  In the $K$-reveal, define
\[
    X=\#\{\text{free slots in }A\}
\]
and
\[
    Y=\#\{x:h(x)\in A\text{ and }x\text{ is not present in the }K\text{-reveal}\}.
\]
Equivalently, $Y$ counts keys hashing to $A$ whose final displacement is greater than $K$.  Then, for every $u\ge0$,
\[
    \Pr\left[|X-\E\Bk*{X}|>u\right]
    \le
    \exp\left(-\Omega\left(\min\left\{\frac{u^2}{K},u\right\}\right)\right)
\]
and
\[
    \Pr\left[|Y-\E\Bk*{Y}|>u\right]
    \le
    \exp\left(-\Omega\left(\min\left\{\frac{u^2}{K},u\right\}\right)\right).
\]
The probability and expectation are over the Poisson keys, with the offset sequence fixed.
\end{proposition}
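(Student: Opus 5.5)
Let $A=[a,a+|A|)$. The plan has three steps: a locality lemma, a Poisson decomposition, and a bounded-difference concentration inequality with a Bernstein-type correction for the Poisson count.

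\textbf{Locality.} In the anti-Robin Hood ordering, a key at displacement $\Delta$ loses a slot only to keys arriving at smaller displacement. So the $K$-reveal is exactly the anti-Robin Hood table built from the keys using only offsets $r_i\le K$, where a key that fails to find a slot within displacement $K$ is discarded. I will prove this through history independence (\cref{lem:history-independence-of-rh-and-arh}). Process slots in order of the displacement at which they are claimed; equivalently, run insertions in rounds $i=0,1,2,\dots$, where in round $i$ every key that is still unplaced tries $h(x)+r_i$. A key with displacement $\le K$ never interacts with a key at displacement $>K$ in a way that matters.

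It follows that the contents of $A$ in the $K$-reveal, and whether each key hashing into $A$ survives to the $K$-reveal, are determined by the keys with $h(x)\in W\defeq[a-K,a+|A|)$. The reason is that any key that ever probes a slot of $A$, or competes with a key from $A$, at displacement $\le K$ must have its home within distance $K$ of $A$. This gives a window of length $\le 2K$.

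There is one subtlety here: the dependency chains could propagate. Key $y$ is evicted and then competes elsewhere, and that competition affects $z$. The fix is to observe that the $K$-reveal restricted to slots $[a-K, a+|A|)$... Rather than chase chains, I will argue directly that the entire $K$-reveal, restricted to slots in $A$, is a function of the multiset of keys hashed into $W$. Under anti-Robin Hood, the rounds formulation decides the occupant of slot $s$ at round $i$ using only keys with home $s-r_j$ for $j\le i$. Whether such a key is still unplaced at round $j$ depends on earlier slots $h(x)+r_{j'}$, with $j'<j$, which lie in $[h(x),h(x)+K]$. That in turn recursively involves homes further left. I expect this to be the main obstacle: a naive recursion only gives dependence on all keys hashed to the left. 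I will handle it by noting that claims are made in increasing displacement order, so the recursion strictly decreases the displacement index. Carrying out that bookkeeping should bound the dependency set to homes in $[a-K,a+|A|)$. If it fails, I will fall back on the paper's $I\cup(I-K)$ claim, which is consistent with this window.

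\textbf{Poisson decomposition.} Poissonize: the keys with home in each slot $s\in W$ form independent $\Poi(1-\epsilon)$ batches, each carrying independent tie-break values $\gamma$. So $X=f(Z_s:s\in W)$, where $Z_s$ is the batch at slot $s$, and these are at most $2K$ independent variables.

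\textbf{Concentration.} I will apply a bounded-difference inequality with Poisson-distributed batch sizes. Adding or removing a single key changes $X$ by at most $1$ and $Y$ by at most $1$. This holds because, in the rounds process, inserting one extra key triggers a single eviction chain that terminates in one newly filled slot or a discard, so at most one slot of $A$ changes free/occupied status, and at most one key changes its survival status.

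So changing batch $Z_s$ changes $X$ by at most $|Z_s|+|Z'_s|$. To avoid heavy tails, I will split each batch into $m$ independent $\Poi((1-\epsilon)/m)$ sub-batches and let $m\to\infty$. Equivalently, I will use the martingale (Doob) exposure of keys one at a time, combined with Freedman/Bernstein: the total count $N\sim\Poi(2K(1-\epsilon))$, each exposure has difference at most $1$, and the conditional variance is $O(K)$. This yields $\exp(-\Omega(\min\{u^2/K,u\}))$, where the linear-in-$u$ term comes from the Poisson tail of $N$. The case $u>\E N$ is handled directly by the Poisson tail bound $\Pr[N>2\E N+u]\le e^{-\Omega(u)}$. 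The same argument applies verbatim to $Y$.
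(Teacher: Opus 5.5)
Your plan follows the paper's proof. It localizes the $K$-reveal to $O(K)$ homes and proves the one-key Lipschitz property via a single eviction chain. It then applies a Poissonized bounded-differences bound, conditioning on $N$ and using the Poisson tail of $N$ for the linear term, as in \cref{lem:McD-binomial}. Your Lipschitz argument differs slightly from the paper's monotone-displacement argument in \cref{claim:inserting-an-element,lem:one-lipschitz}, and it is a valid equivalent: you view the $K$-reveal as itself an anti-Robin Hood table with probe lists cut at offset $K$.

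Two corrections to the locality step are needed.

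First, the recursion you worry about runs rightward, not leftward, and it closes up.
\begin{itemize}
\item In the rounds description, whether home $h$ still has an unplaced key at round $j$ depends on slots $h+r_{j'}$ with $j'<j$.
\item Before round $j'$, such a slot can only have been claimed by homes $h+r_{j'}-r_{j''}$ with $j''<j'$.
\item These homes lie strictly between $h$ and $h+r_j$.
\end{itemize}
Iterating, the state of slot $s$ after all rounds with $r_j\le K$ depends only on homes in $[s-K,s]$. So no fallback is needed. Note that the recursion's termination comes from the decreasing round index, but the spatial containment comes from $r_{j'}-r_{j''}>0$.

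Second, this shows your window $W=[a-K,a+|A|)$ is correct for $X$ but wrong for $Y$. Survival of a key with home $u\in A$ depends on the slots $u+r_j$ with $r_j\le K$. Those slots depend on homes in $[u,u+K]$, so $Y$ depends on homes in $[a,a+|A|+K)$, to the right of $A$. Using the window $[a-K,a+|A|+K)$, of length at most $3K$, for both statistics fixes this, and the rest of your argument goes through unchanged.
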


The proof of the proposition relies on the following basic lemma about insertion chains.

\begin{claim}[Insertion chain]
\label{claim:inserting-an-element}
Consider the insertion of a new key $x_0$ into an anti-Robin Hood-ordered table.  There is a sequence of keys
\[
    x_0,x_1,\ldots,x_q
\]
and a sequence of slots
\[
    z_0,z_1,\ldots,z_q
\]
such that $x_0$ is placed in $z_0$; for each $i\ge1$, the key $x_i$ was previously in $z_{i-1}$ and is moved to $z_i$; and $z_q$ was empty before the insertion.

For $i\ge1$, let
\[
    \rho_i^-=\text{the displacement traveled by }x_i\text{ to reach }z_{i-1},
\]
and for $i\ge0$, let
\[
    \rho_i^+=\text{the displacement traveled by }x_i\text{ to reach }z_i.
\]
Then
\[
    \rho_0^+\le \rho_1^-\le \rho_1^+\le \rho_2^-\le \rho_2^+\le \cdots\le \rho_q^-\le \rho_q^+.
\]
\end{claim}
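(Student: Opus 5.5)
The plan is to unroll the insertion algorithm step by step and read the inequalities off the anti-Robin Hood priority rule together with the monotonicity of offsets. First I will define the chain. Run the insertion procedure of \cref{sec:prelim} starting from $x_0$. The ``current key'' changes only when it evicts an incumbent. Let $x_0$ be the first current key, and let $z_0$ be the slot where $x_0$ is finally placed, meaning either the empty slot where the process stops or the slot where $x_0$ evicts an incumbent. In the latter case, call the evicted key $x_1$; it previously occupied $z_0$ and resumes probing from its next probe. Continue inductively, letting $z_i$ be the slot where $x_i$ comes to rest. The process terminates, as the algorithm itself asserts by stopping at an empty slot. The last slot $z_q$ was then empty before the insertion.

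I need to check that $z_q$ was genuinely empty at the start, and not vacated earlier during this same chain. Each eviction immediately refills the vacated slot with the evicting key, so no slot is ever left empty mid-process. Hence the empty slot that ends the process was empty before the insertion. (If one allows the same key to appear twice in the chain, nothing changes, since the indexing is by chain step.)

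Next I will establish the two kinds of inequalities.

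The first is $\rho_i^+ \ge \rho_i^-$ for $i\ge1$. Key $x_i$ reached $z_{i-1}$ at displacement $\rho_i^-=r_j(x_i)$ for some probe index $j$. After eviction it resumes from probe $j+1$, so it lands at some probe index $j'>j$. Since $r_j(x_i)<r_{j'}(x_i)$, we get $\rho_i^+>\rho_i^-$. This step uses the convention from the footnote that displacement means the offset actually used, rather than the position modulo $n$.

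The second is $\rho_{i}^+ \le \rho_{i+1}^-$. At slot $z_i$, key $x_i$ arrives with displacement $\rho_i^+$ and evicts the incumbent $x_{i+1}$, whose displacement there is $\rho_{i+1}^-$. Under anti-Robin Hood ordering the winner has the smaller displacement, with ties broken by $\gamma$. So $x_i$ winning means $\rho_i^+\le\rho_{i+1}^-$, with equality possible only through the tie-break.

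Chaining these two inequalities gives the claim. The only delicate point is the bookkeeping: a key may be evicted and re-placed repeatedly, and offsets may wrap modulo $n$. Both are handled by treating each chain element as an event, indexed by position in the chain rather than by key identity, and by measuring displacement via the offset index as the paper does.
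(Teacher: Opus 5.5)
Your proof is correct and follows the same route as the paper. The chain is read off the insertion procedure, $\rho_i^+\le\rho_{i+1}^-$ comes from the anti-Robin Hood priority rule at $z_i$, and $\rho_i^-\le\rho_i^+$ comes from the evicted key resuming forward in its own probe sequence. Your extra bookkeeping is accurate and only makes explicit what the paper's short argument leaves implicit: $z_q$ was empty before the insertion because every eviction immediately refills the vacated slot, and in fact $\rho_i^-<\rho_i^+$ strictly.
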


\begin{proof}
The insertion procedure follows a single displacement chain.  Suppose $x_i$ displaces $x_{i+1}$ from slot $z_i$.  Then $x_i$ comes earlier than $x_{i+1}$ in the anti-Robin Hood order for $z_i$.  Hence $x_i$ reaches $z_i$ at displacement no larger than the displacement at which $x_{i+1}$ reaches $z_i$, so
\[
    \rho_i^+\le \rho_{i+1}^-.
\]
After $x_i$ is displaced from its old slot, its insertion search continues forward in its own probe sequence, so
\[
    \rho_i^-\le \rho_i^+
\]
for every $i\ge1$.  Combining these inequalities gives the claim.
\end{proof}

Using the insertion-chain lemma above, we can prove that $X$ and $Y$ both have low sensitivity to the addition or removal of a single key.

\begin{lemma}
\label{lem:one-lipschitz}
For the random variables $X$ and $Y$ in Proposition~\ref{prop:concentration}, adding one key to the table changes each of $X$ and $Y$ by at most $1$.
\end{lemma}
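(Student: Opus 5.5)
We will use history independence (\cref{lem:history-independence-of-rh-and-arh}) to assume the added key $x_0$ is inserted last. We then describe exactly how the table changes via the insertion chain of \cref{claim:inserting-an-element}. The old table and the new table differ only along the chain $x_0,\ldots,x_q$ with slots $z_0,\ldots,z_q$. Key $x_i$ moves from $z_{i-1}$ to $z_i$ for $i\ge1$, $x_0$ is placed in $z_0$, and $z_q$ goes from empty to occupied. Every other key keeps its slot and displacement. So the $K$-reveals before and after can differ only in the keys $x_0,\ldots,x_q$. The monotone chain
\[
\rho_0^+\le\rho_1^-\le\rho_1^+\le\cdots\le\rho_q^-\le\rho_q^+
\]
says that membership in the $K$-reveal is monotone along the chain. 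There is an index $t$ such that every key with index $<t$ has its relevant displacements $\le K$, and at most one key straddles the threshold: $\rho_t^-\le K<\rho_t^+$. All later keys are outside the $K$-reveal both before and after the insertion.

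For $X$, we track the occupancy of slots in the $K$-reveal. Before the insertion, slot $z_{i-1}$ holds $x_i$ at displacement $\rho_i^-$. After it, slot $z_{i-1}$ holds $x_{i-1}$ at displacement $\rho_{i-1}^+$, with $\rho_{i-1}^+\le\rho_i^-$. Hence a slot among $z_0,\ldots,z_{q-1}$ that was occupied in the old $K$-reveal remains occupied in the new one. The only slots that can change status are these. First, $z_q$, if $\rho_q^+\le K$, goes from free to occupied. Second, the slot $z_{j}$ with $\rho_{j}^+\le K<\rho_{j+1}^-$, the first index where the chain leaves the reveal, goes from free (its old occupant $x_{j+1}$ was hidden) to occupied. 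By monotonicity at most one of these events happens, since it is the unique crossing point of the monotone sequence past $K$. Otherwise no slot changes. Thus the set of free slots in the $K$-reveal loses at most one element and gains none, so $X$ changes by at most $1$ regardless of which interval $A$ is considered.

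For $Y$, we compare the keys hashing into $A$ that are absent from the $K$-reveal. The new key $x_0$ may contribute $+1$ if $h(x_0)\in A$ and $\rho_0^+>K$. An old key $x_i$ changes status only if $\rho_i^-\le K<\rho_i^+$, meaning it was in the old reveal but is not in the new one. By monotonicity at most one index in the whole chain crosses $K$, and this index is either $0$ (the new key) or some $i\ge1$. So $Y$ increases by at most $1$ and never decreases.

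The main obstacle is making the "single crossing" bookkeeping precise. In particular, we must check that ties ($\rho=K$ exactly) and the possibility $z_i=z_j$ for $i\ne j$ (a key revisiting a slot) do not break the slot-by-slot comparison. We handle this by arguing per slot using the final occupants only, together with the monotone sequence. Removal of a key then follows by symmetry, since removing is the inverse of adding, given history independence.
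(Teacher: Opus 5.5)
Your proposal is correct and follows essentially the same argument as the paper: both use the monotone insertion chain $\rho_0^+\le\rho_1^-\le\rho_1^+\le\cdots\le\rho_q^+$ to show that slots occupied in the $K$-reveal stay occupied, that at most one slot becomes newly occupied (either the unique crossing slot with $\rho_i^+\le K<\rho_{i+1}^-$ or the final empty slot $z_q$), and that at most one key newly joins the set counted by $Y$ (either $x_0$ with $\rho_0^+>K$ or a unique $x_i$ with $\rho_i^-\le K<\rho_i^+$). The repeated-slot worry you flag as the main obstacle is in fact moot: each displaced key moves to a strictly later offset, so any later chain key reaching $z_i$ has strictly larger displacement than $x_i$ and cannot evict it, which forces the slots $z_0,\ldots,z_q$ to be distinct.
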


\begin{proof}
Fix an insertion chain as in Claim~\ref{claim:inserting-an-element}.

First consider $X$, the number of free slots in $A$ in the $K$-reveal.  In the full table, every occupied slot remains occupied after the insertion; only the final empty slot of the chain becomes newly occupied.  Besides this final slot, in the $K$-reveal, a slot can newly become occupied only if its old occupant had displacement greater than $K$ and its new occupant has displacement at most $K$.

Consider the variables defined in Claim~\ref{claim:inserting-an-element}. At the slot $z_i$ where $x_i$ displaces $x_{i+1}$, the old occupant had displacement $\rho_{i+1}^-$, and the new occupant has displacement $\rho_i^+$.  By Claim~\ref{claim:inserting-an-element},
\[
    \rho_i^+\le \rho_{i+1}^-.
\]
Thus a slot occupied in the $K$-reveal before the insertion remains occupied afterward. Moreover, for any slot $z_i$ besides the final one in the chain, the slot can go from unoccupied to occupied in the $K$-reveal only if
\begin{equation}
    \rho_i^+\le K<\rho_{i+1}^-.
    \label{eq:crossing}
\end{equation}
The monotonicity of the insertion chain allows at most one such crossing.  The final slot $z_q$ may also enter the $K$-reveal, but only if $\rho_q^+\le K$, in which case \eqref{eq:crossing} did not hold for any $i<q$. Thus $X$ can decrease by at most $1$, and it cannot increase.

Now consider $Y$, the number of keys with home position in $A$ and displacement greater than $K$.  Existing keys only move forward in their probe sequences.  Thus an existing key can enter the set counted by $Y$ only if
\[
    \rho_i^-\le K<\rho_i^+
\]
for some $i\ge1$, and monotonicity permits this for at most one key.  The new key $x_0$ may itself be counted by $Y$, but only if $\rho_0^+>K$, in which case the monotonicity
\[
    \rho_0^+\le \rho_1^-\le \rho_1^+\le\cdots
\]
prevents any existing key from crossing from displacement at most $K$ to displacement greater than $K$.  Thus $Y$ can increase by at most $1$, and it cannot decrease.
\end{proof}

Finally, to complete the proof of Proposition~\ref{prop:concentration}, we will also need the following Poissonized bounded-differences inequality.

\begin{lemma}
\label{lem:McD-binomial}
Let $A$ be a set of $\operatorname{Poisson}(m)$ independent uniformly random elements from $[0,1]$.  Let
\[
    f:2^{[0,1]}\to \mathbb R
\]
satisfy
\[
    |f(B\cup\{a\})-f(B)|\le1
\]
for every finite $B\subseteq[0,1]$ and every $a\in[0,1]$.  Then, for every $u\ge0$,
\[
    \Pr\left[|f(A)-\E\Bk*{f(A)}|\ge u\right]
    \le
    2\exp\left(-\Omega\left(\min\left\{\frac{u^2}{m},u\right\}\right)\right).
\]
\end{lemma}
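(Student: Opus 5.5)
We condition on the Poisson count $N\sim\Poi(m)$ and use McDiarmid's bounded-differences inequality on the conditional law. Given $N=k$, the set $A$ consists of $k$ i.i.d.\ uniform points $a_1,\dots,a_k$. Replacing one $a_i$ changes $f$ by at most $2$: remove it, then add the new point, each step costing at most $1$ by hypothesis (removal is addition read backwards). Write $g(k)=\E[f(A)\mid N=k]$. McDiarmid then gives
\[
\Pr\bigl[|f(A)-g(k)|\ge t \,\big|\, N=k\bigr]\le 2\exp\bigl(-t^2/(2k)\bigr).
\]

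The second step compares $g(k)$ with $\E[f(A)]=\E[g(N)]$. Coupling $A_k\subseteq A_{k+1}$ by adding one extra uniform point shows $|g(k+1)-g(k)|\le 1$, so $g$ is $1$-Lipschitz on $\mathbb{N}$. Hence
\[
|g(N)-\E[g(N)]|\le |g(N)-g(m')|+|g(m')-\E g(N)|\le |N-m'|+\E|N-m'|
\]
for any integer $m'$ near $m$. Since $\E|N-m|\le\sqrt m$, this gives $|g(N)-\E f(A)|\le |N-m|+\sqrt m+1$.

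Now split $u$ into thirds. The event $|f(A)-\E f(A)|\ge u$ implies one of the following:
\begin{itemize}
\item $|N-m|\ge u/3$, which by the Poisson tail $\Pr[|N-m|\ge s]\le 2\exp(-\Omega(\min\{s^2/m,s\}))$ has the desired probability;
\item $|f(A)-g(N)|\ge u/3$ while $N\le m+u/3$.
\end{itemize}
In the second case the conditional McDiarmid bound gives at most $2\exp(-u^2/(18(m+u/3)))=\exp(-\Omega(\min\{u^2/m,u\}))$, because $u^2/(m+u)\ge\tfrac12\min\{u^2/m,u\}$.

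The additive $\sqrt m+1$ from the mean comparison must be absorbed. When $u\ge C\sqrt m$ (with $u\ge C$ as well) it is at most $u/3$ after adjusting constants. When $u<C\sqrt m$ or $u<C$, the target bound $2\exp(-\Omega(\cdot))$ exceeds $1$ once the hidden constant is small enough, so the claim is trivial there.

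The only delicate points are the Poisson tail, which is the standard Chernoff bound, and verifying that the Lipschitz condition stated for additions implies bounded differences under replacement. Both are routine. No independence between the points is needed beyond the i.i.d.\ structure, so the lemma follows.
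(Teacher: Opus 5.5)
Your proposal is correct and follows essentially the same route as the paper. Both arguments condition on $N$, apply McDiarmid with replacement-difference $2$ on $\{N\le m+u\}$, and compare $g(N)$ to $\E f(A)$ through the $1$-Lipschitz function $g(k)=\E[f(A_k)]$ and the Poisson tail. Both also absorb the $O(\sqrt m)$ mean-shift by noting the bound is trivial for $u=O(\sqrt m)$. The only cosmetic difference is that you split $u$ into thirds rather than the paper's four pieces $D_1,\dots,D_4$.
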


\begin{proof}
    Let $a₁,a₂,… ∈ [0,1]$ be independent and uniformly random, let $A_k = \{a₁,…,a_k\}$ and let $N \sim \operatorname{Poisson}(m)$. This realises $A$ as $A = A_N$. Define $g(k) := \E[f(A_k)]$ and note that $g$ is $1$-Lipschitz because $|f(A_{k+1})-f(A_k)| ≤ 1$. We divide $|f(A)-\E[f(A)]|$ into four parts as follows:
    \[
        |f(A)-\E[f(A)]|
        ≤ \underbrace{|f(A)-g(N)|}_{D₁}
        + \underbrace{|g(N)-g(⌊m⌋)|}_{D₂}
        + \underbrace{|g(⌊m⌋)-\E[g(N)]|}_{D₃}
        + \underbrace{|\E[g(N)]-\E[f(A)]|}_{D₄}.
    \]
    We show for each part separately that it is bounded by $u/4$, except with the stated error probability. We trivially have $D₄ = 0$ since $g(N) = \E[f(A) | N]$. As for $D₃$, we have
    \[
        D₃ = |g(⌊m⌋)-\E[g(N)]| ≤ \E[|g(⌊m⌋)-g(N)|] \stackrel{\text{1-Lipschitz}}{≤} \E[|⌊m⌋-N|] ≤ O(\sqrt{m})
    \]
    where the last step is a basic property of Poisson random variables. Note that the lemma's claim is trivial for $u = O(\sqrt{m})$ so by changing constants we can bound $O(\sqrt{m})$ by $u/4$.
    
    To bound $D₂$ we use a standard Chernoff bound for Poisson random variables which gives
    \[
        \Pr\left[D₂ ≥ u/4\right]
        = 
        \Pr\left[|g(N)-g(⌊m⌋)| ≥ u/4\right]
        \stackrel{\text{1-Lipschitz}}{≤} 
        \Pr\left[|N-\lfloor m\rfloor|\ge u/4\right]
        \le
        \exp\left(-\Omega\left(\min\left\{\frac{u^2}{m},u\right\}\right)\right).
    \]
    In particular we may assume $N ≤ m+u$ also when bounding $D₁$. 
    We condition on $N = k$ for any $k ≤ m+u$. We may now use McDiarmid's inequality to bound $D₁$ as $f(A) = f(A_k)$ is now a function of $k$ independent random variables and replacing one point of $A_k$ by another changes $f(A_k)$ by at most $2$. Hence:
    \begin{align*}
        \Pr[D₁\ge u/4\mid N=k]
        &= 
        \Pr[|f(A)-g(N)| \ge u/4\mid N=k]
        =  
        \Pr[|f(A_k)-\E[f(A_k)]| \ge u/4]\\
        &\le
        2\exp\left(-\Omega(u^2/k)\right)
        \stackrel{k ≤ m+u}{≤} 2\exp\left(-\Omega\left(\min\left\{\frac{u^2}{m},u\right\}\right)\right).\qedhere
    \end{align*}
\end{proof}

Putting the pieces together, we can now prove Proposition~\ref{prop:concentration}.

\begin{proof}[Proof of Proposition~\ref{prop:concentration}]
  The $K$-reveal inside $A$ depends only on keys whose home positions lie within displacement $O(K)$ of $A$.  The number of relevant keys is Poisson with mean $\Theta(K)$.  By Lemma~\ref{lem:one-lipschitz}, adding or deleting one key changes either statistic by at most one.  Applying Lemma~\ref{lem:McD-binomial} with $m=\Theta(K)$, and encoding each relevant key's home position and priority as a point in a bounded probability space, gives the claimed bounds.
\end{proof}

\subsection{The Core Negative Feedback Loop}
\label{subsec:anti-robinhood-negative-feedback-loop}

Our next step towards proving Theorem \ref{thm:anti-robinhood-expected-bounds} is to show the following lemma, which provides the core negative feedback loop that is required for analyzing the distribution of short displacements: if the expected number of keys that survive past displacement $K$ is large enough, then we expect almost all of those keys to get placed within displacement at most $5K$. This lemma holds \emph{even when the load factor is $1$}.

\begin{lemma}
\label{lem:negative-feedback-loop}
Consider a random fixed-offset degree-$d$ probing hash table containing $\operatorname{Poisson}(n)$ keys.  Let $K\le n/10$. Condition on all $r_j$ such that the sampling interval $I_j$ intersects $[0,K]$, and leave the remaining $r_j$'s random. There is a sufficiently large constant $c$, depending only on $d$ and $\alpha$, such that the following holds. If, in the $K$-reveal, every interval of length $K$ has at least $c\max\{\sqrt K,K^{(d-1)/d}\}$ free slots in expectation, then, for every interval $A$ of length $K$,
\[
    \E\Bk*{\#\{x:h(x)\in A\text{ and }\operatorname{disp}(x)>5K\}}
    \le
    10^{-4}
    \E\Bk*{\#\{x:h(x)\in A\text{ and }\operatorname{disp}(x)>K\}}.
\]
\end{lemma}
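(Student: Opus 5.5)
We plan to bound, for a key $x$ with $h(x)\in A$ that survives past displacement $K$, the probability that it also survives past displacement $5K$. The target is a bound of order $10^{-4}$ with the right weighting. Write $A=[a,a+K)$.

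\textbf{Step 1: free slots in the $K$-reveal.} Consider a key $x$ with $h(x)\in A$ and $\disp(x)>K$. Its probes at offsets in $[2K,5K)$ land in the window $W=[a+2K,a+5K)$ (roughly; the key may start anywhere in $A$). Split $W$ into three intervals of length $K$. By hypothesis, each has at least $c\max\{\sqrt K,K^{(d-1)/d}\}$ free slots in expectation in the $K$-reveal. Proposition~\ref{prop:concentration} shows the actual count deviates by more than $u$ only with probability $\exp(-\Omega(\min\{u^2/K,u\}))$. Taking $u$ to be half the mean (which is at least $c\sqrt K$) gives failure probability $e^{-\Omega(c)}$. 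So with probability $1-e^{-\Omega(c)}$ there are $F\ge \tfrac c2\max\{\sqrt K,K^{(d-1)/d}\}$ free slots in the relevant windows.

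\textbf{Step 2: anti-Robin Hood priority.} A slot that is free in the $K$-reveal is either empty or held by a key of displacement $>K$. Suppose $x$ probes such a slot at displacement $\le 5K$. In the first case $x$ could take it. In the second case the incumbent has displacement $>K$. Its displacement may still be below $x$'s displacement at that slot, so priority is not immediate. To handle this, I would use the $2K$-reveal together with the insertion-chain monotonicity (Claim~\ref{claim:inserting-an-element}). The argument: every slot in $W$ that is free in the $K$-reveal and gets taken in the full table is taken by a key of displacement $>K$. Each such taker is itself a key displaced past $K$. So either many of these slots stay free, or they absorb survivors.

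\textbf{Step 3: randomness of the unconditioned offsets.} The sampling intervals $I_j$ meeting $[2K,5K)$ contain $\Theta(K^{1/d})$ offsets with $j$ beyond those conditioned on, each uniform over an interval of length $\Theta(K^{(d-1)/d})$. Fix the set $S$ of free slots determined by the keys and by offsets $\le K$. Then each such probe hits $S$ with probability about $|S\cap(\text{its interval})|/|I_j|$. The expected number of hits is therefore about $F\cdot K^{1/d}/K$. When $F\ge cK^{(d-1)/d}$ this gives $\Omega(c)$ expected hits, and the probes are independent, so all of them miss with probability $e^{-\Omega(c)}$.

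The $\sqrt K$ term in the hypothesis covers the concentration noise in Step 1 when $d<2$. The bound $F\ge cK^{(d-1)/d}$ is what drives Step 3.

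\textbf{Main obstacle.} The offsets are shared by all keys. The free-slot set $S$ in $W$ depends on the random $r_j$ for $j$ with $I_j\subseteq(K,5K]$, through the keys that use those offsets. So $S$ is not independent of $x$'s probes. I would decouple this by comparing to the $K$-reveal only: it depends only on offsets $\le K$, which are conditioned on. Then:

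- count "unused free slots" in the $K$-reveal;
- use Lemma~\ref{lem:one-lipschitz}-type accounting to charge each survivor past $5K$ to distinct free slots of the $K$-reveal in $W$ that were never filled;
- use the resulting expectation identity: the number of survivors past $5K$ is at most the expected number of misses over all hashes, which is $e^{-\Omega(c)}$ times the number of survivors past $K$.

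Averaging over hashes should make the shared-offset dependence benign. Choosing $c$ large makes $e^{-\Omega(c)}\le 10^{-4}$.
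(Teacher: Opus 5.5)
\textbf{There is a genuine gap.} It sits exactly at the obstacle you raise in Step~2, which the proposal never resolves. A slot of $W$ that is free in the $K$-reveal is not thereby available to a survivor $x$ from $A$. Under anti-Robin Hood ordering it can be claimed by any other key with displacement $>K$ that reaches it at smaller displacement, for instance keys whose homes lie between $A$ and that slot. Those keys use the very same still-random offsets $r_j$ as $x$.

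So your Step~3 count of hits on the $K$-reveal free set $S$ says nothing about placements, and the inequality in your last bullet does not follow. A key can survive past $5K$ while hitting many slots of $S$, each of which was absorbed by a higher-priority key. What a survivor past $5K$ is forced to miss is the set of slots still free in the $5K$-reveal. By translation invariance, its expected size per length-$K$ block equals $\E[Y_2]$, the number of keys hashed to $A$ with displacement $>5K$, which is exactly the quantity you are trying to bound. That set also depends on the same offsets as $x$'s probes, so you cannot fix it in advance and invoke independence. In particular, the miss rate is not $e^{-\Omega(c)}$ driven by the $K$-reveal count $F$.

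\textbf{How the paper closes this.} It uses a self-referential, proof-by-contradiction argument. Write $Y_1$ for the keys hashed to $A$ with displacement $>K$, $\mu_1=\E[Y_1]$, and $\delta=10^{-4}$. Let $X_2$ be the number of slots of $D$ (the fourth length-$K$ block starting at $A$) that are free in the $5K$-reveal, and let $\mathcal R$ be the unconditioned offsets. Symmetry gives $\E[X_2\mid\mathcal R]=\E[Y_2\mid\mathcal R]$.
\begin{enumerate}
\item Assume $\E[Y_2]>\delta\mu_1$. Then with probability at least $\delta/2$ over $\mathcal R$ one has $\E[Y_2\mid\mathcal R]\ge\delta\mu_1/2$.
\item Proposition~\ref{prop:concentration} then makes $Y_2$ and $X_2$ simultaneously at least $\delta\mu_1/4$ with constant probability. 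This is where $\mu_1\ge c\sqrt K$ enters.
\item Lemma~\ref{lem:opportunity} bounds the joint event ``a home-$u$ key survives past $5K$ and at least $\delta\mu_1/4$ slots of $D$ are free at the end'' by $\exp(O(\beta)-\Omega(\delta c))$. Every finally-free slot was a $j$-opportunity before $r_j$ was drawn, and the adaptivity is handled by the supermartingale of Lemma~\ref{lem:losinglotsofflips}. Note the exponent is $\delta c$, not $c$.
\end{enumerate}

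\textbf{Secondary clustering is also missing.} All keys with the same home share the entire probe sequence, so one hit places only one of them. Your per-key estimate $e^{-\Omega(c)}$ is therefore false for a home holding many survivors. The paper restricts the opportunity argument to homes with fewer than $\beta$ survivors past $K$, and controls the remaining keys with Lemma~\ref{lem:heavyhomes}.
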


As pointed out in \cref{subsec:anti-robinhood-intuition}, the high-level idea for \cref{lem:negative-feedback-loop} is that, if there are many keys with displacement greater than $K$, then that also means that there are many free slots in the $K$-reveal to use, which further means a key is more likely to get placed within displacement at most $5K$. To formalize this idea, we first prove two lemmas, where \cref{lem:opportunity} shows formally that many free slots yield a higher probability of a key getting placed within displacement at most $5K$, and \cref{lem:heavyhomes} is the key lemma to handle the secondary-clustering effect where many keys hash to the same position and share the same probe sequence.

\begin{lemma}[Opportunity bound]
  \label{lem:opportunity}
  Fix a home position $u$ and consecutive probe indices $j_1,j_1+1,\ldots,j_2.$
  Let the interval $I \defeq \bigcup_{j=j_1}^{j_2}I_j$ be the union of the sampling intervals $I_j = [\alpha j^d,\alpha(j+1)^d)$ for the probe indices $j_1,j_1+1,\ldots,j_2$, and set $K_- = \alpha j_1^d-1$ and $K_+=\alpha(j_2+1)^d$. 

  Condition on the set of keys present, their home positions, their priorities, and the offsets $r_1,r_2,\ldots,r_{j_1-1}$.
  Suppose that fewer than $\beta$ keys with home position $u$ are absent from the $K_-$-reveal.  Then, for every $S\ge0$,
  \[
      \Pr\left[
          \begin{array}{c}
          \text{some key with home position }u\text{ is absent from the }
          K_+\text{-reveal, and}\\
          \text{at least }S\text{ slots }z\in u+I\text{ are free in the }K_+\text{-reveal}
          \end{array}
      \right]
      \le
      \exp\left(O(\beta)-\Omega\left(S/K_+^{(d-1)/d}\right)\right).
  \]
\end{lemma}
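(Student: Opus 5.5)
The plan is to expose the random offsets $r_{j_1},r_{j_1+1},\ldots,r_{j_2}$ one at a time and to show that each exposed offset independently gives the home-$u$ keys a fair chance to "catch" a free slot. Multiplying these per-step bounds should give $\exp\bigl(-\Omega(S/K_+^{(d-1)/d})\bigr)$, with a multiplicative $e^{O(\beta)}$ paid once for the at most $\beta$ keys from $u$ still in play.

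\emph{Step 1: a deterministic consequence of the search invariant.} Suppose some key $y$ with $h(y)=u$ is absent from the $K_+$-reveal, so $\disp(y)>K_+\ge r_j$ for every $j\le j_2$. Then $y$ probed $z_j=u+r_j$ at displacement $r_j$ and passed it. By the search invariant, $z_j$ holds a key with priority over $y$ at $z_j$, i.e.\ a key of displacement at most $r_j\le K_+$. Hence $z_j$ is occupied in the $K_+$-reveal. So the bad event implies
\[
\text{for all } j\in[j_1,j_2]:\ u+r_j\notin F, \qquad\text{where } F \defeq \{\text{free slots of } u+I \text{ in the } K_+\text{-reveal}\},\ |F|\ge S .
\]
Write $f_j\defeq|F\cap(u+I_j)|$. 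Since $|I_j|=\Theta(j^{d-1})=O(K_+^{(d-1)/d})$ for $j\le j_2$, we get $\sum_j f_j/|I_j|\ge \Omega(S/K_+^{(d-1)/d})$. If $F$ were independent of $r_j$, the probability that each $r_j$ avoids $F$ would be $\prod_j(1-f_j/|I_j|)$, which gives the bound immediately.

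\emph{Step 2: decoupling $F$ from $r_j$, which I expect to be the main obstacle.} Offsets are shared, so $r_j$ also moves keys with other homes $v$ (which probe $v+r_j$), and therefore $F$ genuinely depends on $r_j$. To handle this I would first remove the fewer than $\beta$ home-$u$ keys that are absent from the $K_-$-reveal. Home-$u$ keys present in the $K_-$-reveal never use $r_j$ for $j\ge j_1$. By Lemma~\ref{lem:one-lipschitz} and the insertion chain of Claim~\ref{claim:inserting-an-element}, re-adding the removed keys changes the free set in the $K_+$-reveal by at most $\beta$ slots. It is also monotone: slots only become occupied. So it suffices to work with the reduced table $T'$, with threshold $S-\beta$, at the cost of the $e^{O(\beta)}$ factor; when $S\lesssim\beta K_+^{(d-1)/d}$ the bound is trivial anyway.

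For $T'$, I would reveal the offsets in order $j=j_1,\ldots,j_2$. At each step I would use a switching argument over the value $t\in I_j$ of $r_j$, with all other offsets fixed. Let $F_t$ denote the free set of $u+I_j$ when $r_j=t$. The goal is to show that the number of $t$ with $u+t\notin F_t$ and $|F_t|\ge f$ is at most $|I_j|-\Omega(f)$, possibly after losing a constant factor. The tool is the anti-Robin Hood monotonicity of insertion chains: changing a single offset value only perturbs the $K_+$-reveal through chains whose displacements are monotone, so $F_t$ and $F_{t'}$ differ in a controlled way. This should yield a conditional per-step factor $1-\Omega(f_j/|I_j|)$. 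One subtlety is that $f_j$ is itself determined partly by later offsets. I would therefore phrase the per-step bound against the free count in the $\alpha(j+1)^d$-reveal, which is determined by $r_1,\ldots,r_j$ and dominates the count in the $K_+$-reveal. The final-reveal free set is smaller, so the event only becomes easier to bound.

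\emph{Step 3: combine.} Chaining the conditional factors with the tower property bounds the probability by $\E\bigl[\prod_j(1-\Omega(f_j/|I_j|))\,\mathbbm 1\{\sum_j f_j\ge S-\beta\}\bigr]$. This is at most $\exp\bigl(-\Omega((S-\beta)/K_+^{(d-1)/d})\bigr)$. Absorbing the $\beta$ shift gives the claimed bound $\exp\bigl(O(\beta)-\Omega(S/K_+^{(d-1)/d})\bigr)$.
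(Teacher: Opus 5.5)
Your Step~1 is correct, and you identify the real difficulty: the free set depends on $r_j$ through keys of other homes. The proposal never resolves it, though, and Step~3 does not follow from what Step~2 would supply.

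The switching argument is left unspecified. Changing $r_j$ moves probe $j$ of every home at once, so it is not a single insertion chain, and \cref{claim:inserting-an-element} does not control it. More importantly, the quantity you measure against, the free count in the $\alpha(j+1)^d$-reveal, is still a function of $r_j$. Chaining conditional factors, as in \cref{lem:losinglotsofflips}, needs each step's weight to be fixed \emph{before} that step's coin is flipped.

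A per-step statement of your form does not suffice. Suppose $\Pr[u+r_j\text{ misses and }f_j\ge f\mid r_{<j}]\le 1-\Omega(f/|I_j|)$ for all $f$, with $f_j$ revealed together with $r_j$. This does not imply the product bound $\exp(-\Omega(\sum_j f_j/|I_j|))$. A process that always misses, while $f_j$ is uniform on $\{0,\dots,|I_j|\}$, satisfies the statement at every step. Yet it accumulates $\sum_j f_j/|I_j|=\Omega(j_2-j_1+1)$ with no hits at all.

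The missing idea is to count free slots one stage earlier. Call $z\in u+I_j$ a $j$-opportunity if it is free in the $(\alpha j^d-1)$-reveal, and let $T_j$ be their number. That reveal depends only on $r_1,\dots,r_{j-1}$, since every later offset is at least $\alpha j^d$. So $T_j$ is determined before $r_j$ is drawn, and $\Pr[u+r_j\text{ is a }j\text{-opportunity}\mid r_{<j}]=T_j/|I_j|\ge T_j/(CK_+^{(d-1)/d})$ with no switching needed. Two observations then finish the proof.
\begin{itemize}
\item[(i)] Every slot of $u+I_j$ that is free in the $K_+$-reveal was already a $j$-opportunity, so the event forces $\sum_j T_j\ge S$.
\item[(ii)] If $u+r_j$ is a $j$-opportunity while some home-$u$ key is still unplaced, that key is placed there. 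At displacement exactly $r_j$, only home-$u$ keys probe $u+r_j$. Any key reaching the slot at smaller displacement has displacement below $\alpha j^d$, and would have left the slot occupied in the $(\alpha j^d-1)$-reveal.
\end{itemize}
Hence the event requires fewer than $\beta$ hits, and \cref{lem:losinglotsofflips} with $t=CK_+^{(d-1)/d}$ gives the bound directly.

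This also makes your removal of the home-$u$ keys and the Lipschitz bookkeeping unnecessary. In fact the removal is counterproductive: in the reduced table nothing ties a hit to placing a home-$u$ key. An opportunity at $u+r_j$ can then be filled later by another home's key at a larger offset.
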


\begin{proof}
  Fix $j\in\{j_1,\ldots,j_2\}$ and suppose that $r_{j_1},\ldots,r_{j-1}$ have already been revealed.  A slot $z$ is a \defn{$j$-opportunity for $u$} if
  \[
      z\in u+I_j
      \qquad\text{and}\qquad
      z\text{ is free in the }(\alpha j^d-1)\text{-reveal}.
  \]
  Let $T_j$ be the number of $j$-opportunities.  These slots are determined before $r_j$ is sampled.

  If $u+r_j$ is a $j$-opportunity and some key with home position $u$ is still absent just before offset $r_j$ is revealed, then the next absent key with home position $u$ is placed at $u+r_j$.  Indeed, no offset of the common probe sequence lies between $\alpha j^d$ and $r_j$ before the $j$-th probe, and at offset $r_j$ the only keys that can probe $u+r_j$ are keys with home position $u$.

  Since $r_j$ is uniform in $I_j$, and since $|I_j|\le C K_+^{(d-1)/d}$ for a constant $C=C(d,\alpha)$,
  \[
      \Pr[u+r_j\text{ hits a }j\text{-opportunity}
      \mid r_{j_1},\ldots,r_{j-1}]
      =
      \frac{T_j}{|I_j|}
      \ge
      \frac{T_j}{C K_+^{(d-1)/d}} .
  \]
  Thus the opportunity hits stochastically dominate adaptive coin flips whose head probability at step $j$ is $T_j/(C K_+^{(d-1)/d})$.

  Initially, fewer than $\beta$ keys with home position $u$ are absent from the $K_-$-reveal.  Therefore, if some such key is still absent from the $K_+$-reveal, then fewer than $\beta$ opportunity hits occurred during probes $j_1,\ldots,j_2$.

  Conversely, every slot $z\in u+I$ that is free in the $K_+$-reveal lies in $u+I_j$ for a unique $j\in\{j_1,\ldots,j_2\}$, and it was already free in the $(\alpha j^d-1)$-reveal.  Hence the event in the lemma with $S$ free slots implies $\sum_{j=j_1}^{j_2}T_j\ge S$.

  Applying \Cref{lem:losinglotsofflips} below with $t=C K_+^{(d-1)/d}$ gives
  \[
      \Pr\left[
          \#\{\text{opportunity hits}\}<\beta
          \text{ and }
          \sum_{j=j_1}^{j_2}T_j\ge S
      \right]
      \le
      \exp\left(\beta-\Omega\left(S/K_+^{(d-1)/d}\right)\right),
  \]
  which proves the lemma.
\end{proof}

\begin{lemma}
  \label{lem:losinglotsofflips}
  Let $t>0$ and let $\beta\ge0$ be an integer.  A sequence of numbers
  \[
      t_1,t_2,\ldots,t_\ell\in[0,t]
  \]
  is revealed one at a time.  After $t_i$ is revealed, a coin is flipped heads with probability $t_i/t$.  The value of $t_i$ may be chosen adaptively from previous outcomes.  Then, for every $S\ge0$,
  \[
      \Pr\left[
          \#\{\text{heads}\}<\beta
          \text{ and }
          \sum_{i=1}^{\ell}t_i\ge S
      \right]
      \le
      \exp\left(\beta-\Omega(S/t)\right).
  \]
  \end{lemma}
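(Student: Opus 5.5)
We would prove \Cref{lem:losinglotsofflips} with an exponential supermartingale that trades each head against the accumulated mass of the $t_i$. Let $H_i\in\{0,1\}$ indicate that the $i$th coin is heads, let $p_i=t_i/t\in[0,1]$, and write $\mathcal F_{i}$ for the history after the $i$th flip, so $p_i$ is $\mathcal F_{i-1}$-measurable. Fix $\lambda=1$ and set $c\defeq 1-e^{-1}>0$. Define
\[
  M_k \defeq \exp\Bigl(\sum_{i=1}^{k}\bigl(c\,p_i - H_i\bigr)\Bigr).
\]
The plan is to show that $M_k$ is a supermartingale with $M_0=1$, and then read off the bound on the bad event.

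The first step is the one-step estimate. Conditioned on $\mathcal F_{i-1}$,
\[
  \E\bigl[e^{-H_i}\mid\mathcal F_{i-1}\bigr]=1-p_i(1-e^{-1})=1-c\,p_i\le e^{-c\,p_i}.
\]
Therefore $\E[M_i\mid\mathcal F_{i-1}]\le M_{i-1}$, so $\E[M_\ell]\le 1$. Adaptivity is harmless here because $p_i$ is predictable. If $\ell$ is random or unbounded, we stop at the (bounded or truncated) horizon and apply optional stopping, or use Fatou's lemma as $\ell\to\infty$.

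The second step is to handle the bad event. On the event that there are fewer than $\beta$ heads and $\sum_i t_i\ge S$, we have $\sum_i H_i\le\beta$ and $\sum_i p_i\ge S/t$. Hence on this event
\[
  M_\ell\ge \exp\bigl(cS/t-\beta\bigr).
\]
Markov's inequality then gives
\[
  \Pr[\text{bad event}]\le \E[M_\ell]\,e^{\beta-cS/t}\le e^{\beta-cS/t},
\]
which is the claimed bound with the implied constant $c=1-e^{-1}$.

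The only delicate point is to make the supermartingale argument rigorous when $\ell$ is itself adaptively chosen or infinite. We handle this by running the process for a fixed finite horizon, padding with $t_i=0$ (these steps are trivially consistent with the bound), and taking a limit via monotone convergence of the event. Otherwise the computation is routine. The coefficient on $\beta$ is exactly $1$ because we chose $\lambda=1$; a different $\lambda$ would only change the constants.
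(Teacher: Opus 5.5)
Your proposal is correct and matches the paper's proof essentially verbatim: the same exponential supermartingale $\exp\bigl(c\sum_{i\le k} t_i/t - \#\{\text{heads among the first } k\}\bigr)$ with $c=1-e^{-1}$, the same one-step bound $1-c\,p_i\le e^{-c\,p_i}$, and Markov's inequality applied on the bad event. Your remark about padding with $t_i=0$ to handle an adaptive or unbounded horizon adds a bit of rigor that the paper leaves implicit.
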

  
  \begin{proof}
  Let $H_i$ be the number of heads among the first $i$ flips, and let
  \[
      S_i=\sum_{r=1}^i t_r.
  \]
  Let $\mathcal F_{i-1}$ be the history before $t_i$ is chosen, and let $\mathcal G_i$ be the history after $t_i$ is revealed but before the $i$-th coin is flipped.  For $c=1-e^{-1}$, define
  \[
      M_i=\exp(cS_i/t-H_i).
  \]
  We verify that $(M_i)$ is a supermartingale.  Conditional on $\mathcal G_i$, the head probability is $p_i=t_i/t$, so
  \[
      \E\Bk*{e^{-\ind\nolimits_{\BK{\text{head at }i}}}\mid\mathcal G_i}
      =
      1-p_i(1-e^{-1})
      \le e^{-cp_i}.
  \]
  Hence
  \[
      \E\Bk*{M_i\mid\mathcal F_{i-1}}
      =
      \E\Bk*{M_{i-1}e^{ct_i/t}
          \E\Bk*{e^{-\ind\nolimits_{\BK{\text{head at }i}}}\mid\mathcal G_i}
          \mid\mathcal F_{i-1}}
      \le M_{i-1}.
  \]
  Thus $\E\Bk*{M_\ell}\le1$.  On the event in the lemma,
  \[
      M_\ell\ge \exp(cS/t-\beta),
  \]
  so Markov's inequality gives the result.
  \end{proof}

\begin{lemma}[Handling slots with heavy hashes]
  \label{lem:heavyhomes}
  Fix a displacement parameter $K$ and an interval $A$ of home positions.  For each home position $u$, let
  \[
      Y_u(K)=\#\{x:h(x)=u\text{ and }\operatorname{disp}(x)>K\}.
  \]
  Define
  \[
      Y_A(K)=\sum_{u\in A}Y_u(K)
  \]
  and, for an integer $\beta\ge1$,
  \[
      Y_A^{(\beta)}(K)
      =
      \sum_{u\in A}Y_u(K)\ind\nolimits_{\BK*{Y_u(K)\ge \beta}}.
  \]
  In the Poissonized model with load at most $1$,
  \[
      \E\Bk*{Y_A^{(\beta)}(K)}
      \le
      e^{-\Omega(\beta\log\beta)}
      \E\Bk*{Y_A(K)}.
  \]
\end{lemma}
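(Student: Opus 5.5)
The plan is to prove the inequality separately for each home position $u\in A$,
\[
    \E\Bk*{Y_u(K)\ind\nolimits_{\BK*{Y_u(K)\ge\beta}}}\le e^{-\Omega(\beta\log\beta)}\E\Bk*{Y_u(K)},
\]
and then sum over $u\in A$ by linearity. The idea is to isolate the randomness in \emph{how many} keys hash to $u$. By Poisson splitting, the number $N$ of keys with $h(x)=u$ is $\operatorname{Poisson}(\lambda)$ with $\lambda\le 1$ (load at most $1$). Moreover, $N$ is independent of the keys hashing elsewhere, of the offset sequence, and of the tie-breaking values.

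First I set up the coupling. Condition on the keys with home positions other than $u$, on the offset sequence, and on an infinite i.i.d.\ sequence $\gamma_1,\gamma_2,\ldots$ of priorities. Realise the keys at $u$ as the first $N$ keys of this sequence. By history independence (\cref{lem:history-independence-of-rh-and-arh}), the final table is a deterministic function of $m=N$. So $Y_u(K)=f(N)$ for a deterministic function $f:\N\to\N$, and $N$ remains $\operatorname{Poisson}(\lambda)$ independent of $f$.

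Next I establish two properties of $f$. Apply \cref{lem:one-lipschitz} to the one-slot interval $\{u\}$, viewing the table with $m+1$ keys at $u$ as the table with $m$ keys at $u$ plus one inserted key. Its proof shows that $Y$ cannot decrease and increases by at most $1$, so
\[
    f(m)\le f(m+1)\le f(m)+1 \qquad\text{for all } m.
\]
Let $m_0=\min\{m: f(m)\ge1\}$; if no such $m$ exists, both sides vanish. Since $f(m_0)=1$, the Lipschitz property gives $f(m)\le m-m_0+1$ for $m\ge m_0$. It also gives that $f(m)\ge\beta$ forces $m\ge m_0+\beta-1$.

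Finally I compare the two expectations. On one side, $\E[f(N)]\ge\Pr[N\ge m_0]\ge\Pr[N=m_0]$. On the other side,
\[
    \E\Bk*{f(N)\ind\nolimits_{\BK*{f(N)\ge\beta}}}\le\sum_{k\ge\beta-1}\Pr[N=m_0+k]\,(k+1).
\]
For Poisson with $\lambda\le1$,
\[
    \frac{\Pr[N=m_0+k]}{\Pr[N=m_0]}=\lambda^k\frac{m_0!}{(m_0+k)!}\le\frac1{k!}.
\]
Hence the ratio of the two expectations is at most $\sum_{k\ge\beta-1}(k+1)/k!=e^{-\Omega(\beta\log\beta)}$. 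This bound holds conditionally on everything else, so taking expectations over the conditioning and summing over $u\in A$ finishes the proof.

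The main obstacle I expect is the monotonicity/Lipschitz step. I must check that \cref{lem:one-lipschitz}, which was stated for the $K$-reveal in the setting of \cref{prop:concentration}, applies verbatim to one added key at home $u$ with an arbitrary priority. This is true because its proof only uses the insertion chain of \cref{claim:inserting-an-element}, which holds for any inserted key. I must also check that the coupling through a fixed $\gamma$-sequence is legitimate, i.e.\ that conditioning on the other keys and the offsets leaves $N$ Poisson. This holds by Poisson thinning. No load-factor-specific estimates are needed.
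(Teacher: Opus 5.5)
Your proof is correct, and it takes a genuinely different route from the paper's.

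\textbf{The paper's route.} The paper also fixes $u$ and conditions on the offsets and on the keys at other homes. It keeps the priorities of the home-$u$ keys random, as a Poisson point process, and exposes them from the top. Because these keys share a probe sequence, they are placed in decreasing priority order, and lower-priority ones never disturb higher-priority ones. So once the exposure reaches a state where the next key would exceed displacement $K$, every remaining key survives. The survivors then form an unexposed Poisson process below a priority cutoff. This gives the exact conditional law $Y_u(K)\sim\Poi(\lambda_u)$ with $\lambda_u\le 1$ (or $Y_u(K)=0$). The bound then follows from $\E[Z\ind\nolimits_{\{Z\ge\beta\}}]=\lambda\Pr[Z\ge\beta-1]\le\E[Z]/(\beta-1)!$ for $Z\sim\Poi(\lambda)$.

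\textbf{Your route.} You freeze the priority sequence as well, so that $Y_u(K)=f(N)$. From the insertion-chain proof of \cref{lem:one-lipschitz} you extract that $f$ is nondecreasing with increments in $\{0,1\}$. A direct comparison of the Poisson masses at $m_0$ and $m_0+k$ then finishes.

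\textbf{What each buys.} Your argument avoids the top-down exposure and the stopping-set property of the Poisson process that it implicitly uses. It also never uses that same-home keys share a probe sequence. It therefore carries over to smoothed probing, provided each key in your i.i.d.\ sequence also carries its activation bits. There same-home keys have independent probe sequences, so the decreasing-priority placement the paper relies on is unavailable, yet the lemma is implicitly needed in the smoothed extension too. The paper's route gives a cleaner constant, about $1/(\beta-1)!$ against your order-$1/(\beta-2)!$. That difference is irrelevant at the scale $e^{-\Omega(\beta\log\beta)}$.

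\textbf{A cosmetic point.} Your final sum exceeds $1$ for $\beta\le 3$. For $2\le\beta\le3$, use $\E[f(N)\ind\nolimits_{\{f(N)\ge\beta\}}]\le\E[f(N)]-\Pr[N=m_0]$ together with $\E[f(N)]\le 2e\Pr[N=m_0]$, which gives a ratio of at most $1-1/(2e)$. This is immaterial, since the lemma is only applied with $\beta$ large.
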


\begin{proof}
  It suffices to prove the bound for a single home position $u$, and then sum over $u\in A$.
  
  Fix $u$.  Condition on the offset sequence and on all keys with home position different from $u$.  The keys with home $u$ form a Poisson point process on the priority interval $[0,1]$ of intensity at most $1$, where larger priority means earlier placement among same-home keys.
  
  Expose the part of the $K$-reveal involving home position $u$.  Since same-home keys are processed in decreasing priority order, this exposure either shows that no same-home key survives past displacement $K$, or leaves exactly the unexposed Poisson points below some priority cutoff.  Consequently, conditional on the exposed information,
  \[
      Y_u(K)\sim \operatorname{Poisson}(\lambda_u)
      \qquad\text{for some }\lambda_u\le1,
  \]
  or $Y_u(K)=0$.  Thus the lemma follows from the fact that, if $Z\sim\operatorname{Poisson}(\lambda)$ with $\lambda\le1$, then
  \begin{align*}
      \E\Bk*{Z\ind\nolimits_{\BK*{Z\ge\beta}}}
      \le{}&
      e^{-\Omega(\beta\log\beta)}\E\Bk*{Z}.
      \qedhere
  \end{align*}
\end{proof}

With the two lemmas above, we are ready to prove \cref{lem:negative-feedback-loop}.
\begin{proofof}{\cref{lem:negative-feedback-loop}}
  Note that the lemma is trivial for any $K = O(1)$, since we can take $c$ large enough so that the if-statement in the lemma never goes into effect. We will therefore assume (WLOG) for the rest of the proof that $K$ is large enough that every sampling interval intersecting $[K,5K]$ has length at most $K$. 
  
  Let
  \[
      \delta=10^{-4}.
  \]
  Choose constants $\beta$ and $c$, where $\beta$ is sufficiently large compared to $\delta$ and $c$ is sufficiently large compared to $\beta$.
  Consider four consecutive intervals
  \[
      A,B,C,D,
  \]
  each of length $K$.  Define
  \[
      X_1=\#\{\text{free slots in }D\text{ in the }K\text{-reveal}\},
      \qquad
      X_2=\#\{\text{free slots in }D\text{ in the }5K\text{-reveal}\},
  \]
  and
  \[
      Y_1=\#\{x:h(x)\in A\text{ and }\operatorname{disp}(x)>K\},
      \qquad
      Y_2=\#\{x:h(x)\in A\text{ and }\operatorname{disp}(x)>5K\}.
  \]
  For $u\in A$, let
  \[
      Y_u=\#\{x:h(x)=u\text{ and }\operatorname{disp}(x)>K\},
  \]
  and define
  \[
      Y_1^{(\beta)}
      =
      \sum_{u\in A}Y_u\ind\nolimits_{\BK*{Y_u\ge\beta}}.
  \]
  Since the expected number of keys overall is $n$, we have by symmetry that
  \[
      \mu_1:=\E\Bk*{X_1}=\E\Bk*{Y_1}.
  \]
  By assumption,
  \[
      \mu_1\ge c\max\{\sqrt K,K^{(d-1)/d}\},
  \]
  so in particular $\mu_1\ge c\sqrt K$ and $\mu_1\ge cK^{(d-1)/d}$.
  
  Let $\mathcal R$ be the still-random offsets with sampling intervals intersecting $(K,5K]$, and write
  \[
      \nu(\mathcal R)=\E\Bk*{Y_2\mid \mathcal R},
  \]
  where the expectation is over the Poisson keys.  Conditional on $\mathcal R$, symmetry gives
  \[
      \E\Bk*{X_2\mid\mathcal R}=\E\Bk*{Y_2\mid\mathcal R}=\nu(\mathcal R).
  \]
  
  Suppose for contradiction that
  \[
      \E_{\mathcal R}\Bk*{\nu(\mathcal R)}>\delta\mu_1.
  \]
  Since $0\le Y_2\le Y_1$, since $Y_1$ is determined by offsets whose sampling intervals intersect $[0,K]$ and is therefore independent of $\mathcal R$, and since $\E\Bk*{Y_1}=\mu_1$, we have
  \[
      0\le \nu(\mathcal R)\le \mu_1.
  \]
  Hence
  \[
      \Pr_{\mathcal R}\left[\nu(\mathcal R)\ge \frac{\delta}{2}\mu_1\right]
      \ge
      \frac{\delta}{2}.
  \]
  Call such offset realizations $\mathcal R$ \defn{good}.
  
  Fix a good $\mathcal R$.  Proposition~\ref{prop:concentration}, applied with the full offset sequence fixed and reveal parameter $5K$ where needed, implies that, with probability at least $0.9$ over the Poisson keys,
  \[
      Y_1\le2\mu_1,
      \qquad
      Y_2\ge \frac{\delta}{4}\mu_1,
      \qquad
      X_2\ge \frac{\delta}{4}\mu_1,
  \]
  provided $c$ is large enough. Indeed, the proposition says that, for each of $Y_1,Y_2,X_2$ conditioned on the offset sequence, the probability of deviating from the mean by at least $\delta\mu_1/4\ge \delta c\sqrt K/4$ is at most, say, $0.001$.
  
  Choose $\beta$ so large that Lemma~\ref{lem:heavyhomes} gives
  \[
      \E\Bk*{Y_1^{(\beta)}}\le \frac{\delta^2}{1000}\mu_1.
  \]
  By Markov's inequality,
  \[
      \Pr\left[Y_1^{(\beta)}>\frac{\delta}{16}\mu_1\right]\le \frac{\delta}{50}.
  \]
  Thus the event
  \[
      \mathcal G=
      \left\{
          \mathcal R\text{ is good},\,
          Y_1\le2\mu_1,\,
          Y_2\ge\frac{\delta}{4}\mu_1,\,
          X_2\ge\frac{\delta}{4}\mu_1,\,
          Y_1^{(\beta)}\le\frac{\delta}{16}\mu_1
      \right\}
  \]
  has probability at least $\delta/4$.
  
  Condition on what we will call the \defn{augmented $K$-data} consisting of the $K$-reveal together with the counts $Y_u$ for $u\in A$.  This fixes $Y_1$, $Y_1^{(\beta)}$, and the light home positions in $A$ (the home positions with $u \in A$ with $Y_u < \beta$), while leaving the offsets whose sampling intervals are disjoint from $[0,K]$ unrevealed.
  
  Fix a light home position $u\in A$, so $1\le Y_u<\beta$. Let $I_{j_1}$ be the first sampling interval that intersects the set of displacements
  \[
      D-u=\{z-u:z\in D\},
  \]
  and let $I_{j_2}$ be the last such sampling interval. Since $D-u\subseteq[2K,4K]$ and every sampling interval intersecting $[K,5K]$ has length at most $K$, the intervals $I_{j_1}, I_{j_1 + 1}, \ldots, I_{j_2}$ are disjoint from $[1, K]$. Critically, this means that the augmented $K$-data does not reveal any information about $r_{j_1}, \ldots, r_{j_2}$.  Let $K_+=\alpha(j_2+1)^d$ be the right endpoint used in Lemma~\ref{lem:opportunity}; the preceding sentence implies $K_+\le5K$.
  
  Since $D-u\subseteq \bigcup_{j=j_1}^{j_2}I_j$, every slot of $D$ that is free in the $5K$-reveal is also free in the $K_+$-reveal.  Therefore, if $X_2\ge\delta\mu_1/4$, then Lemma~\ref{lem:opportunity} applies with $S=\delta\mu_1/4$ and $K_+^{(d-1)/d}=O(K^{(d-1)/d})$, giving
  \[
      \Pr\left[
          \text{some key with home position }u\text{ is absent from the }5K\text{-reveal}
          \text{ and }X_2\ge\frac{\delta}{4}\mu_1
          \mid \text{augmented }K\text{-data}
      \right]
      \le
      q,
  \]
  where
  \[
      q=
      \exp\left(O(\beta)-\Omega\left(\frac{\delta\mu_1}{K^{(d-1)/d}}\right)\right)
      \le
      \exp(O(\beta)-\Omega(\delta c)),
  \]
  and where the asymptotic notation hides constants that are independent of $\beta$ and $c$.
  
  Let $Z$ be the number of keys counted by $Y_2$ whose home position was light at displacement $K$, and that occur together with $X_2\ge\delta\mu_1/4$.  Conditioning on the augmented $K$-data and summing over light home positions,
  \[
      \E\Bk*{Z\ind\nolimits_{\BK*{Y_1\le2\mu_1}}
          \mid \text{augmented }K\text{-data}}
      \le
      qY_1\ind\nolimits_{\BK*{Y_1\le2\mu_1}}
      \le
      2q\mu_1.
  \]
  Thus, removing the conditioning on the augmented $K$-data, we have
  \[
      \E\Bk*{Z\ind\nolimits_{\BK*{Y_1\le2\mu_1}}}\le 2q\mu_1.
  \]
  
  On $\mathcal G$,
  \[
      Z\ind\nolimits_{\BK*{Y_1\le2\mu_1}}
      \ge
      Y_2-Y_1^{(\beta)}
      \ge
      \frac{\delta}{8}\mu_1.
  \]
  Therefore, by Markov's inequality,
  \[
      \Pr[\mathcal G]\le \frac{16q}{\delta}.
  \]
  Choosing $c$ large enough makes this smaller than $\delta/4$, contradicting $\Pr[\mathcal G]\ge\delta/4$.  Therefore
  \[
      \E\Bk*{Y_2}\le \delta\mu_1,
  \]
  as claimed.
  \end{proofof}
  
  The negative-feedback lemma immediately yields the critical tail estimate at load $1$.  This is the short-displacement regime that will later control the part of a query before the load slack $\epsilon$ becomes visible.
  
  \begin{corollary}
  \label{cor:critical-tail}
  For a random fixed-offset degree-$d$ probing hash table containing $\operatorname{Poisson}(n)$ keys, define
  \[
      P(K)=\frac1n
      \E\Bk*{\#\{x:\operatorname{disp}(x)>K\}}.
  \]
  Define $\chi_d \defeq \min\{1/2, 1/d\}$. Then, for all $K\le n/10$,
  \[
      P(K)=O(K^{-\chi_d}).
  \]
  \end{corollary}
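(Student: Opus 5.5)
We derive \cref{cor:critical-tail} from \cref{lem:negative-feedback-loop} by iterating over geometric scales $K_i=5^iK_0$. By translation symmetry (uniform home positions on the cycle, Poissonized keys), every interval $A$ of length $K$ has $\E[Y_A(K)]=K\,P(K)$. The same symmetry gives that the expected number of free slots in any length-$K$ interval of the $K$-reveal equals $K\,P(K)$: the $K$-reveal has $n$ slots and, in expectation, $n-nP(K)$ keys at load $1$. Set $\theta(K)\defeq c\max\{\sqrt K,K^{(d-1)/d}\}/K=c\,K^{-\chi_d}$, with $c$ the constant of \cref{lem:negative-feedback-loop}. The hypothesis of that lemma then reads exactly $P(K)\ge\theta(K)$. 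One point to handle: the lemma conditions on the offsets with sampling intervals meeting $[0,K]$. We apply it for each such conditioning, obtaining $\E[Y_2\mid\cdot]\le10^{-4}\E[Y_1\mid\cdot]$ whenever the conditional free-slot expectation is large, and then average. To avoid a mixture problem, we would instead work conditionally throughout, defining $P(K\mid r_{\le j(K)})$; the dichotomy below holds pointwise, and averaging at the end preserves an $O(K^{-\chi_d})$ bound.

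The recursion is the following dichotomy. For each $K\le n/50$, either $P(K)<\theta(K)$, or $P(5K)\le10^{-4}P(K)$. Since $P$ is nonincreasing and $\theta(5K)\ge5^{-1/2}\theta(K)$, I claim that $P(K)\le C'\theta(K)$ for all $K\le n/10$, with $C'=5$ say. The proof is by induction on $i$ along each geometric sequence $K_i=5^iK_0$ with $K_0\in[1,5)$. For small $K$ we use the trivial bound $P\le1$, choosing the constant large. For the inductive step, suppose $P(K_i)\le C'\theta(K_i)$. If $P(K_i)<\theta(K_i)$, then $P(K_{i+1})\le P(K_i)<\theta(K_i)\le\sqrt5\,\theta(K_{i+1})$. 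Otherwise $P(K_{i+1})\le10^{-4}C'\theta(K_i)\le10^{-4}\sqrt5\,C'\theta(K_{i+1})$. In both cases $P(K_{i+1})\le C'\theta(K_{i+1})$ provided $C'\ge\sqrt5$. Intermediate $K$ follow by monotonicity, at the cost of another factor of $\sqrt5$. The range restriction $K\le n/10$ in the lemma covers $5K$ only at the last step. Near $n/10$ we simply use monotonicity from $K/5$, losing a constant.

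The main obstacle is the conditioning subtlety. $P(K)$ in the corollary is an unconditional expectation over offsets, but \cref{lem:negative-feedback-loop} requires its free-slot hypothesis to hold for the conditioned offsets $r_j$ with $I_j\cap[0,K]\ne\emptyset$. I would run the induction conditionally, writing $P_K(\mathcal F_K)$ for the conditional tail given the offsets up to scale $K$. The lemma then gives $\E[P_{5K}\mid\mathcal F_K]\le10^{-4}P_K$ on the event $P_K\ge\theta(K)$, and trivially $\E[P_{5K}\mid\mathcal F_K]\le P_K<\theta(K)$ otherwise. Hence $\E[P_{5K}\mid\mathcal F_K]\le\max\{\theta(K),10^{-4}P_K\}\le\theta(K)+10^{-4}P_K$. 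Taking expectations gives the linear recursion $\E P_{5K}\le\theta(K)+10^{-4}\E P_K$. Unrolling it yields $\E P_{K}\le\sum_i 10^{-4i}\theta(K/5^{i+1})\le\sum_i 10^{-4i}5^{(i+1)/2}\theta(K)=O(\theta(K))$. This conditional form is cleaner and handles the averaging directly, so I would present the proof this way.
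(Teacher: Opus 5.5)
Your final conditional argument is correct and is essentially the paper's proof. The paper fixes the offsets whose sampling intervals meet $[0,K]$, obtains the same pointwise bound $\frac1n\E[\#\{x:\operatorname{disp}(x)>5K\}\mid\rho]\le 10^{-4}P_{\rho}(K)+cK^{-\chi_d}$, averages to get $P(5K)\le 10^{-4}P(K)+cK^{-\chi_d}$, and closes the recursion by induction with $10^{-4}C+c\le 5^{-\chi_d}C$ rather than by unrolling. You were right to discard the unconditional dichotomy in your second paragraph: the lemma's hypothesis must hold for each fixed offset prefix, and it does not follow from $P(K)\ge\theta(K)$ holding only on average.
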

  
  \begin{proof}
  It suffices to prove the estimate for power-of-$5$ values of $K$; the bound for intermediate values then follows from monotonicity after increasing the constant by a factor depending only on $d$.  We prove by induction that, for a sufficiently large constant $C$,
  \begin{equation}
      P(K)\le CK^{-\chi_d}
      \label{eq:critical-tail-induction}
  \end{equation}
  for every power-of-$5$ value $K\le n/10$.  The case $K=1$ holds after taking $C\ge1$.

  Fix a power-of-$5$ value $K$ with $5K\le n/10$, and assume \eqref{eq:critical-tail-induction} holds at scale $K$.  Let $\rho$ be the sequence of all offset choices $r_j$ whose sampling interval $I_j$ intersects $[0,K]$, ordered by $j$.  For a fixed value $\rho^*$ of this sequence, define the conditional tail density
  \[
      P_{\rho^*}(K)=\frac1n
      \E\Bk*{\#\{x:\operatorname{disp}(x)>K\}\mid \rho=\rho^*}.
  \]
  The $K$-reveal is already determined once these offsets are fixed, so the expectation above is only over the Poisson keys.  Moreover, conditioned on $\rho=\rho^*$, translation invariance and the identity $\E[N]=n$ imply that every interval of length $K$ has
  \[
      KP_{\rho^*}(K)
  \]
  free slots in expectation in the $K$-reveal.  Indeed, all length-$K$ intervals have the same expected number of free slots, and the expected total number of free slots in the $K$-reveal equals the expected number of keys with displacement greater than $K$.

  Set
  \[
      T_K=c\frac{\max\{\sqrt K,K^{(d-1)/d}\}}K
          =cK^{-\chi_d},
  \]
  where $c$ is the constant from Lemma~\ref{lem:negative-feedback-loop}.  If $P_{\rho^*}(K)\ge T_K$, then the conditional form of the hypothesis of Lemma~\ref{lem:negative-feedback-loop} holds for this fixed offset sequence.  Applying the lemma and summing its conclusion over all length-$K$ home intervals gives
  \[
      \frac1n\E\Bk*{\#\{x:\operatorname{disp}(x)>5K\}\mid \rho=\rho^*}
      \le 10^{-4}P_{\rho^*}(K).
  \]
  If instead $P_{\rho^*}(K)<T_K$, then monotonicity of the displacement threshold gives
  \[
      \frac1n\E\Bk*{\#\{x:\operatorname{disp}(x)>5K\}\mid \rho=\rho^*}
      \le P_{\rho^*}(K)<T_K.
  \]
  Therefore, for every fixed offset sequence $\rho^*$,
  \[
      \frac1n\E\Bk*{\#\{x:\operatorname{disp}(x)>5K\}\mid \rho=\rho^*}
      \le 10^{-4}P_{\rho^*}(K)+cK^{-\chi_d}.
  \]
  Averaging over the random sequence $\rho$ yields
  \[
      P(5K)\le 10^{-4}P(K)+cK^{-\chi_d}.
  \]
  By the inductive hypothesis,
  \[
      P(5K)\le (10^{-4}C+c)K^{-\chi_d}.
  \]
  Choosing $C$ large enough that
  \[
      10^{-4}C+c\le 5^{-\chi_d}C
  \]
  gives $P(5K)\le C(5K)^{-\chi_d}$, completing the induction.
\end{proof}

\subsection{Upper Bounds for Long Displacements}
\label{subsec:anti-robinhood-long-displacements}
Recall that, as pointed out in \cref{subsec:anti-robinhood-intuition}, the displacement distribution behaves differently for long displacements and short displacements. \cref{cor:critical-tail} gives an upper bound for short displacements. In this subsection, we will prove an upper bound for long displacements. This will serve as the third major ingredient in our proof of Theorem \ref{thm:anti-robinhood-expected-bounds}.

\begin{lemma}
  \label{lem:longreveal}
  Consider a random fixed-offset degree-$d$ probing hash table containing $\operatorname{Poisson}((1-\epsilon)n)$ keys.  There exist constants $\eta,a,C_0>0$, depending only on $d$ and $\alpha$, such that the following holds.  Let
  \[
      C_0\max\{\epsilon^{-2},\epsilon^{-d}\}\le K\le n/10.
  \]
  Condition on all offset choices whose sampling intervals intersect $[0,K]$, and leave the remaining offsets random.  Suppose that
  \[
      \frac1n
      \E\Bk*{\#\{x:\operatorname{disp}(x)>K\}}
      \le
      \eta\epsilon.
  \]
  Then
  \[
      \frac1n
      \E\Bk*{\#\{x:\operatorname{disp}(x)>5K\}}
      \le
      \epsilon\exp\left(-a\min\{\epsilon^2K,\epsilon K^{1/d}\}\right).
  \]
\end{lemma}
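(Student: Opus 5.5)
The plan is to imitate the proof of Lemma~\ref{lem:negative-feedback-loop}. Everything is conditioned on the offsets whose sampling intervals meet $[0,K]$. The role of the feedback hypothesis will now be played by the load slack $\epsilon$, and the two exponents in the conclusion will come from the two regimes $d\le 2$ and $d\ge 2$.

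\textbf{Setup.} Again take four consecutive length-$K$ intervals $A,B,C,D$ and fix a home $u\in A$. Offsets landing in $D-u\subseteq[2K,4K]$ come from unrevealed sampling intervals $I_{j_1},\dots,I_{j_2}$, provided $C_0$ is large enough that these intervals have length at most $K$.

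\textbf{Step 1: lower bound on free slots in $D$ in the $5K$-reveal.}
\begin{itemize}
\item The table has $\operatorname{Poisson}((1-\epsilon)n)$ keys. Hence for every fixed full offset sequence, the expected number of free slots in $D$ in the $5K$-reveal is at least $\epsilon K$, because the free slots in the $5K$-reveal contain all free slots of the final table.
\item Proposition~\ref{prop:concentration} then gives $X_2\ge \epsilon K/2$ except with probability $\exp(-\Omega(\min\{\epsilon^2K,\epsilon K\}))=\exp(-\Omega(\epsilon^2K))$.
\item So for every realization of the still-random offsets $\mathcal R$, the event $X_2<\epsilon K/2$ is exponentially rare in $\epsilon^2K$. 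This is where the term $\epsilon^2K$ in the exponent comes from, and where $K\ge C_0\epsilon^{-2}$ is used.
\end{itemize}

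\textbf{Step 2: keys from light homes.}
\begin{itemize}
\item Condition on the augmented $K$-data as before.
\item For a light home $u$ (one with $Y_u<\beta$), apply Lemma~\ref{lem:opportunity} with $S=\epsilon K/2$ and $K_+^{(d-1)/d}=O(K^{(d-1)/d})$.
\item The probability that a key from $u$ survives past $5K$ while $X_2\ge\epsilon K/2$ is then at most $\exp(O(\beta)-\Omega(\epsilon K^{1/d}))$. This is where the term $\epsilon K^{1/d}$ comes from, and where $K\ge C_0\epsilon^{-d}$ makes it dominate $O(\beta)$.
\item Summing over light homes, the expected number of such survivors is at most $e^{O(\beta)-\Omega(\epsilon K^{1/d})}\,\E[Y_1]$.
\item The hypothesis gives $\E[Y_1]\le\eta\epsilon K$, so this contribution is $O(\epsilon K e^{-\Omega(\epsilon K^{1/d})})$ per interval $A$.
\end{itemize}

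\textbf{Step 3: keys from heavy homes and the bad event.}
\begin{itemize}
\item \emph{Heavy homes.} Here $\beta$ can no longer be a constant, because the target bound is exponentially small. Instead choose $\beta=\Theta(\min\{\epsilon^2K,\epsilon K^{1/d}\})$. Lemma~\ref{lem:heavyhomes} bounds the heavy-home mass by $e^{-\Omega(\beta\log\beta)}\E[Y_1]\le \epsilon K e^{-\Omega(\beta)}$. The $O(\beta)$ loss in Step~2 is absorbed because $\beta$ is a small constant times $\epsilon K^{1/d}$.
\item \emph{Survivors on the bad event $\{X_2<\epsilon K/2\}$.} Bound them by $Y_1\,\ind\{X_2<\epsilon K/2\}$. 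Using Proposition~\ref{prop:concentration} for $Y_1$ with mean at most $\eta\epsilon K$, we have $Y_1\le \epsilon K$ except with probability $e^{-\Omega(\epsilon K)}$. On that exceptional event, use the crude bound $Y_1\le$ the number of keys hashing to $A$, whose tail is Poisson.
\item Cauchy--Schwarz, or splitting on $Y_1\le\epsilon K$, then gives $E[Y_1\ind\{\text{bad}\}]\le \epsilon K e^{-\Omega(\epsilon^2K)}$.
\end{itemize}

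\textbf{Summing.} Add the three contributions and sum over the $n/K$ translates of $A$. This gives the bound $\epsilon\exp(-a\min\{\epsilon^2K,\epsilon K^{1/d}\})$ on $P(5K)$.

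\textbf{Role of $\eta$.} The hypothesis $P(K)\le\eta\epsilon$ is needed to keep $Y_1$ at scale $\epsilon K$. It also ensures that the slack free slots are not consumed by keys overflowing from earlier intervals: the $\epsilon K$ lower bound on free slots in the $5K$-reveal is automatic, but the $K$-reveal could otherwise be crowded.

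\textbf{Main obstacle.} I expect the hardest part to be getting an exponentially small bound rather than the constant $10^{-4}$ of Lemma~\ref{lem:negative-feedback-loop}. The Markov-style contradiction argument used there must be replaced by direct expectation bounds on each term. The delicate pieces are the heavy-home truncation with a growing $\beta$, and the correlation between $Y_1$ and the rare event $\{X_2<\epsilon K/2\}$. If the $\beta\log\beta$ gain from Lemma~\ref{lem:heavyhomes} turns out to be insufficient for that truncation, I would fall back to conditioning on $Y_u$ directly: a Poisson$(\le1)$ home has $Y_u\ge\beta$ with probability $e^{-\Omega(\beta\log\beta)}$, which is more than enough.
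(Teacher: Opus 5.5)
Your proof is correct, but its central step differs from the paper's. The paper never bounds free slots in the $5K$-reveal directly. Instead it uses $X_1$, the free slots of $D$ in the $K$-reveal (mean at least $\epsilon K$), together with an auxiliary count $R$: the keys with displacement greater than $K$ whose homes lie in the length-$6K$ interval $E$ ending at $D$. Only such keys can fill those slots before displacement $5K$. The common-case event $\mathcal H=\{X_1\ge\epsilon K/2,\ R\le\epsilon K/8,\ Y_1\le\epsilon K/8\}$ is then measurable with respect to the augmented $K$-data. So at least $X_1-R\ge3\epsilon K/8$ free slots are guaranteed deterministically before the unrevealed offsets are exposed, and this is exactly where the paper needs $\eta$ small.

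You instead observe that every reveal has at least $n-N$ free slots. Hence $\E[X_2]\ge\epsilon K$ for every full offset sequence, and you concentrate $X_2$ directly and apply the joint-event form of \cref{lem:opportunity}, just as in \cref{lem:negative-feedback-loop}. This removes $E$ and $R$, and any constant $\eta$ suffices: the hypothesis is used only through $\E[Y_1]=O(\epsilon K)$, to supply the prefactor $\epsilon$. Your closing remark about $\eta$ protecting the slack from overflow therefore describes the paper's route, not yours.

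What the paper's bookkeeping buys is that every application of \cref{prop:concentration} stays at reveal parameter $K$, inside its stated range $K\le n/10$. You need it at parameter $5K$. That extension is harmless, since the relevant home region has length $6K<n$, and \cref{lem:negative-feedback-loop} already uses it.

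One correction in Step 3. \cref{prop:concentration} gives $Y_1\le\epsilon K$ only outside probability $e^{-\Omega(\epsilon^2K)}$, not $e^{-\Omega(\epsilon K)}$. At that rate, your Cauchy--Schwarz option, which majorizes $Y_1$ by the Poisson number of keys hashing to $A$, yields only $O(K)e^{-\Omega(\epsilon^2K)}$ and loses the factor $\epsilon$. This is fatal when $d\le2$ and $K$ is near $C_0\epsilon^{-2}$, because there the target $\epsilon\exp(-a\min\{\epsilon^2K,\epsilon K^{1/d}\})$ is $\Theta(\epsilon)$. Use your splitting route instead: bound $\E[Y_1\ind\nolimits_{\{Y_1>T\}}]$ by dyadic tail summation, applying the concentration of $Y_1$ at every level $2^\ell\epsilon K$, as the paper does. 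This gives the needed $\epsilon K e^{-\Omega(\epsilon^2K)}$.
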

  
\begin{proof}
  The proof follows the same template as the proof of \cref{lem:negative-feedback-loop}; the load slack $\epsilon$ gives many genuinely empty slots, and the only additional work is to keep track of the rare event where this slack is not visible locally.  By taking $C_0$ sufficiently large, we may assume throughout that every sampling interval intersecting $[K,5K]$ has length at most $K$.
  
  Partition the table into intervals of length $K$, and fix four consecutive intervals
  \[
      A,B,C,D.
  \]
  Define
  \[
      X_1=\#\{\text{free slots in }D\text{ in the }K\text{-reveal}\},
  \]
  and
  \[
      Y_1=\#\{x:h(x)\in A,\operatorname{disp}(x)>K\},
      \qquad
      Y_2=\#\{x:h(x)\in A,\operatorname{disp}(x)>5K\}.
  \]
  For $u\in A$, write
  \[
      Y_u=\#\{x:h(x)=u,\operatorname{disp}(x)>K\}.
  \]
  Since the expected load is $1-\epsilon$,
  \[
      \E\Bk*{X_1}\ge \epsilon K.
  \]
  By assumption and by symmetry across intervals of length $K$,
  \[
      \E\Bk*{Y_1}\le \eta\epsilon K.
  \]

  Let $E$ be the length-$6K$ interval consisting of $D$ together with the five length-$K$ intervals immediately preceding it, and let
  \[
      R=\#\{x:h(x)\in E\text{ and }\operatorname{disp}(x)>K\}.
  \]
  Then
  \[
      \E\Bk*{R}\le 6\eta\epsilon K.
  \]
  Choose $\eta$ sufficiently small, and define the common-case event
  \[
      \mathcal H=
      \left\{
          X_1\ge \frac{\epsilon K}{2},\quad
          R\le \frac{\epsilon K}{8},\quad
          Y_1\le \frac{\epsilon K}{8}
      \right\}.
  \]
  We first bound the contribution from $\neg\mathcal H$.  Since $Y_2\le Y_1$, and setting $T=\epsilon K/8$,
  \[
      \E\Bk*{Y_2\ind\nolimits_{\neg\mathcal H}}
      \le
      T\Pr\left[X_1<\frac{\epsilon K}{2}\right]
      +
      T\Pr\left[R>\frac{\epsilon K}{8}\right]
      +
      \E\Bk*{Y_1\ind\nolimits_{\BK*{Y_1>T}}}.
  \]
  Proposition~\ref{prop:concentration} gives
  \[
      \Pr\left[X_1<\frac{\epsilon K}{2}\right]
      \le
      e^{-\Omega(\epsilon^2K)}.
  \]
  Applying the same proposition to the six length-$K$ intervals forming $E$, and using $\eta$ small, gives
  \[
      \Pr\left[R>\frac{\epsilon K}{8}\right]
      \le
      e^{-\Omega(\epsilon^2K)}.
  \]
  Finally, by the following tail-summation argument, we get\todo{The argument seems sound, but I don't recall seeing this in \cref{lem:negative-feedback-loop}. Jingxun: Edited.}
  \[
  \begin{aligned}
      \E\Bk*{Y_1\ind\nolimits_{\BK*{Y_1>T}}}
      &\le
      T\Pr[Y_1>T]+\sum_{\ell\ge0}2^{\ell+1}T\Pr[Y_1\ge 2^\ell T] \\
      &\le
      \epsilon K
      \sum_{\ell\ge0}
      2^{\ell+1}
      \exp\left(
          -\Omega\left(
              \min\left\{\frac{(2^\ell\epsilon K)^2}{K},2^\ell\epsilon K\right\}
          \right)
      \right) \\
      &\le
      \epsilon K e^{-\Omega(\epsilon^2K)}.
  \end{aligned}
  \]
  Hence
  \[
      \E\Bk*{Y_2\ind\nolimits_{\neg\mathcal H}}
      \le
      \epsilon K e^{-\Omega(\epsilon^2K)}.
  \]
  
  We now bound $\E\Bk*{Y_2\ind\nolimits_{\mathcal H}}$.  Set
  \[
      \beta=\left\lceil b_0\epsilon K^{1/d}\right\rceil,
  \]
  where $b_0>0$ is a sufficiently small constant.  Since $K\ge C_0\max\{\epsilon^{-2},\epsilon^{-d}\}$, choosing $C_0$ large ensures $\beta\ge1$.  Define
  \[
      Y_1^{(\beta)}
      =
      \sum_{u\in A}Y_u\ind\nolimits_{\BK*{Y_u\ge\beta}}.
  \]
  By Lemma~\ref{lem:heavyhomes},
  \[
      \E\Bk*{Y_1^{(\beta)}}
      \le
      e^{-\Omega(\beta)}\E\Bk*{Y_1}
      \le
      \eta\epsilon K e^{-\Omega(\epsilon K^{1/d})}.
  \]
  Thus heavy home positions contribute negligibly to $Y_2$, and it remains to control light home positions.
  
  Suppose $\mathcal H$ occurs, and condition on the augmented $K$-data for the intervals involved: the $K$-reveal, the counts $Y_u$ for $u\in A$, and the count $R$.  This fixes $X_1,Y_1,R,Y_1^{(\beta)}$, and $\mathcal H$, while leaving the offsets whose sampling intervals are disjoint from $[0,K]$ unrevealed.  This is the same deferred-randomness setup used in the proof of \cref{lem:negative-feedback-loop}: it fixes the surviving keys and their home-position multiplicities without exposing the future offsets that may place them.  The difference is that the earlier proof worked with slots already known to be free in the $5K$-reveal, whereas here the load slack is visible only in the $K$-reveal.  The additional count $R$ controls how many of those currently free slots can be filled before the relevant future offsets are exposed.
  
  Fix a light home position $u\in A$, meaning that $1\le Y_u<\beta$.  Let $I_{j_1}$ be the first sampling interval that intersects $D - u$ 
  and let $I_{j_2}$ be the last such sampling interval.  Since $D-u\subseteq[2K,4K]$ and every sampling interval intersecting $[K,5K]$ has length at most $K$, the intervals $I_{j_1},I_{j_1+1},\ldots,I_{j_2}$ are disjoint from $[1,K]$.  Thus the augmented $K$-data does not reveal $r_{j_1},\ldots,r_{j_2}$.  Let $K_-=\alpha j_1^d-1$ and $K_+=\alpha(j_2+1)^d$ be as in Lemma~\ref{lem:opportunity}; then $K_-\ge K$ and $K_+\le5K$.

  Among the $X_1$ slots of $D$ that are free in the $K$-reveal, any slot that is not free in the $K_+$-reveal must be occupied by a key with home position in $E$ and displacement greater than $K$; such keys are counted by $R$.  On $\mathcal H$, at least
  \[
      X_1-R\ge \frac{3\epsilon K}{8}
  \]
  slots of $D$ are therefore free in the $K_+$-reveal.  Since all of $D-u$ is covered by $I_{j_1},\ldots,I_{j_2}$, Lemma~\ref{lem:opportunity} gives
  \[
      \Pr\left[
          \begin{array}{c}
          \text{some home-}u\text{ key is absent from the }5K\text{-reveal}\\
          \mid \text{augmented }K\text{-data}
          \end{array}
      \right]
      \le
      \exp\left(O(\beta)-\Omega\left(\frac{\epsilon K}{K^{(d-1)/d}}\right)\right).
  \]
  Since $\epsilon K/K^{(d-1)/d}=\epsilon K^{1/d}$, and since $b_0$ is sufficiently small, the last display is
  \[
      q=\exp(-\Omega(\epsilon K^{1/d})).
  \]
  Linearity of expectation over light home positions gives
  \[
      \E\Bk*{(Y_2-Y_1^{(\beta)})_+\ind\nolimits_{\mathcal H}
          \mid \text{augmented }K\text{-data}}
      \le
      qY_1\ind\nolimits_{\mathcal H}.
  \]
  Taking expectations and using $\E\Bk*{Y_1}\le \eta\epsilon K$ yields
  \[
      \E\Bk*{(Y_2-Y_1^{(\beta)})_+\ind\nolimits_{\mathcal H}}
      \le
      \eta\epsilon K e^{-\Omega(\epsilon K^{1/d})}.
  \]
  Combining this with the bound on $\E\Bk*{Y_1^{(\beta)}}$ gives
  \[
      \E\Bk*{Y_2\ind\nolimits_{\mathcal H}}
      \le
      \epsilon K e^{-\Omega(\epsilon K^{1/d})}.
  \]
  Combining the estimates on $\mathcal H$ and $\neg\mathcal H$, we obtain
  \[
      \E\Bk*{Y_2}
      \le
      \epsilon K\exp\left(-a\min\{\epsilon^2K,\epsilon K^{1/d}\}\right)
  \]
  for some constant $a>0$.  Since $\E\Bk*{Y_2} = \frac{K}{n}\E\Bk*{\#\{x:\operatorname{disp}(x)>5K\}}$, the lemma follows.
\end{proof}

\subsection{Proof of Theorem~\ref{thm:anti-robinhood-expected-bounds}}
\label{subsec:anti-robinhood-proof-of-theorem}
Combining the results from the previous subsections, we can now complete the proof of \cref{thm:anti-robinhood-expected-bounds}, restated below for convenience:

\antirobinhoodexpectedbounds*

\begin{proofof}{\cref{thm:anti-robinhood-expected-bounds}}
  We first prove the displacement bounds that will be used for queries.  It is convenient to do this with a generic load parameter $\theta$, and later we will take $\theta=\epsilon/2$.

For a Poissonized table containing $\operatorname{Poisson}((1-\theta)n)$ keys, define
\[
    P_\theta(K)
    =
    \frac1n
    \E\Bk*{\#\{x:\operatorname{disp}(x)>K\}}.
\]
By monotonicity in the number of keys and Corollary~\ref{cor:critical-tail},
\[
    P_\theta(K)\le P_0(K)\le C_{\mathrm{crit}}K^{-\chi_d}
\]
for all $K\le n/10$, where $C_{\mathrm{crit}}$ is an absolute constant depending only on $d$ and $\alpha$.

Let $\eta$ and $C_0$ be the constants from Lemma~\ref{lem:longreveal}.  Choose
\[
    K_\theta
    =
    L\max\{\theta^{-2},\theta^{-d}\},
\]
where $L$ is a sufficiently large constant, depending only on $d,\alpha,\eta,C_0$, so that
\[
    C_{\mathrm{crit}}K_\theta^{-\chi_d}\le \eta\theta
    \qquad\text{and}\qquad
    K_\theta\ge C_0\max\{\theta^{-2},\theta^{-d}\}.
\]
Thus, whenever $K_\theta\le n/10$,
\[
    P_\theta(K_\theta)\le \eta\theta.
\]
The assumptions on $\epsilon$ in the theorem, with $C_{\mathrm{fin}}$ large enough, ensure that this condition holds for $\theta=\epsilon/2$.

Since $P_\theta(K)$ is nonincreasing in $K$, the hypothesis of Lemma~\ref{lem:longreveal} holds at every scale $K\ge K_\theta$.  Therefore, applying Lemma~\ref{lem:longreveal} at scale $K$ gives
\[
    P_\theta(5K)
    \le
    \theta\exp\left(-a\min\{\theta^2K,\theta K^{1/d}\}\right)
\]
for every $K_\theta\le K\le n/10$.  Equivalently, after changing the constant in the exponent, for every
\[
    K\ge 5K_\theta
\]
in the range under consideration,
\begin{equation}
    P_\theta(K)
    \le
    \theta\exp\left(-\Omega\left(\min\{\theta^2K,\theta K^{1/d}\}\right)\right).
    \label{eq:ptheta1}
\end{equation}
In addition to this, for all $K$, we can always use the load-one bound
\begin{equation}
    P_\theta(K)=O(K^{-\chi_d})
    \label{eq:ptheta2}
\end{equation}
that comes from Corollary \ref{cor:critical-tail}.

\paragraph{Transferring to fixed-load tables.}
We next transfer these bounds \eqref{eq:ptheta1} and \eqref{eq:ptheta2} to a fixed-load table.  Let $P_\theta^{\mathrm{fix}}(K)$ denote the analogue of $P_\theta(K)$ for a table with $(1-\theta)n$ keys (rather than a Poisson number of keys).  Let $N\sim\operatorname{Poisson}((1-\theta)n)$.  There is an absolute constant $\gamma>0$ such that
\[
    \Pr[N\ge (1-\theta)n]\ge \gamma.
\]
On the event $N\ge(1-\theta)n$, delete uniformly chosen keys from the Poissonized table until exactly $(1-\theta)n$ keys remain.  The resulting key set has the fixed-load distribution, and deleting keys cannot increase displacements.  Hence
\[
    P_\theta^{\mathrm{fix}}(K)\le \gamma^{-1}P_\theta(K).
\]
Thus the two displacement-tail bounds \eqref{eq:ptheta1} and \eqref{eq:ptheta2} also hold in the fixed-load model.

\paragraph{Establishing query time bounds.}
Finally, we translate our bound on displacement tails to bounds on query time.  Set
\[
    \theta=\epsilon/2.
\]
We will bound both successful and unsuccessful queries using the fixed-load tail $P_\theta^{\mathrm{fix}}$.

First consider a successful query to a fixed stored key in the table of load $1-\epsilon$.  By history independence, and symmetry across keys, the query time has the same distribution as if we queried a uniformly random stored key.  The probability that a uniformly random stored key has displacement greater than $K$ is
\[
    \frac{n}{(1-\epsilon)n}\,P_\epsilon^{\mathrm{fix}}(K)
    =
    O(P_\epsilon^{\mathrm{fix}}(K))
    \le
    O(P_\theta^{\mathrm{fix}}(K)).
\]

Now consider an unsuccessful query for a key $y$ (that is not present).  The unsuccessful query stops no later than the position where $y$ would be placed if inserted. The expected time is therefore at most the expected \emph{successful-query} time in a table with $(1 - \epsilon)n + 1$ keys. For all sufficiently large $n$ in the theorem's parameter range,
\[
    (1-\epsilon)n+1\le (1-\epsilon/2)n=(1-\theta)n.
\]
Therefore, just as for successful queries, we have that the probability of the query searching to a displacement greater than $K$ is at most $O(P_\theta^{\mathrm{fix}}(K))$.

Let $i_\star=\Theta(n^{1/d})$ be the largest probe index whose sampling interval intersects $[0,n/10]$.  To control the remaining probes, we use the following expected completion bound.
\begin{restatable}[Polynomial Expected Completion]{lemma}{polycompletion}
    \label{lem:polynomial-completion}
    Fix $d\ge1$.  For random fixed-offset degree-$d$ probing, define
    \[
        \tau_n
        \defeq
        \inf\left\{
            t\in\N:
            \{r_j\bmod n:j\ge n,\ r_j\le t\}=[n]
        \right\}.
    \]
    In words, $\tau_n$ is the least true displacement by which the tail of the common offset sequence beginning at probe index $n$ has visited every physical slot.  For smoothed degree-$d$ probing and a fixed key $x$, define
    \[
        \tau_n(x)
        \defeq
        \inf\left\{
            t\in\N:
            \{k\bmod n:n\le k\le t,\ k\text{ is activated for }x\}=[n]
        \right\};
    \]
    this is the corresponding completion threshold for the activated offsets of $x$.

    There is a constant $B<\infty$, depending only on the constants defining the probing scheme, such that for every table size $n$,
    \[
        \E\Bk*{\tau_n}\le O(n^B)
    \]
    for random fixed-offset degree-$d$ probing, while
    \[
        \E\Bk*{\tau_n(x)}\le O(n^B)
    \]
    for smoothed degree-$d$ probing and every fixed key $x$.
\end{restatable}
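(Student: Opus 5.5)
The plan is a block-decomposition plus union-bound argument. We will chop the tail of the offset sequence (indices $\ge n$ for the fixed-offset scheme, offsets $\ge n$ for the smoothed scheme) into consecutive \emph{blocks}. Each block is chosen so that every residue $s\in[n]$ has a probability at least some explicit $p_m$ of being hit inside block $m$, independently across blocks. A union bound over the $n$ residues then gives
\[
  \Pr[\tau_n> e_M]\le n\prod_{m\le M}(1-p_m)\le n\exp\Bigl(-\sum_{m\le M}p_m\Bigr),
\]
where $e_M$ is the largest true displacement occurring in blocks $0,\dots,M$. Finally we write $\E[\tau_n]\le e_0+\sum_{M\ge1}(e_M-e_{M-1})\Pr[\tau_n>e_{M-1}]$. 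Since $e_M$ grows only polynomially in $n$ and $M$ while the tail decays like $n e^{-\mathrm{poly}(M)/\mathrm{poly}(n)}$, this sum is $O(n^B)$ for a suitable constant $B$.

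\emph{Fixed-offset case.} Starting at index $n$, group consecutive indices $j$ greedily into blocks $G_0,G_1,\ldots$. Each block is a minimal run of consecutive indices for which $\bigcup_{j\in G_m}I_j$ has length at least $n$. Since the $I_j$ tile $[\alpha n^d,\infty)$, this union is an interval of at least $n$ consecutive integers, so every residue $s$ occurs in some $I_j$ with $j\in G_m$. The offset $r_j$ is uniform over the integers of $I_j$, so
\[
  \Pr[r_j\equiv s \pmod n]\ \ge\ \frac{\lfloor |I_j\cap\mathbb Z|/n\rfloor}{|I_j\cap\mathbb Z|}\ \ge\ \frac{1}{2n}
\]
when $|I_j|\ge n$. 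When $|I_j|<n$ the probability is at least $1/|I_j|\ge 1/n$. So $p_m\ge 1/(2n)$, and distinct blocks use disjoint independent offsets. Each block contains at most $\lceil n/\min_j|I_j\cap\mathbb Z|\rceil+1=O(n)$ indices, so block $M$ ends by index $n+O(Mn)$ and displacement $e_M\le\alpha(n+O(Mn))^d=\mathrm{poly}(n,M)$. The tail bound $\Pr[\tau_n>e_M]\le n e^{-M/(2n)}$ then gives $\E[\tau_n]\le\sum_M \mathrm{poly}(n,M)\min\{1,ne^{-M/(2n)}\}=O(n^{B})$.

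The one place needing care is the integer rounding when $\alpha$ is small (e.g.\ $d=1$ with $\alpha<1$, where some $I_j$ contain no integer). There I would redefine the blocks by the integers they contain rather than by length, or absorb empty intervals into the neighbouring index. This is the only step I expect to be fiddly, and only at the level of constants.

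\emph{Smoothed case.} Take blocks $[n+mn,\,n+(m+1)n)$ of $n$ consecutive offsets. Each contains every residue exactly once, activated independently with probability at least $((m+2)n)^{-(d-1)/d}$. Hence
\[
  \sum_{m\le M}p_m\ \ge\ c\,M^{1/d}n^{-(d-1)/d},
\]
and $\Pr[\tau_n(x)>(M+2)n]\le n\exp(-cM^{1/d}n^{-(d-1)/d})$. Summing over $M$ (substituting $M=n^{d-1}t^d$) gives $\E[\tau_n(x)]=O(n\cdot n^{d-1}\cdot\mathrm{polylog}\,n)=O(n^B)$.
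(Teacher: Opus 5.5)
Your proposal is correct and takes essentially the same route as the paper: disjoint blocks of independent offsets (or activation bits), a lower bound on the chance that each fixed residue is hit within a block, a union bound over the $n$ residues, and a tail sum that uses the fact that block endpoints grow only polynomially. The difference is bookkeeping. The paper makes its blocks large enough ($\Theta(n\log n)$ probe indices, or dyadic offset blocks starting at $\Theta((n\log n)^d)$ in the smoothed case) that each block covers every residue with constant probability, so the first successful block is geometric. You instead accumulate per-residue miss probabilities over minimal blocks and sum the tail directly, which gives a worse but still polynomial exponent in the fixed-offset case.
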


The proof is deferred to \cref{app:completion}.  By definition, $\tau_n$ depends only on offset variables whose sampling intervals are disjoint from $[0,n/10]$; it is therefore independent of the keys and of all offset variables that determine whether a query reaches displacement $n/10$.  The same independence holds for $\tau_n(x)$ because it uses only activation bits at offsets at least $n$.  Once the relevant completion threshold is reached, the query has probed every physical slot, and the number of probes made before true displacement $t$ is at most $t+1$.  Consequently, \cref{lem:polynomial-completion} gives
\[
    \E\Bk*{\text{query time}}
    \le
    O\left(
        1+
        \sum_{i=1}^{i_\star}
            P_\theta^{\mathrm{fix}}(\Theta(i^d))
        +
        n^B P_\theta^{\mathrm{fix}}(n/10)
    \right).
\]
Let
\[
    i_0
    =
    \Theta(K_\theta^{1/d})
    =
    \Theta\left(\max\{\theta^{-2/d},\theta^{-1}\}\right)
    =
    \Theta\left(\max\{\epsilon^{-2/d},\epsilon^{-1}\}\right),
\]
where the hidden constant is chosen large enough that $\Theta(i^d)\ge 5K_\theta$ for all $i\ge i_0$.

We split the sum at $i_0$.  For $i<i_0$, the load-one tail bound \eqref{eq:ptheta2} gives
\[
    P_\theta^{\mathrm{fix}}(\Theta(i^d))
    =
    O\left((i^d)^{-\chi_d}\right)
    =
    O(i^{-d\chi_d}).
\]
If $1\le d<2$, then $\chi_d=1/2$, so
\[
    \sum_{i<i_0} i^{-d\chi_d}
    =
    \sum_{i<i_0} i^{-d/2}
    =
    O(i_0^{1-d/2}).
\]
In this range $i_0=\Theta(\epsilon^{-2/d})$, and hence
\[
    i_0^{1-d/2}
    =
    O(\epsilon^{1-2/d}).
\]
If $d=2$, then $d\chi_d=1$ and $i_0=\Theta(\epsilon^{-1})$, so the same sum is
\[
   \sum_{i<i_0} i^{-d\chi_d} = \sum_{i<O(\epsilon^{-1})} i^{-1} = O(\log\epsilon^{-1}).
\]
If $d>2$, then $\chi_d=1/d$, so again $d\chi_d=1$, while $i_0=\Theta(\epsilon^{-1})$, so the sum is again
\[
    O(\log\epsilon^{-1}).
\]

For $i\ge i_0$, on the other hand, we may use the bound \eqref{eq:ptheta1} to deduce that
\[
    P_\theta^{\mathrm{fix}}(\Theta(i^d))
    \le
    O\left(
        \epsilon
        \exp\left(
            -\Omega\left(\min\{\epsilon^2 i^d,\epsilon i\}\right)
        \right)
    \right).
\]
Therefore
\[
\begin{aligned}
    \sum_{i\ge i_0}
        P_\theta^{\mathrm{fix}}(\Theta(i^d))
    &\le
    O\left(
        \epsilon\sum_{i\ge i_0}
            e^{-\Omega(\epsilon^2 i^d)}
    \right)
    +
    O\left(
        \epsilon\sum_{i\ge i_0}
            e^{-\Omega(\epsilon i)}
    \right).
\end{aligned}
\]
The final sum is at most
\[
    \epsilon\sum_{i\ge0}e^{-\Omega(\epsilon i)}=O(1).
\]
To bound the second to final sum, comparison with an integral gives
\[
    \sum_{i\ge0}e^{-\Omega(\epsilon^2 i^d)}
    =
    O(\epsilon^{-2/d}),
\]
so
\[
    \epsilon\sum_{i\ge i_0}e^{-\Omega(\epsilon^2 i^d)}
    =
    O(\epsilon^{1-2/d}).
\]
This is the desired bound when $d<2$, and it is $O(1)$ when $d\ge2$.

Finally, we bound the polynomial-completion contribution.  The tail bound at displacement $n/10$ gives
\[
    n^B P_\theta^{\mathrm{fix}}(n/10)
    \le
    n^B\cdot
    O\left(
        \epsilon
        \exp\left(
            -\Omega\left(\min\{\epsilon^2 n,\epsilon n^{1/d}\}\right)
        \right)
    \right).
\]
The theorem assumes
\[
    \epsilon
    \ge
    C_{\mathrm{fin}}\log n\cdot \max\{n^{-1/2},n^{-1/d}\}.
\]
Thus both
\[
    \epsilon^2n
    \qquad\text{and}\qquad
    \epsilon n^{1/d}
\]
are at least a sufficiently large constant times $\log n$, once $C_{\mathrm{fin}}$ is chosen large enough in terms of $B,d,\alpha$.  Hence the polynomial-completion contribution is $O(1)$.

Combining the estimates,
\[
    \E\Bk*{\text{query time}}
    =
    O(\epsilon^{1-2/d})
    \qquad (1\le d<2),
\]
and
\[
    \E\Bk*{\text{query time}}
    =
    O(\log\epsilon^{-1})
    \qquad (d\ge2).
\]
This proves the theorem.
\end{proofof}

\subsection{For \texorpdfstring{$d \ge 2$}{d >= 2}, A Single Offset Sequence Suffices}\label{sec:one-sequence-all-eps}

In this subsection, for $d \ge 2$, we strengthen the preceding analysis to show that almost every random fixed-offset degree-$d$ offset sequence achieves $O(\log \epsilon^{-1})$ expected query time simultaneously for all admissible load factors and table sizes (Theorem \ref{thm:one-sequence-all-eps}).

The main additional tool that we need is the following concentration inequality.

\begin{lemma}[A drift bound]
\label{lem:scale-drift}
Let $X_1,\ldots,X_j$ be nonnegative random variables adapted to a filtration $(\mathcal F_i)$.  Suppose $X_1\le1$, and for every $i\ge2$,
\[
    0\le X_i\le3X_{i-1}.
\]
Suppose also that whenever $X_{i-1}\ge1$,
\[
    \E\Bk*{X_i\mid\mathcal F_{i-1}}
    \le
    10^{-3}X_{i-1}.
\]
There are absolute constants $R_0,C_0<\infty$ such that, for every $R\ge R_0$,
\[
    \Pr\left[\sum_{i=1}^j X_i>Rj\right]\le C_0 R^{-4}j^{-2}.
\]
\end{lemma}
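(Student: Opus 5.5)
The plan is to split $S\defeq\sum_{i=1}^j X_i$ into a predictable (drift) part and a martingale part. The hypotheses force the drift part to be at most a constant times $j$ plus a tiny fraction of $S$ itself. The martingale part is then controlled by a fourth-moment bound, which yields exactly $R^{-4}j^{-2}$.

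\emph{Step 1: uniform moment bounds.} Fix $p=6$ and write $Z_i=X_i^p$. If $X_{i-1}<1$, then $X_i\le 3X_{i-1}<3$, so $Z_i\le 3^p$. If $X_{i-1}\ge1$, then using $X_i\le 3X_{i-1}$ on $p-1$ of the factors gives
\[
  \E\Bk*{Z_i\mid\mathcal F_{i-1}}\le (3X_{i-1})^{p-1}\,\E\Bk*{X_i\mid\mathcal F_{i-1}}\le 3^{p-1}10^{-3}Z_{i-1}\le 0.25\,Z_{i-1}.
\]
Hence $\E[Z_i]\le 3^p+0.25\,\E[Z_{i-1}]$. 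With $Z_1\le1$, induction gives $\E[X_i^6]\le C_1$ for an absolute constant $C_1$ and all $i$. By Jensen, all lower moments, in particular the fourth, are uniformly bounded as well.

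\emph{Step 2: drift decomposition.} Set $m_i=\E[X_i\mid\mathcal F_{i-1}]$ for $i\ge2$ (with $m_1=X_1\le1$) and $D_i=X_i-m_i$, and let $M_j=\sum_i D_i$. By the two cases above, $m_i\le 3$ when $X_{i-1}<1$ and $m_i\le 10^{-3}X_{i-1}$ otherwise, so $\sum_i m_i\le 3j+10^{-3}S$. Therefore $S\le 3j+10^{-3}S+M_j$, i.e.\ $S\le (3j+M_j)/(1-10^{-3})$. For $R\ge R_0$ (say $R_0=10$), the event $S>Rj$ forces $M_j>Rj/2$.

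\emph{Step 3: martingale tail.} The $D_i$ are martingale differences with $\E[D_i^4]\le 16\,\E[X_i^4]\le C_2$ by Step 1. Burkholder's inequality, or a direct expansion of $\E[M_j^4]$ in which the cross terms with a lone martingale difference vanish, together with Cauchy--Schwarz/Hölder on the pairs, gives $\E[M_j^4]\le C_3 j^2$. Markov's inequality then yields
\[
  \Pr[S>Rj]\le \Pr[M_j>Rj/2]\le \frac{16C_3j^2}{R^4j^4}=C_0R^{-4}j^{-2}.
\]

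The main obstacle is Step 1, the uniform moment bound. The conditional-mean contraction only controls first moments, and the multiplicative growth cap $X_i\le3X_{i-1}$ must be used to upgrade it to higher moments. The factor $3^{p-1}10^{-3}$ stays below $1$ only for $p\le 6$, which is why the constants $3$ and $10^{-3}$ permit the needed fourth moments with room to spare. The only other care needed is the bookkeeping of the fourth-moment expansion in Step 3, which is routine.
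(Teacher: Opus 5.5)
Your argument is correct, but it is organized differently from the paper's. The paper splits the indices into two kinds. The indices with $X_i<1$ contribute at most $j$. The rest form maximal blocks, started at stopping times, on which $X_i\ge1$. Inside a block the paper uses the same trick as your Step~1, $\E[Y_{r+1}^4\mid\mathcal H_r]\le(3Y_r)^3\,\E[Y_{r+1}\mid\mathcal H_r]\le 0.027\,Y_r^4$. This gives geometric decay of fourth moments, and Minkowski then shows each block sum has conditional fourth moment $O(1)$. Finally it runs a martingale fourth-moment bound over the at most $j$ block sums.

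You instead turn the same trick into a uniform per-index moment bound via $\E[X_i^p]\le 3^p+0.25\,\E[X_{i-1}^p]$, and use a single Doob decomposition over all indices. The key observation is that the compensator is self-bounding, $\sum_i m_i\le 3j+10^{-3}S$, so the drift can be absorbed into $S$ itself. This removes all the block and stopping-time bookkeeping and is shorter. The paper's route yields the extra information, not needed here, that each excursion above $1$ has $O(1)$ mass in $L^4$. (Incidentally, $p=4$ suffices, and $3^{p-1}10^{-3}<1$ holds up to $p=7$, not only $p\le 6$; this is immaterial.)

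For Step~3, rely on Burkholder rather than on the ``direct expansion.'' Burkholder gives $\E[M_j^4]\le C\,\E[(\sum_i D_i^2)^2]\le Cj\sum_i\E[D_i^4]=O(j^2)$. For a general martingale difference sequence, only the terms in which the \emph{largest} index appears exactly once vanish. Terms like $\E[D_aD_bD_c^2]$ with $a<b<c$, or $\E[D_aD_b^3]$ with $a<b$, need not vanish, since the conditional variance of $D_c$ may depend on $D_aD_b$. There are $\Theta(j^3)$ terms of the first kind, so termwise H\"older would only give $O(j^3)$, hence $R^{-4}j^{-1}$.

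The paper's own expansion over block sums makes exactly this slip when it asserts that only $D_\ell^4$ and $D_a^2D_b^2$ terms survive. Its conclusion is still true, but it needs Burkholder or the elementary one-step recursion
\[
\E[M_k^4]\le\E[M_{k-1}^4]+6\bigl(\E[M_{k-1}^4]\,\E[D_k^4]\bigr)^{1/2}+4\bigl(\E[M_{k-1}^4]\bigr)^{1/4}\bigl(\E[D_k^4]\bigr)^{3/4}+\E[D_k^4],
\]
which gives $\E[M_k^4]=O(k^2)$ by induction. With Burkholder in place, your proof goes through as written.
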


\begin{proof}
The indices with $X_i<1$ contribute at most $j$ to the sum.  We now control the maximal contiguous blocks of indices on which $X_i\ge1$.  Each such block begins with value at most $3$: either the block starts at $i=1$, or the previous value was below $1$, and the growth condition gives $X_i\le3X_{i-1}<3$.

Fix one block, and let $M$ be the sum of the values of $X_i$ in that block.  If the block begins at the stopping time $\tau$, let $Y_r$ be the value of the process $r$ steps after the beginning of the block, with $Y_r=0$ after the block ends, and let $\mathcal H_r=\mathcal F_{\tau+r}$.  Thus
\[
    M=\sum_{r\ge0}Y_r
    \qquad\text{and}\qquad
    Y_0\le3.
\]
For any $r$ such that $Y_r > 0$, the assumptions give $0\le Y_{r+1}\le3Y_r$ and
\[
    \E\Bk*{Y_{r+1}\mid\mathcal H_r}
    \le
    10^{-3}Y_r.
\]
Therefore
\[
    \E\Bk*{Y_{r+1}^4\mid\mathcal H_r}
    \le
    (3Y_r)^3\E\Bk*{Y_{r+1}\mid\mathcal H_r}
    \le
    27\cdot10^{-3}Y_r^4.
\]
Hence, for some absolute constant $\rho<1$, and for all $r \ge 0$, we have
\[
    \E\Bk*{Y_r^4}\le O(\rho^r).
\]
By Minkowski's inequality,
\[
    \left(\E\Bk*{M^4}\right)^{1/4}
    \le
    \sum_{r\ge0}\left(\E\Bk*{Y_r^4}\right)^{1/4}
    =
    O(1).
\]
Equivalently,
\[
    \E\Bk*{M^4}=O(1).
\]
The same argument holds after conditioning on the entire history before the block begins.

There are at most $j$ such blocks.  Let their sums be
\[
    M_1,M_2,\ldots,M_s,
    \qquad s\le j.
\]
Let $\mathcal G_\ell$ be the history through the end of the $\ell$-th block.  The conditional fourth-moment bound above gives
\[
    \E\Bk*{M_\ell^4\mid\mathcal G_{\ell-1}}\le C_1
\]
for an absolute constant $C_1$. 
Define
\[
    D_\ell=M_\ell-\E\Bk*{M_\ell\mid\mathcal G_{\ell-1}}.
\]
Then the $D_\ell$'s are martingale differences.  The conditional fourth-moment bound also gives the needed lower moments: Jensen's inequality gives $\E\Bk*{M_\ell\mid\mathcal G_{\ell-1}}^4\le C_1$, and Cauchy--Schwarz gives $\E\Bk*{M_\ell^2\mid\mathcal G_{\ell-1}}\le C_1^{1/2}$.  Using $(a-b)^2\le2a^2+2b^2$ and $(a-b)^4\le8a^4+8b^4$, we get
\[
    \E\Bk*{D_\ell^2\mid\mathcal G_{\ell-1}}=O(1),
    \qquad
    \E\Bk*{D_\ell^4\mid\mathcal G_{\ell-1}}=O(1).
\]

We claim that
\[
    \E\Bk*{\left(\sum_{\ell=1}^sD_\ell\right)^4}=O(j^2).
\]
To see this, expand the fourth power.  Since the $D_\ell$'s are martingale differences, every term in which some $D_\ell$ appears to the first power has expectation zero. The only remaining terms are of the forms
\[
    \E\Bk*{D_\ell^4}
    \qquad\text{and}\qquad
    \E\Bk*{D_a^2D_b^2}\quad(a<b).
\]
The first type contributes $O(j)$.  For the second type,
\[
    \E\Bk*{D_a^2D_b^2}
    =
    \E\Bk*{D_a^2\E\Bk*{D_b^2\mid\mathcal G_{b-1}}}
    =
    O(\E\Bk*{D_a^2})
    =
    O(1),
\]
so the total contribution is $O(j^2)$.  This proves the claim.

Also,
\[
    \sum_{\ell=1}^s\E\Bk*{M_\ell\mid\mathcal G_{\ell-1}}
    =
    O(j),
\]
because $s\le j$.  Thus, for $R$ sufficiently large, we can apply Markov's inequality to get
\[
\begin{aligned}
    \Pr\left[\sum_{\ell=1}^sM_\ell>Rj\right]
    &\le
    \Pr\left[\sum_{\ell=1}^sD_\ell>\Omega(Rj)\right] \\
    &\le
    \frac{\E\Bk*{(\sum_{\ell=1}^sD_\ell)^4}}{\Omega(R^4j^4)}
    =
    O(R^{-4}j^{-2}).
\end{aligned}
\]
Adding the deterministic contribution at most $j$ from the indices with $X_i<1$, and adjusting the constants as necessary proves the lemma.
\end{proof}

With the help of Lemma \ref{lem:longreveal}, we can now prove Theorem \ref{thm:one-sequence-all-eps}, restated below for convenience:

\onesequencealleps*

\begin{proof}
As in the proof of \cref{thm:anti-robinhood-expected-bounds}, we begin by considering a Poissonized table, i.e., a table with a Poisson random-variable number of keys. We will also, until the end of the proof, ignore the query-time contributions of keys with displacements greater than $n / 10$, since these must be handled with a separate argument.

It suffices to consider load parameters of the form
\[
    \epsilon_j=2^{-j},
    \qquad j=1,2,\ldots .
\]
Let
\[
    K_i=5^iK_{\min},
\]
where $K_{\min}$ is a sufficiently large constant.  For each $i$, define
\[
    A_i
    =
    K_i^{1/d}
    \cdot
    \frac1n
    \E\Bk*{\#\{x:\operatorname{disp}(x)>K_i\}},
\]
where the expectation is for the Poissonized table at expected load $1$. Note that $A_i$ is still a random variable, since it depends on the (still random) offset sequence being used. One can think of $A_i$ as capturing, for a given offset sequence at load factor $1$, the expected query cost spent traveling from displacement $K_{i - 1}$ to $K_i$.  Whenever $K_i\le n/10$, this quantity $A_i$ is determined by the Poisson arrivals in a length-$(2K_i)$ interval and therefore does not depend on the choice of $n$.  Thus $A_i$ may be viewed as a random variable depending only on the offset sequence, and not on $n$.

We will now argue that the sequence $A_i$ satisfies the conditions of Lemma~\ref{lem:scale-drift}, after division by a sufficiently large absolute constant.  First,
\[
    A_{i+1}\le5^{1/d}A_i\le3A_i,
\]
because $d\ge2$ and the quantity
$
    \frac1n
    \E\Bk*{\#\{x:\operatorname{disp}(x)>K\}}
$
is nonincreasing in $K$.  Second, if $A_i$ is larger than a sufficiently large constant, then the expected number of free slots in an interval of length $K_i$ in the $K_i$-reveal is
\[
    K_i\cdot \frac1n
    \E\Bk*{\#\{x:\operatorname{disp}(x)>K_i\}}
    =
    K_i^{(d-1)/d}A_i.
\]
Thus \cref{lem:negative-feedback-loop} applies at threshold $K_i$, giving
\[
    \E\Bk*{A_{i+1}\mid\text{offsets whose sampling intervals intersect }[0,K_i]}
    \le
    10^{-4}5^{1/d}A_i
    \le
    10^{-3}A_i.
\]
Therefore, by Lemma~\ref{lem:scale-drift}, for every $R\ge R_0$ and every $m\ge1$,
\begin{equation}
    \Pr\left[\sum_{i=1}^m A_i>Rm\right]\le O(R^{-4}m^{-2}).
    \label{eq:drift}
\end{equation}

Choose constants $R\ge R_0$ and $M\ge1$, both depending only on $\delta,d,\alpha$, large enough for the union bounds below; as a slight abuse of notation, we will sometimes include $R$ and $M$ in asymptotic notation to indicate that the hidden constants are independent of $R$ and $M$.  

Fix $j$, and consider the Poissonized table at load factor $1-\epsilon_j$.  We show that, except with probability $O(j^{-2})$ over the offset sequence, its expected query time is $O(j)$.  We first control displacements up to a transition scale using the drift event, then use Lemma~\ref{lem:longreveal} for larger displacements up to $n/10$, and finally handle displacements beyond $n/10$ by the polynomial-completion bound.

\paragraph{Displacements up to the transition scale.}
Since
\[
    \epsilon_j=2^{-j}
    \qquad\text{and}\qquad
    K_m^{1/d}=K_{\min}^{1/d}5^{m/d},
\]
let $m_j$ be the least positive integer such that
\begin{equation}
    K_{m_j}^{1/d}\ge MR\frac{j}{\epsilon_j}.
    \label{eq:mj-transition-scale}
\end{equation}
By minimality,
\[
    K_{m_j}^{1/d}
    =
    \Theta\left(MR\frac{j}{\epsilon_j}\right),
\]
and taking logarithms shows that $m_j=\Theta(j)$, where the implicit constants may depend on $M,R,d$, and $K_{\min}$, but not on $j$ or $n$.  Since $d\ge2$, choosing $M$ and $R$ sufficiently large also ensures
\[
    K_{m_j}
    \ge
    C_0\max\{\epsilon_j^{-2},\epsilon_j^{-d}\},
\]
as required by Lemma~\ref{lem:longreveal}.

Define
\[
    \mathcal C_j
    =
    \left\{
        \sum_{i=1}^{m_j}A_i\le Rm_j
    \right\}.
\]
The bound \eqref{eq:drift} gives
\begin{equation}
    \Pr[\neg\mathcal C_j]
    \le
    O(R^{-4}m_j^{-2})
    =
    O(R^{-4}j^{-2}).
    \label{eq:Cjprob}
\end{equation}

Fix an offset sequence for which $\mathcal C_j$ holds.  Since lowering the load can only decrease displacement tails, the expected query-time contribution from displacement scales at most $K_{m_j}$ in the table of load $1-\epsilon_j$ is bounded by the corresponding load-one quantities.  Hence it is
\begin{equation}
    O\left(1+\sum_{i=1}^{m_j}A_i\right)
    \le
    O(1+Rm_j)
    =
    O(Rj),
    \label{eq:Cjholds}
\end{equation}
where the expectation is over the keys and their hashes only.

\paragraph{Displacements from the transition scale to $n/10$.}
The event $\mathcal C_j$ also supplies the starting condition for Lemma~\ref{lem:longreveal}.  To see this, fix the offsets whose sampling intervals intersect $[0,K_{m_j}]$ and suppose that they satisfy $\mathcal C_j$.  The definition of $A_{m_j}$ is the load-one displacement tail for this fixed revealed prefix, multiplied by $K_{m_j}^{1/d}$.  Therefore, by monotonicity in the load,
\[
    \frac1n
    \E\Bk*{\#\{x:\operatorname{disp}(x)>K_{m_j}\}}
    \le
    \frac{A_{m_j}}{K_{m_j}^{1/d}}
    \le
    \frac{\sum_{i=1}^{m_j}A_i}{K_{m_j}^{1/d}}
    \le
    O\left(\frac{Rj}{K_{m_j}^{1/d}}\right)
    =
    O\left(\frac{\epsilon_j}{M}\right).
\]
Here the expectation is over the keys; offsets beyond the revealed prefix do not affect whether a displacement exceeds $K_{m_j}$.  Taking $M$ sufficiently large makes the last expression at most $\eta\epsilon_j$.

This bound is pointwise in every extension of the revealed prefix.  Since the displacement tail is nonincreasing in its threshold, after revealing any additional offsets the hypothesis of Lemma~\ref{lem:longreveal} continues to hold at every scale $K\ge K_{m_j}$.  Moreover, $K_{m_j}\ge\epsilon_j^{-d}$ implies
\[
    \epsilon_j^2K
    \ge
    \epsilon_jK^{1/d}
    \qquad\text{for every }K\ge K_{m_j},
\]
so the second term is the minimum in the exponent of Lemma~\ref{lem:longreveal}.  Consequently, for every $s\ge0$ such that $K_{m_j+s}\le n/10$, that lemma, followed by averaging over the revealed offsets conditioned on $\mathcal C_j$, gives
\[
    \E\Bk*{\frac1n
        \#\{x:\operatorname{disp}(x)>K_{m_j+s+1}\}
        \mid
        \mathcal C_j}
    \le
    \epsilon_j
    \exp\left(-\Omega\left(\epsilon_jK_{m_j+s}^{1/d}\right)\right).
\]
Because $K_{m_j+s}=5^sK_{m_j}$ and \eqref{eq:mj-transition-scale} holds,
\[
    \epsilon_jK_{m_j+s}^{1/d}
    =
    \Theta(5^{s/d}MRj).
\]
It follows that the conditional expected query-time contribution from these scales is at most
\[
\begin{aligned}
    &\sum_{\substack{s\ge0:\\K_{m_j+s}\le n/10}}
        K_{m_j+s+1}^{1/d}
        \E\Bk*{\frac1n
            \#\{x:\operatorname{disp}(x)>K_{m_j+s+1}\}
            \mid
            \mathcal C_j} \\
    &\qquad\le
    \sum_{s\ge0}
        O(5^{s/d}MRj)
        \exp\left(-\Omega(5^{s/d}MRj)\right)
    \le
    \exp(-\Omega(MRj)).
\end{aligned}
\]
The infinite sum on the right dominates the finite sum for every table size $n$, so this estimate holds simultaneously for all $n$ for which $K_{m_j}\le n/10$.  By Markov's inequality, conditioned on $\mathcal C_j$, the probability over the remaining random offsets that the expected contribution of these displacement scales exceeds $1$ is at most
\[
    \exp(-\Omega(MRj)).
\]

Combining this estimate with \eqref{eq:Cjprob} and \eqref{eq:Cjholds}, for a sufficiently large constant $C'$ we obtain
\begin{equation}
    \Pr\left[
        \E\Bk*{\text{truncated query time}\mid\text{offset sequence}}
        >
        C'Rj
    \right]
    \le
    O(R^{-4}j^{-2})+\exp(-\Omega(MRj)).
    \label{eq:prexpj}
\end{equation}
Here ``truncated'' means that only the contribution from displacements at most $n/10$ is counted.  The probability is over the random fixed-offset degree-$d$ offset sequence, while the inner expectation is over the keys and their hashes.

\paragraph{Displacements beyond $n/10$.}
Suppose, as in the theorem statement, that the table size $n$ satisfies
\begin{equation}
    \epsilon_j
    \ge
    C_{\mathrm{sim}}\log n\cdot\max\{n^{-1/2},n^{-1/d}\}.
    \label{eq:epsjlower}
\end{equation}
Under this assumption, $j=O(\log n)$.  Together with
$K_{m_j}^{1/d}=O(MRj/\epsilon_j)$, this shows that, after increasing
$C_{\mathrm{sim}}$ in terms of $M,R,d$, we have $K_{m_j}\le n/10$.
Thus the preceding short- and intermediate-displacement estimates apply to every admissible $n$.

Let $\tau_n$ be the completion threshold from \cref{lem:polynomial-completion}.  It has expectation $n^{O(1)}$ and, by its definition, is independent of the displacement tail at $n/10$, which is determined by the keys and by offset variables whose sampling intervals intersect $[0,n/10]$.  Moreover, every query has stopped by true displacement $\tau_n$, after at most $\tau_n+1$ probes.  Therefore, averaging over both the offset sequence and the keys, the expected contribution of keys with displacement greater than $n/10$ is at most
\begin{equation}
    n^B\cdot
    O\left(
        \epsilon_j
        \exp\left(
            -\Omega\left(
                \min\{\epsilon_j^2 n,\epsilon_j n^{1/d}\}
            \right)
        \right)
    \right).
    \label{eq:largedisp}
\end{equation}
The inequality \eqref{eq:epsjlower} implies
\[
    \min\{\epsilon_j^2 n,\epsilon_j n^{1/d}\}
    \ge
    \Omega(C_{\mathrm{sim}}\log n).
\]
Fix an integer $q\ge3$, to be chosen below.  Taking $C_{\mathrm{sim}}$ sufficiently large in terms of $q$ makes \eqref{eq:largedisp} at most $n^{-q}$.  Markov's inequality then shows that, for this pair $(j,n)$, the probability that the completion contribution exceeds $1$ is at most $n^{-q}$.  Every $n$ satisfying \eqref{eq:epsjlower} is at least $\epsilon_j^{-1}$, so a union bound over all admissible table sizes gives failure probability at most
\[
    \sum_{n\ge\epsilon_j^{-1}}n^{-q}
    =
    O(\epsilon_j^{q-1})
    =
    O(2^{-(q-1)j}).
\]
Combining this bound with \eqref{eq:prexpj}, the probability that there exists an admissible $n$ for which the expected query time at load $1-\epsilon_j$ exceeds $C'Rj+1$ is at most
\[
    O(R^{-4}j^{-2})
    +
    \exp(-\Omega(MRj))
    +
    O(2^{-(q-1)j}).
\]
Choose $R$, $M$, and $q$, and then $C_{\mathrm{sim}}$, sufficiently large in terms of $\delta,d,\alpha$.  Summing the preceding display over all $j\ge1$ then gives a total failure probability at most $\delta$.  Thus, with probability at least $1-\delta$, every dyadic slack $\epsilon_j$ and every $n$ satisfying \eqref{eq:epsjlower} have expected query time
\[
    O(j)=O(\log\epsilon_j^{-1}).
\]

To pass from dyadic to arbitrary slack, choose $j$ so that
\[
    \epsilon_j\le\epsilon<2\epsilon_j.
\]
The table at slack $\epsilon$ has no larger expected query time than the table at the smaller slack $\epsilon_j$.  After doubling the constant in the theorem's lower bound on $\epsilon$, the admissibility of $(\epsilon,n)$ implies the admissibility of $(\epsilon_j,n)$.  Since $j=\Theta(\log\epsilon^{-1})$, this proves the theorem for Poissonized tables.

Finally, since a table with $\operatorname{Poisson}((1 - \epsilon)n)$ keys has at least $\lfloor(1 - \epsilon)n\rfloor$ keys with probability $\Omega(1)$, we can conclude that any expected-query-time bound on the Poissonized table also applies (up to constant factors) to a table with exactly $\lfloor(1 - \epsilon)n\rfloor$ keys.  This completes the proof. 
\end{proof}

\subsection{Extending to Smoothed \texorpdfstring{Degree-$d$}{Degree-d} Probing}\label{subsec:anti-robinhood-smoothed-analysis}

Finally, we observe that the proof of \cref{thm:anti-robinhood-expected-bounds} extends to smoothed degree-$d$ probing. Recall that, in such a hash table, for each key $x$, each offset $k$ is included in \emph{that key's} offset sequence independently with probability $p(k) = 1/k^{1 - 1/d}$. Two local changes are needed. First, we use \cref{lem:displacement-and-probe-complexity} to conclude that, when bounding the expected time to query an element with displacement $k$, we may still bound it by $O(k^{1/d})$. Second, we replace \cref{lem:opportunity} with the following variation.  The final completion step is supplied directly by the smoothed clause of \cref{lem:polynomial-completion}.

\begin{lemma}[Smoothed opportunity bound]
  \label{lem:opportunitysmoothed}
  Fix a constant $d\ge1$ and consider a smoothed degree-$d$ probing hash table.  Fix a home position $a$ and integers $0\le K_-<K_+$.  Condition on the set of keys present, their home positions, their priorities, and every activation bit at an offset at most $K_-$.  Suppose that fewer than $\beta$ keys with home position $a$ are absent from the $K_-$-reveal.  Then, for every $S\ge0$,
  \[
      \Pr\left[
          \begin{array}{c}
          \text{some key with home position }a\text{ is absent from the }
          K_+\text{-reveal, and}\\
          \text{at least }S\text{ slots }z\in a+[K_-+1,K_+]\text{ are free in the }K_+\text{-reveal}
          \end{array}
      \right]
      \le
      \exp\left(O(\beta)-\Omega\left(S/K_+^{(d-1)/d}\right)\right).
  \]
\end{lemma}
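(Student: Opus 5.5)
We mirror the proof of \cref{lem:opportunity}, with the common offsets $r_j$ replaced by the activation bits of the keys with home position $a$, revealed one offset at a time. Process the offsets $k=K_-+1,K_-+2,\ldots,K_+$ in increasing order. When we reach offset $k$, all activation bits at offsets below $k$ (for every key) have been revealed, and so has the $(k-1)$-reveal. Call the slot $a+k$ a \defn{$k$-opportunity} if it is free in the $(k-1)$-reveal, and let $T_k\in\{0,1\}$ be its indicator. Here $T_k$ is determined before any activation bit at offset $k$ of a home-$a$ key is examined.

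\textbf{Placement step.} Suppose $T_k=1$ and some key with home $a$ is still absent just before offset $k$ is processed. Let $x$ be the highest-priority such key, under the anti-Robin Hood order among same-home keys.
\begin{itemize}
  \item If offset $k$ is activated for $x$, then $x$ is placed at $a+k$ in the $k$-reveal. Only keys reaching $a+k$ at displacement exactly $k$ can claim it in the $k$-reveal, and these are home-$a$ keys. Among absent home-$a$ keys with $k$ activated, the highest-priority one takes the slot.
  \item More simply: if \emph{any} absent home-$a$ key activates $k$, then some absent home-$a$ key gets placed.
  \item The event "$x$ activates $k$" has probability $p(k)=k^{-(d-1)/d}\ge K_+^{-(d-1)/d}$. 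It is independent of everything revealed so far, since $x$ is a fixed key, its bit at offset $k$ is fresh, and $x$ is identified using only earlier information.
\end{itemize}
Each opportunity therefore yields a coin flip with head probability at least $T_k/t$, where $t=K_+^{(d-1)/d}$. Every head reduces the number of absent home-$a$ keys by one.

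\textbf{Counting opportunities.}
\begin{itemize}
  \item Fewer than $\beta$ home-$a$ keys start out absent from the $K_-$-reveal. So if some key is still absent from the $K_+$-reveal, fewer than $\beta$ heads occurred. (The coins stop once no absent key remains, but then the event fails anyway.)
  \item Any slot $a+k$ with $k\in[K_-+1,K_+]$ that is free in the $K_+$-reveal was also free in the $(k-1)$-reveal, because reveals are monotone. Hence $\sum_k T_k\ge S$.
\end{itemize}
Applying \cref{lem:losinglotsofflips} with $t_k=T_k\in[0,1]\subseteq[0,t]$ then gives the bound $\exp(\beta-\Omega(S/t))$. To match the lemma's head probability of exactly $t_k/t$, we couple each flip to a thinning of the activation coin. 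This is valid because $t\ge1$ and $p(k)\ge 1/t$.

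\textbf{Main obstacle.} The delicate point is justifying that the $(k-1)$-reveal, and hence $T_k$ and the identity of $x$, are measurable with respect to activation bits at offsets $<k$ together with the conditioned data, while the offset-$k$ bits are not.

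This holds because of how anti-Robin Hood ordering builds the $k$-reveal: it is obtained from the $(k-1)$-reveal by adding keys at displacement exactly $k$. Priority is by smaller displacement, so these keys never evict anything present in the $(k-1)$-reveal. They fill free slots $z$ only through keys with $h(y)=z-k$ that are absent from the $(k-1)$-reveal and have $k$ activated.

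I would state this layered-construction fact explicitly, via history independence (\cref{lem:history-independence-of-rh-and-arh}), and then run the supermartingale argument of \cref{lem:losinglotsofflips} on the filtration indexed by $k$.
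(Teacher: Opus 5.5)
Your proposal is correct and takes essentially the same approach as the paper: reveal the offsets $k=K_-+1,\ldots,K_+$ in order, call $a+k$ an opportunity when it is free in the $(k-1)$-reveal, expose the offset-$k$ activation bit of the highest-priority absent home-$a$ key first, and feed these adaptive trials into \cref{lem:losinglotsofflips}. The differences are cosmetic: you normalize with $t=K_+^{(d-1)/d}$, $t_k=T_k$ and a thinning coupling where the paper takes $t=1$ and $t_k=p(k)$, and you explicitly justify, via the layered structure of anti-Robin Hood reveals, the measurability fact that the paper asserts in one line.
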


\begin{proof}
  Reveal the activation bits one offset at a time, for $k=K_-+1,K_-+2,\ldots,K_+$.  Just before the bits at offset $k$ are exposed, call $a+k$ a \defn{$k$-opportunity for $a$} if it is free in the $(k-1)$-reveal.  Whether $a+k$ is a $k$-opportunity is determined by the information already exposed.

  Whenever a $k$-opportunity exists and at least one key with home position $a$ is still absent, choose the highest-priority such key and reveal its activation bit at offset $k$ first.  This bit is independent of the past and equals $1$ with probability
  \[
      p(k)\ge cK_+^{-(d-1)/d}
  \]
  for a constant $c=c(d)>0$.  If the bit equals $1$, the chosen key is placed at $a+k$: before offset $k$ the slot is free, and at true offset $k$ only keys with home position $a$ can probe it.  We may then reveal the remaining activation bits at offset $k$ in any order.

  Initially fewer than $\beta$ same-home keys are absent.  Hence, if one of them is still absent from the $K_+$-reveal, fewer than $\beta$ successful opportunity activations occurred.  Conversely, every slot in $a+[K_-+1,K_+]$ that is free in the $K_+$-reveal was a $k$-opportunity at its corresponding offset $k$.  Thus the event in the lemma with $S$ final free slots entails at least $S$ adaptive trials, each with head probability at least $cK_+^{-(d-1)/d}$, but fewer than $\beta$ heads.  Applying \cref{lem:losinglotsofflips} with $t=1$ and with $t_k$ equal to the corresponding conditional head probability proves the claim.
\end{proof}

Using \cref{lem:opportunitysmoothed} in place of \cref{lem:opportunity}, using \cref{lem:displacement-and-probe-complexity} to relate displacements to query times, and using the independent completion threshold $\tau_n(x)$ from \cref{lem:polynomial-completion} for the final tail, the rest of the proof of \cref{thm:anti-robinhood-expected-bounds} extends to smoothed degree-$d$ probing.\todo{The concentration bound proposition 3.5 works only for fixed sequence. We also need to change that by pack the probe sequence into the randomness of the keys to apply McDiarmid's inequality.}
\begin{corollary}
\label{cor:smoothed-arh-upper-bounds}
The conclusions of \cref{thm:anti-robinhood-expected-bounds} also hold for smoothed degree-$d$ probing.
\end{corollary}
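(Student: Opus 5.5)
The plan is to rerun the proof of \cref{thm:anti-robinhood-expected-bounds} step by step, replacing the shared random offset sequence by independent per-key activation bits. Wherever the fixed-offset argument conditioned on or averaged over ``offsets whose sampling intervals intersect $[0,K]$'', we instead condition on, or average over, the activation bits of all keys at offsets at most $K$.

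The chain of ingredients would be redone in the following order.

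\begin{enumerate}
\item \emph{Concentration.} We reprove \cref{prop:concentration} for the smoothed model. Each key now becomes a point in a product probability space that encodes its home position, its priority $\gamma(x)$, and its activation bits up to offset $K$. This keeps the key set a Poisson process of independent marked points. \cref{claim:inserting-an-element} and \cref{lem:one-lipschitz} use only monotonicity of each key's own probe sequence and the anti-Robin Hood rule, so they hold verbatim. \cref{lem:McD-binomial} then applies with $m=\Theta(K)$, since the $K$-reveal of an interval of length $K$ depends only on keys hashing within distance $O(K)$.
\item \emph{Heavy homes.} \cref{lem:heavyhomes} goes through unchanged. Same-home keys no longer share a probe sequence, but after exposing the $K$-reveal the surviving same-home keys still form a thinned Poisson process of intensity at most $1$. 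This is the only property the proof used, so $Y_u(K)$ is again Poisson with mean at most $1$.
\item \emph{Negative feedback loop.} We redo \cref{lem:negative-feedback-loop} with \cref{lem:opportunitysmoothed} in place of \cref{lem:opportunity}, taking $K_-=2K$ and $K_+=4K$ so that $D-u\subseteq[K_-+1,K_+]$. The conditioning becomes the augmented $K$-data together with all activation bits at offsets at most $2K$. Keys hashing into $A$ cannot reach $D$ below offset $2K$, so any $D$-slot free in the $5K$-reveal is free in the $K_+$-reveal. The good-event and Markov steps are then identical. Here ``good $\mathcal R$'' refers to the activation bits at offsets in $(K,5K]$; these are independent of the $K$-reveal, so $\nu(\mathcal R)$ is defined as before.
\item \emph{Critical tail.} \cref{cor:critical-tail} follows by the same induction, giving $P(K)=O(K^{-\chi_d})$.
\item \emph{Long displacements.} \cref{lem:longreveal} follows the same way, again with $K_-=2K$ and $K_+=4K$, and with $R$ controlling how many slots of $D$ free in the $K$-reveal are filled by keys of displacement in $(K,2K]$.
\end{enumerate}

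Finally we convert displacement tails to query time. A query reaching displacement in $(K,5K]$ costs $O(K^{1/d})$ probes in expectation there. The probes used are activated offsets $k\le 5K$, whose expected count is $\sum_{k\le 5K}k^{-(d-1)/d}=O(K^{1/d})$. In the form used in the proof we need the conditional version: given that the key survives past displacement $K$, its activation bits on $(K,5K]$ are still essentially unbiased. This is exactly what \cref{lem:displacement-and-probe-complexity} packages, so we invoke it to bound expected query time by $O\bigl(1+\sum_i P^{\mathrm{fix}}_\theta(\Theta(i^d))\bigr)$ plus a completion term. The Poisson-to-fixed-load transfer and the unsuccessful-query reduction are unchanged.

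For displacements beyond $n/10$ we use the smoothed clause of \cref{lem:polynomial-completion}. Since $\tau_n(x)$ depends only on $x$'s activation bits at offsets at least $n$, it is independent of the event that the displacement exceeds $n/10$, giving the bound $n^B P_\theta(n/10)=O(1)$ under the hypothesis on $\epsilon$. The summation split at $i_0=\Theta(\max\{\epsilon^{-2/d},\epsilon^{-1}\})$ is then literally the same and yields $O(\epsilon^{1-2/d})$ for $d<2$ and $O(\log\epsilon^{-1})$ for $d\ge2$.

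I expect the main obstacle to be the deferred-randomness bookkeeping in the feedback lemma. In the fixed-offset case the future offsets were a single global object independent of everything exposed. Here we must expose the $K$-reveal, which depends on all keys' bits up to $K$, together with the counts $Y_u$, while leaving every surviving key's bits at offsets in $(2K,4K]$ unexposed. We must also make sure that the free-slot count $X_2$ and the concentration for $Y_2,X_2$ are evaluated with the correct conditioning.

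To handle this, I would first fix a deterministic exposure order: all bits at offsets at most $2K$, then the adaptive revelation of \cref{lem:opportunitysmoothed} on $(2K,4K]$. I would then verify that the event $\mathcal G$ is measurable with respect to information whose conditional law still makes the coin-flip domination valid. If this is awkward, the fallback is to apply concentration to $X_2$ only after averaging over bits, by treating them as key marks as in the first ingredient.
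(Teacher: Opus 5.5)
Your route matches the paper's. You rerun the proof of \cref{thm:anti-robinhood-expected-bounds} with \cref{lem:opportunitysmoothed} in place of \cref{lem:opportunity}, use \cref{lem:displacement-and-probe-complexity} to turn displacement tails into probe counts, and use the smoothed clause of \cref{lem:polynomial-completion} beyond $n/10$. Re-proving \cref{prop:concentration} by packing each key's activation bits into its mark is exactly the repair the authors themselves note is needed. Your choice $K_-=2K$, $K_+=4K$ works, since $D-u\subseteq[2K+1,4K-1]$ for every $u\in A$.

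\textbf{The heavy-homes step rests on a false claim.} In the smoothed model the surviving home-$u$ keys are not a thinned Poisson process, and $Y_u(K)$ is not Poisson. Same-home keys compete with one another: exactly one of them takes the home slot $u$, so if every slot of $u+[1,K]$ is held by a higher-priority contract, then $Y_u=(M-1)_+$ with $M\sim\operatorname{Poisson}(\lambda)$. They also interact through evicted keys of other homes. The priority-cutoff exposure in \cref{lem:heavyhomes} relied precisely on same-home keys sharing one probe sequence, so it does not transfer. What you actually need is the relative bound $\E[Y_u\ind\nolimits_{\{Y_u\ge\beta\}}]\le e^{-\Omega(\beta\log\beta)}\E[Y_u]$, and it still holds by a different argument:
\begin{itemize}
\item Condition on all keys with other homes, and add the $M\sim\operatorname{Poisson}(\lambda)$, $\lambda\le1$, home-$u$ keys one at a time.
\item The insertion-chain argument of \cref{lem:one-lipschitz}, applied with $A=\{u\}$, shows that each addition raises $Y_u$ by $0$ or $1$.
\item Hence, with $f_m=\Pr[Y_u\ge1\mid M=m]$, you get $\Pr[Y_u\ge\beta\mid M=m]\le f_{m-\beta+1}$.
\item The Poisson ratio $\Pr[M=m'+\beta-1]\,(m'+\beta-1)\le\Pr[M=m']/(\beta-2)!$ then gives $\E[Y_u\ind\nolimits_{\{Y_u\ge\beta\}}]\le\Pr[Y_u\ge1]/(\beta-2)!\le\E[Y_u]/(\beta-2)!$.
\end{itemize}

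\textbf{Drop the ``good $\mathcal R$'' step rather than reinterpreting it.} In the smoothed model there is no randomness outside the marked Poisson keys. Conditioning on the bits of all keys destroys the i.i.d.\ marked-Poisson structure that your concentration argument needs. Instead, use $\E[X_2]=\E[Y_2]$ together with unconditional concentration. This is your stated fallback, and it directly yields $Y_2,X_2\ge\delta\mu_1/2$ with probability at least $0.9$ under the contradiction hypothesis.
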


\section{Upper Bounds for Robin Hood Ordering}
\label{sec:robinhood}
\label{sec:rh-analysis}

In this section, we prove our upper bound for smoothed degree-$d$ probing with Robin Hood ordering (which we also show to be asymptotically tight in \cref{sec:lower-bounds}).  Our main result is the following theorem, which establishes the expected displacement and expected probe complexity of every fixed key in the table.

\begin{restatable}[Robin Hood displacement and probe complexity]{theorem}{rhexpectedbounds}
  \label{thm:rh-expected-bounds}
  Fix a constant $d\ge1$ and a parameter $0<\eps\le1/2$, and let $n$ be an integer satisfying $n\ge100\eps^{-10d}$.  In the smoothed degree-$d$ probing hash table using Robin Hood ordering with $n$ slots at load $1-\eps$, every fixed key $x$ in the table satisfies
  \[
    \E\Bk*{\disp(x)}=O(\eps^{-1}), \qquad \E\Bk*{\pc(x)}=O(\eps^{-\frac{1}{d}}),
  \]
  where the hidden constants depend only on $d$.
\end{restatable}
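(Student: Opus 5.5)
By \cref{lem:displacement-and-probe-complexity} and Jensen's inequality, $\E[\pc(x)]=\Theta(\E[\disp(x)^{1/d}]+1)\le O(\E[\disp(x)]^{1/d}+1)$. So it suffices to prove $\E[\disp(x)]=O(\eps^{-1})$, and the probe-complexity bound follows as $O(\eps^{-1/d})$. By \cref{lem:history-independence-of-rh-and-arh} and symmetry, we may take $x$ to be a uniformly random stored key in a Poissonized table with $\Poi((1-\eps)n)$ keys. We then transfer back to the fixed-load model by deleting keys, exactly as in the proof of \cref{thm:anti-robinhood-expected-bounds}. For Robin Hood, deleting keys can only decrease displacements; this monotonicity should follow from the same insertion-chain reasoning, with the inequalities of \cref{claim:inserting-an-element} now running in the Robin Hood direction. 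The goal is therefore a tail bound of the form $\Pr[\disp(x)>K]\le O(e^{-\Omega(\eps K)})$ at least for $K\ge C\eps^{-1}$. Summing this over dyadic scales $K$ gives $\E[\disp(x)]=O(\eps^{-1})$.

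The structural fact I would exploit is the Robin Hood analogue of the $K$-reveal. Under Robin Hood ordering, keys with large displacement have priority. The natural object is therefore the set $W_K$ of keys with displacement at least $K$ (the \emph{upper reveal}), together with the slots they occupy. Suppose a key has $\disp(x)>K$. Then for every $t\le K$, the slot $h(x)+t$ is either not activated for $x$, or is occupied by a key that reached it with displacement at least $t$. In particular, every activated slot of $x$ in $[h(x)+K/2,h(x)+K]$ holds a key with displacement at least $K/2$. This yields a run-like structure, as in linear probing. Consider the interval $J=[h(x),h(x)+K]$. A constant fraction of the activated probes of $x$ in the window $[K/2,K]$ (about $K^{1/d}$ of them) land on slots occupied by keys of displacement at least $K/2$. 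I would then argue that this forces the slots of $J$ to be ``mostly saturated'' by long-displacement keys. More precisely, the event $\disp(x)>K$ should imply that some interval $I\ni h(x)$ of length $\Theta(K)$ has at least $|I|(1-\eps/2)$ keys hashing into a window $I'\supseteq I$ of length $|I|+O(K)$. A Chernoff bound for Poisson counts makes this probability $e^{-\Omega(\eps^2K)}$ when $K\gg\eps^{-2}$, which is too weak by itself. So I would combine it with a local concentration statement, the Robin Hood version of \cref{prop:concentration}, where the 1-Lipschitz property comes from the monotone chain. That statement controls the number of free slots in $J$ in the complement of the upper reveal.

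The key step, and the main obstacle, is to get the linear-probing-type exponent $\eps K$ rather than the $\eps^2K$ coming from crude counting. My first attempt is a ``run'' argument. Define a slot to be \emph{$K$-saturated} if it is occupied by a key of displacement at least $K/2$. The saturated slots around $h(x)$ must then cover essentially all activated slots of $x$ in the window. Because the activation bits of $x$ are independent of the other keys (the smoothing), if a $\theta$-fraction of $J$ is unsaturated, then $x$ avoids all of those slots with probability $e^{-\Omega(\theta K^{1/d})}$, by the argument of \cref{lem:opportunitysmoothed} and \cref{lem:losinglotsofflips}. A key of displacement at least $K/2$ also consumes ``excess'' keys. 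Conservation should show that the expected number of free slots plus the number of short-displacement keys in a window is at least $\eps|I|$. This forces the saturated fraction to require a queueing-type excess whose probability decays like $e^{-\Omega(\eps K)}$, similar to the classical Robin Hood linear-probing analysis via a random walk with drift $-\eps$ (hashes arrive at rate $1-\eps$, slots are consumed at rate $1$). So I would reduce to a random walk $S_t=\#\{\text{keys hashed to }[h(x)-t,h(x)+K]\}-(t+K)$. I would then show that the event $\disp(x)>K$ (up to the $e^{-\Omega(K^{1/d}\theta)}$ opportunity failures) implies $\max_t S_t\ge -\eps K/4$. That event has probability $e^{-\Omega(\eps K)}$ because the walk has drift $-\eps$ and variance $O(1)$ per step.

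The delicate part is the interplay between these two error terms. When the unsaturated fraction is $\theta\approx\eps$, the opportunity bound gives only $e^{-\Omega(\eps K^{1/d})}$, which is weaker than $e^{-\Omega(\eps K)}$. However, it still sums to $O(\eps^{-1})$ as long as I only need $\E[\disp]$. Indeed, $\sum_K K\cdot\Pr[\disp>K]$ over dyadic $K$ with the bound $\min(1,e^{-\Omega(\eps K^{1/d})})$ would give $\eps^{-d}$, which is too large. So the argument must really use the random walk to show that the saturated run itself is $O(\eps^{-1})$ long in expectation. In other words, a key of displacement at least $K$ certifies a Robin Hood cluster of length at least $K$ around it. I expect this cluster-certification step, which transfers the classical linear-probing Robin Hood argument to smoothed probes, to be where most of the work lies. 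The condition $n\ge100\eps^{-10d}$ is then used only to ensure that wraparound effects and the completion tail of \cref{lem:polynomial-completion} contribute $O(1)$.
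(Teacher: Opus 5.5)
Your preliminary reductions match the paper: Jensen plus \cref{lem:displacement-and-probe-complexity} reduces everything to $\E[\disp(x)]=O(\eps^{-1})$, and Poissonization works because adding keys only pushes existing keys forward. Your blocking observation, combined with the independence of $x$'s own activation bits, is essentially the paper's \cref{claim:rh-displacement-tail}. In fact it can be sharpened. A key that blocks $x$ at slot $h(x)+t$ has displacement at least $t$, so its home is at most $h(x)$ and it is still unplaced after slot $h(x)$; that is, it belongs to the Robin Hood frontier at $h(x)$. So once that frontier has size at most a constant fraction of $K$, $x$ overshoots $K$ with probability only $e^{-\Omega(K^{1/d})}$. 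Everything therefore hinges on a tail bound for the frontier size, and that is where your proposal has a genuine gap.

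Your walk $S_t$ (hash arrivals minus slots) describes the frontier only for linear probing. In smoothed probing a slot can stay empty while the frontier is nonempty, because no frontier key activates it. Each such service failure adds one to the frontier without appearing in $S_t$. So $\disp(x)>K$ does not imply $\max_t S_t\ge-\eps K/4$, and your phrase ``up to opportunity failures'' hides exactly the quantity that must be bounded. To bound it you need a large frontier to be served with probability at least $1-\eps/2$ per slot. That requires the frontier keys to have moderate displacement, since $p(\cdot)$ decays with displacement, but bounding displacements is what you are trying to prove. The paper breaks this circularity in \cref{lem:rh-frontier-tail} by a simultaneous induction over slots on the frontier size and the frontier maximum. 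A small frontier at time $t-M$ bounds the maximum at time $t$, and a bounded maximum over a window of length $\Theta(k\eps^{-2})$ makes the size dominated by a walk with drift $-\eps/2$. Your proposal has no substitute for this step, which you yourself flag as where ``most of the work lies.''

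Three smaller points. First, your target tail $e^{-\Omega(\eps K)}$ is false for $d>1$ and large $K$. With constant probability the slots at offsets $0$ and $1$ hold keys with priority over $x$, and $x$ activates none of the offsets $2,\dots,K$, an event of probability $e^{-O(K^{1/d})}$. The stretched-exponential tail $e^{-\Omega((\eps K)^{1/d})}$ that the paper proves is what you should aim for, and it suffices. Second, a Robin Hood insertion chain is not monotone: an evictor has larger displacement than the key it evicts, which then moves forward again, so a threshold can be crossed many times. Hence the one-crossing Lipschitz argument behind \cref{prop:concentration} does not transfer as you suggest. Third, the condition on $n$ alone does not dispose of wraparound in a cyclic Robin Hood table. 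The paper needs the coupling of \cref{lem:reduction-to-right-unbounded} plus a separate crude tail bound beyond one full cycle.
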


As in the previous section, this also gives us an immediate corollary for amortized expected insertion time:
\begin{corollary}
  \label{cor:rh-amortized-expected-insertion-time}
  With the same setup as in Theorem \ref{thm:rh-expected-bounds}, the amortized expected insertion time at load factor $1 - \epsilon$ is
  $
    O(\eps^{-1}),
  $ where again the hidden constant depends only on $d$.
\end{corollary}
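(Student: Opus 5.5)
The plan is to derive the corollary from \cref{thm:rh-expected-bounds} through the reduction in \cref{sec:prelim}, as the paper does for \cref{cor:anti-robinhood-amortized-expected-insertion-time}. I expect this reduction to give only $O(\eps^{-(1+1/d)})$ rather than the stated $O(\eps^{-1})$, so the plan below ends at a gap rather than a complete proof.

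\textbf{Step 1: reduce to final probe complexities.} By \cref{lem:history-independence-of-rh-and-arh}, the table after the first $(1-\delta)n$ insertions does not depend on insertion order. By the footnote in \cref{sec:prelim}, the total work of these insertions equals $\sum_x \pc(x)$ over the stored keys in that final table. For Robin Hood I will count probe complexity rather than query time, since this is the notion \cref{thm:rh-expected-bounds} controls.

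\textbf{Step 2: bound the total.} Linearity of expectation and $\E[\pc(x)]=O(\delta^{-1/d})$ from \cref{thm:rh-expected-bounds} give expected total work $O(n\,\delta^{-1/d})$. This requires $n\ge 100\delta^{-10d}$. For smaller $\delta$, I would run the same bound at the smallest admissible slack and use monotonicity of probe complexity in the key set.

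\textbf{Step 3: match the amortized definition.} Apply \eqref{eq:amortized} with $g(\eps)=C\eps^{-1/d}$. Comparing the integral with $\sum_{i<(1-\delta)n} f(1-i/n)$ is routine; one only needs $f$ monotone so the Riemann sum dominates the integral.

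\textbf{Main obstacle.} Step 3 yields $f(\eps)=-g'(\eps)=\Theta(\eps^{-1-1/d})$, not $O(\eps^{-1})$.
\begin{itemize}
\item With $f=C\eps^{-1}$, the right-hand side of the definition is only about $Cn\ln(1/\delta)$.
\item The left-hand side is $\Theta(n\,\delta^{-1/d})$, and \cref{cor:rh-displacement-lower} shows this cannot be improved.
\item For small $\delta$ we have $\delta^{-1/d}\gg\ln(1/\delta)$, so the inequality fails.
\end{itemize}
Hence the stated bound does not follow from probe complexity, and by that lower bound it appears unattainable for probe-counting insertion time. The introduction and \cref{tab:results} both give the Robin Hood amortized insertion time as $\Theta(\eps^{-1-1/d})$. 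So the intended statement is most likely $O(\eps^{-(1+1/d)})$, which is exactly what Steps 1 to 3 prove.

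I considered measuring cost by displacement, using $\E[\disp(x)]=O(\eps^{-1})$. That gives $f=\Theta(\eps^{-2})$, which is worse. I also checked whether the $\eps^{-1}$ might be a per-key query-time or displacement bound rather than an insertion bound. The only $\eps^{-1}$ quantity in \cref{thm:rh-expected-bounds} is $\E[\disp(x)]$, a per-key expectation, which does not match an amortized insertion claim.
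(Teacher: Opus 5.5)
Your route is the paper's own. The corollary is given with no separate proof, as an immediate consequence of differentiating the probe-complexity bound $O(\eps^{-1/d})$ of \cref{thm:rh-expected-bounds} via \eqref{eq:amortized}, and that yields $O(\eps^{-(1+1/d)})$. You are right that the printed $O(\eps^{-1})$ is a typo for this bound. It is inconsistent with the introduction and \cref{tab:results}, and your comparison is valid: \cref{cor:rh-displacement-lower} forces total work $\Omega(n\delta^{-1/d})$, while $f(\eps)=O(\eps^{-1})$ would allow only $O(n\log\delta^{-1})$, so the statement is false as written.

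One thing to fix is the side remark in Step 2 about $\delta$ below the admissible range. Adding keys can only increase probe complexities, so a bound proved at a larger slack does not carry over to the fuller table. Drop that remark and restrict $\delta$ to the range of \cref{thm:rh-expected-bounds}, as the paper implicitly does.
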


The section is organized into three subsections.  In \cref{subsec:rh-switching-to-poissonized-right-unbounded}, we set up two reductions that simplify the model under analysis: a Poissonization step that decouples the number of keys hashed to each slot, and a comparison step that replaces the standard cyclic table with a \emph{right-unbounded} table in which keys near slot $n$ continue probing past the end of the array instead of wrapping around.  Together these reductions let us work entirely in the more tractable Poissonized right-unbounded model.

In \cref{subsec:rh-frontier}, we analyze the left-to-right insertion process in that model via the \emph{Robin Hood frontier process}, which tracks the multiset of displacements of keys that have been hashed before the current slot but not yet been placed.  The main estimate of the subsection (\cref{lem:rh-frontier-tail}) is a stretched-exponential tail bound on both the size and the maximum element of the frontier; it is proved by an induction that couples a large frontier to a random walk with negative drift, while simultaneously controlling the largest surviving displacement.

Finally, in \cref{subsec:rh-displacement-upper}, we use the frontier tail bound to control the displacement of a distinguished key in the right-unbounded model (\cref{claim:rh-displacement-tail}), and then transfer the resulting tail estimate back to the standard cyclic table via the reductions of \cref{subsec:rh-switching-to-poissonized-right-unbounded}.  Integrating the tail yields the bounds on $\E\Bk*{\disp(x)}$ and $\E\Bk*{\pc(x)}$ stated in \cref{thm:rh-expected-bounds}.

\subsection{Reductions to the Poissonized Right-Unbounded Model}
\label{subsec:rh-switching-to-poissonized-right-unbounded}

According to \cref{lem:history-independence-of-rh-and-arh}, the hash table is history-independent, so the order of insertion does not matter. Therefore, one way to characterize the set of keys is to group them by their hash values, and represent the set of keys to be inserted as a vector of random variables $\bk{X_1, \ldots, X_n}$, where $X_i$ is the number of keys that are hashed to the slot $i$. The random variables $X_1, \ldots, X_n$ are not independent a priori, since they sum up to the total number of keys $(1 - \eps)n$. It is convenient to make them independent by \defn{Poissonizing} the number of keys.

\begin{definition}[Poissonized Set of Keys]
  \label{def:poissonized-set-of-keys}
  Fix a parameter $0<\eps\le1/2$ and a distinguished key $x$ whose displacement and probe complexity we wish to analyze. In the Poissonized model, the number of keys $N$ besides $x$ to be inserted is a Poisson random variable with mean $(1-\eps)n$. Equivalently, the number of other keys hashed to each slot $i$ is an independent Poisson random variable $X_i \sim \Poi(1-\eps)$. We obtain the displacement of $x$ by inserting all $N+1$ keys; if $N+1>n$, we define its displacement to be $n$.
\end{definition}

The following lemma directly reduces the standard fixed-load model to the Poissonized model.
\begin{lemma}[Poissonization reduction]
  \label{cor:reduction-to-poissonized}
  Fix a parameter $0<\eps\le1/2$.  Use $\disp^{(\textup{Poi})}(x)$ and $\disp^{(\textup{std})}(x)$ to denote the displacement of $x$ in the Poissonized and standard models, respectively.  For every integer $n\ge10\eps^{-2}$ and every $L>0$,
  \[
      \Pr\Bk*{\disp^{(\textup{std})}(x)\ge L}
      \le
      3\Pr\Bk*{\disp^{(\textup{Poi})}(x)\ge L}.
  \]
\end{lemma}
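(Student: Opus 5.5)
The plan is to couple the two models through monotonicity in the key set, in the same way the transfer step in the proof of \cref{thm:anti-robinhood-expected-bounds} was done. In the standard model the table holds $x$ together with $m \defeq (1-\eps)n-1$ other keys, whose hashes are i.i.d.\ uniform. In the Poissonized model $x$ is joined by $N\sim\Poi((1-\eps)n)$ other keys, also with i.i.d.\ uniform hashes, activation sequences and tie-breaking values. The key step is the observation that, under Robin Hood ordering, adding keys to the key set can only increase the displacement of a fixed key $x$. This follows from history independence (\cref{lem:history-independence-of-rh-and-arh}): we may insert the extra keys last. In the insertion chain of a newly inserted key, every key that is moved only advances forward in its own probe sequence, so no key's displacement ever decreases. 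In particular, $\disp(x)$ is monotone nondecreasing under adding keys, as long as the total number of keys is at most $n$. Once $N+1>n$, the Poissonized displacement is defined to be $n$, which dominates every standard displacement because displacements in the standard table are less than $n$. So monotonicity holds in that case too.

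Next, realize both models on one probability space. Draw an infinite i.i.d.\ sequence of keys $y_1,y_2,\ldots$ and let the standard table use $\{x,y_1,\ldots,y_m\}$ and the Poissonized table use $\{x,y_1,\ldots,y_N\}$. Because $N$ is independent of the sequence, this is a valid coupling of both models. On the event $\{N\ge m\}$ the standard key set is contained in the Poissonized one, so $\disp^{(\textup{std})}(x)\le\disp^{(\textup{Poi})}(x)$. Since $N$ is independent of the key sequence,
\[
\Pr\Bk*{\disp^{(\textup{Poi})}(x)\ge L}\ge \Pr\Bk*{N\ge m}\cdot\Pr\Bk*{\disp^{(\textup{std})}(x)\ge L}.
\]
Strictly speaking, the two events are not independent: the left-hand event depends on $N$. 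I would therefore argue via conditioning on $N=k$ for each $k\ge m$. For each such $k$, $\Pr[\disp^{(\textup{Poi})}(x)\ge L\mid N=k]\ge\Pr[\disp^{(\textup{std})}(x)\ge L]$, by monotonicity applied under the coupling with the first $k$ keys. Summing over $k\ge m$ gives the display above.

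It then remains to show $\Pr[N\ge m]\ge 1/3$. Here $m=\lceil(1-\eps)n\rceil-1$ is at most the Poisson mean $\lambda=(1-\eps)n$ (handling any floor or ceiling in the definition of $m$). The median of a Poisson variable is at least $\lambda-\ln 2$, so $\Pr[N\ge \lfloor\lambda\rfloor -1]\ge 1/2\ge1/3$. Alternatively, a normal-approximation or Chernoff-type bound works, since $\lambda\ge(1/2)\cdot10\eps^{-2}\ge 20$. The hypothesis $n\ge10\eps^{-2}$ is only needed to make $\lambda$ large enough for whichever clean Poisson lower-tail bound is used, and constant $3$ leaves ample room.

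The main obstacle is the monotonicity claim for Robin Hood, specifically confirming that insertion chains never move a resident key backward. Under Robin Hood ordering an evicted key $y$ continues from its next probe, so its displacement only grows, and the newly placed key takes a slot without displacing anyone to an earlier position. Together with history independence this gives monotonicity for the fixed key $x$. Some care is needed with the paper's convention for displacement under wraparound (footnote~\ref{fn:displacement-ambiguity}): displacement is measured by offsets consumed, which is monotone along the probe sequence regardless of wraparound.
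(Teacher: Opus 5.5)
Your monotone coupling matches the paper's: you insert the extra keys last and note that insertion chains only push keys forward, where the paper says extra proposals only make slots more selective. Conditioning on $N=k$ is also fine. The gap is in the case $N+1>n$. You claim that standard-model displacements are always less than $n$. This is false under the true-offset convention that you and the paper use. For $d>1$ the activation probabilities $p(k)$ are below $1$, so a key can skip every free slot on its first pass and wrap around. The paper's crude long-tail bound \eqref{eq:standard-crude-long-tail} in the proof of \cref{cor:rh-displacement-upper} is precisely about $T\ge n$. Likewise, \cref{lem:reduction-to-right-unbounded} obtains $\disp^{(\textup{std})}(x)\le n$ only on a good event. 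Consequently, for $L>n$ and $k\ge n$ you have $\Pr[\disp^{(\textup{Poi})}(x)\ge L\mid N=k]=0$, while $\Pr[\disp^{(\textup{std})}(x)\ge L]$ can be positive. Summing over all $k\ge m$ therefore does not give the inequality for large $L$, and the lemma claims it for every $L>0$.

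The repair is to sum only over $m\le k\le n-1$. This needs the two-sided estimate $\Pr[m\le N\le n-1]\ge 1/3$. That is where the hypothesis $n\ge10\eps^{-2}$ is really used: not in the lower tail (your median bound $\Pr[N\ge\lfloor\lambda\rfloor]\ge 1/2$ needs no hypothesis), but in the upper tail. Since $n-\lambda=\eps n$, a Bernstein bound gives $\Pr[N\ge n]\le\exp(-\eps^2n/2)\le e^{-5}$, and hence $\Pr[m\le N\le n-1]\ge 1/2-e^{-5}>1/3$. The paper does exactly this with the event $\mathcal E=\{m\le N\le\lambda+\eps n/2\}$. On $\mathcal E$ we have $N+1\le n$, so the Poissonized displacement is a genuine displacement rather than the convention value $n$, and the monotone coupling applies for every $L$.
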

\begin{proof}
  Write $\lambda=(1-\eps)n$, $m=\floor*{\lambda}$, and $\Delta=\eps n/2$, and let
  \[
      \mathcal E=\{m\le N\le\lambda+\Delta\}.
  \]
  The assumption $n\ge10\eps^{-2}$ gives
  $\Delta/\sqrt{\lambda}\ge\sqrt{5/2}$.  A standard uniform local central-limit estimate for the Poisson distribution therefore gives
  \[
      \Pr[\mathcal E]\ge\frac13,
  \]
  with the integer endpoints rounded as above.

  We next couple the two hash tables on $\mathcal E$.  The standard instance contains at most $m$ keys other than $x$.  Fix a value of $N$ in this event, generate the $N$ other keys and all their probe sequences, and use the first required number of these keys for the standard instance.  The Poissonized instance is obtained from the standard instance by adding keys.  In the deferred-acceptance view from the proof of \cref{lem:history-independence-of-rh-and-arh}, adding proposals can only make slots more selective; it cannot move an old key to an earlier proposal in its probe sequence.  Thus, under this coupling,
  \[
      \disp^{(\textup{Poi})}(x)
      \ge
      \disp^{(\textup{std})}(x).
  \]
  The same domination holds after averaging over $N$ conditional on $\mathcal E$.  Therefore
  \[
  \begin{aligned}
      \Pr\Bk*{\disp^{(\textup{Poi})}(x)\ge L}
      &\ge
      \Pr[\mathcal E]\,
      \Pr\Bk*{\disp^{(\textup{Poi})}(x)\ge L\mid\mathcal E}\\
      &\ge
      \frac13\Pr\Bk*{\disp^{(\textup{std})}(x)\ge L},
  \end{aligned}
  \]
  which proves the claim.
\end{proof}


In smoothed degree-$d$ probing, the array inside is cyclic, in the sense that when a key reaches its right end (i.e., the slot $n$), it will wrap around to continue probing from the left end (i.e., the slot $1$). This is not convenient for the analysis.
A more natural way is to extend the array infinitely to the right, and when a key reaches the slot $n$, it will keep probing to the right until it finds a free slot. We call this the \defn{right-unbounded model}.

Below, we want to bridge the displacement of a key $x$ in the right-unbounded model and the standard model for \emph{Robin Hood} smoothed degree-$d$ probing. Before changing to the right-unbounded model, since the slots are symmetric in the standard model, the expected displacement of $x$ will keep the same even if we \emph{condition on} the event that $x$ is hashed to a specific slot, say $3n/4$. The reduction below to the right-unbounded model will make use of this condition that $x$ is hashed to the slot $3n/4$, which is not close to both the left end and the right end.

\begin{lemma}
  \label{lem:reduction-to-right-unbounded}
  Fix a constant $d\ge1$, a parameter $0<\eps\le1/2$, and an integer $n\ge100\eps^{-10d}$.  Use $\disp^{(\textup{Ru})}(x)$ and $\disp^{(\textup{std})}(x)$ to denote the displacement random variables of a key $x$ in the right-unbounded and standard models, respectively.  Condition on $h(x)=3n/4$ in a Robin Hood smoothed degree-$d$ probing table of load $1-\eps$.  There is a coupling between the two models and an event $\mathcal G$ such that, on $\mathcal G$ and on the following two events,
  \begin{itemize}
    \item In the right-unbounded model, there are at most $\bk*{1 - \frac{2\eps}{3}}\cdot \frac{n}{2}$ keys hashed to the slots $\Bk*{1, \frac{n}{2}}$. 
    \item In the right-unbounded model, there are at most $\frac{\eps \cdot n}{6}$ keys traveling beyond the slot $n$,
  \end{itemize}
  we have
  \begin{align*}
    \disp^{(\textup{std})}(x) \le 4 \cdot \disp^{(\textup{Ru})}(x).
  \end{align*}
  Moreover, conditioned on the two events above,
  \[
    \Pr\Bk*{\mathcal G} \ge 1 - \exp\bk*{-\Omega\bk*{\eps n^{\frac{1}{d}}}}.
  \]
\end{lemma}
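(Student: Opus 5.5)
Couple the two models by the natural identification: the same keys, the same hashes, and the same activation bits. In the standard (cyclic) model we use offsets modulo $n$; in the right-unbounded model slots continue past $n$. The idea is to show that, on the two stated events and on a good event $\mathcal G$, the cyclic table restricted to the window $[n/2, n]$ behaves exactly like the right-unbounded table. Then $x$, hashed to $3n/4$, is placed identically in both, or at least at no larger displacement in the cyclic model.

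The key source of discrepancy is keys that wrap around in the cyclic model. These are keys that in the right-unbounded model travel beyond slot $n$; in the cyclic model they land in slots $[1,\cdot]$ and then push keys hashed near the left end further right. This interaction can cascade toward $3n/4$ only through a chain of displaced keys traversing $[1,n/2]$.

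\textbf{Step 1.} Using the first event (at most $(1-\frac{2\eps}{3})\frac n2$ keys hash to $[1,n/2]$) and the second (at most $\frac{\eps n}{6}$ wrapped keys), the number of keys that must be accommodated in $[1,n/2]$ is at most $(1-\frac{\eps}{3})\frac n2$. This leaves $\Omega(\eps n)$ free slots there.

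\textbf{Step 2.} Define $\mathcal G$ as the event that the Robin Hood displacement chains originating in the wrap-around region are absorbed before slot $n/2$. More concretely, $\mathcal G$ says that no key hashed to $[1,n/2]$, and no wrapped key, acquires displacement exceeding $n/4$ in the cyclic model. To bound $\Pr[\neg\mathcal G]$, note that such a key traverses a stretch of length $\ge n/4$ containing $\Omega(\eps n)$ free slots. Its activation probability there is $\ge (n)^{-(d-1)/d}$, so by a Lemma~\ref{lem:losinglotsofflips}-style argument it misses all of them with probability $\exp(-\Omega(\eps n\cdot n^{-(d-1)/d}))=\exp(-\Omega(\eps n^{1/d}))$. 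A union bound over the $\le n$ keys costs a factor $n$, which is absorbed because $n\ge 100\eps^{-10d}$ makes $\eps n^{1/d}\gg\log n$.

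\textbf{Step 3.} On $\mathcal G$, all keys occupying $[n/2,n]$ in the cyclic model whose hashes lie in $[n/2, n]$ see the same competitors as in the right-unbounded model, up to extra keys of displacement $\ge n/4$. Under Robin Hood, extra keys with large displacement can outrank $x$. This is where the factor $4$ enters: either $x$'s cyclic displacement is at most its right-unbounded one, or $x$ is displaced by such a long-travelling key. In the latter case, since those keys would need displacement $\ge n/4$ at a slot near $x$, we compare via $\disp^{(\textup{std})}(x)\le 4\disp^{(\textup{Ru})}(x)$ using the monotone deferred-acceptance structure from Lemma~\ref{lem:history-independence-of-rh-and-arh}. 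Adding proposals only moves $x$ later, by at most the span in which those proposals are active.

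\textbf{Main obstacle.} The main obstacle is Step 3: making rigorous that on $\mathcal G$ the two executions agree on the relevant window, given Robin Hood's priority to far-travelled keys. I would first try to prove exact equality of the two tables restricted to slots in $[n/2, 3n/2)$ (modulo $n$) on $\mathcal G$, treating the residual factor $4$ as slack.
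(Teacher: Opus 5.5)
\textbf{Gap 1: your good event is about single keys, but the danger is the kicking chain.} In Step 2 you replace ``wrap-around chains are absorbed before $n/2$'' by ``no key hashed to $[1,n/2]$ and no wrapped key ends with displacement above $n/4$.'' The second event does not imply the first. Under Robin Hood a chain advances by relays: the wrapped key evicts a low-displacement key a few slots on, that key evicts another, and so on. Such a chain can sweep through all of $[1,n/2]$ and into $[n/2,n]$ while each individual key moves only a few slots.

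Nothing in your $\mathcal G\cap\mathcal E_1\cap\mathcal E_2$ rules this out. Once the chain enters $[n/2,n]$ it can change the contents near $3n/4$ and evict $x$. For example, $x$ may sit at displacement $1$ in the right-unbounded table and be pushed several slots further in the cyclic one. So the deterministic implication $\disp^{(\textup{std})}(x)\le4\disp^{(\textup{Ru})}(x)$ on the good events fails, and this is why your Step 3 cannot be closed.

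\textbf{Gap 2: the probability bound in Step 2 does not go through.} The $\Omega(\eps n)$ free slots from Step 1 are a global count in $[1,n/2]$ at no specified time. The stretch $[h(y),h(y)+n/4]$ of a given key need not contain any of them; they could all lie to the left of $h(y)$. Moreover, whether a slot is still free when $y$ passes depends on the process.

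\textbf{What is missing: control the chain itself at a fixed moment.} The paper does this as follows.
\begin{itemize}
\item By history independence, insert keys in increasing order of home slot. Then every key hashed to $[1,n/2]$ is present before any wrap-around chain starts.
\item Prove by induction over the chains that each one stops inside $[1,n/2]$. If all earlier chains did, the two tables coincide on $[n/2,n]$, so the keys wrapping in the cyclic table are among the right-unbounded overflowed keys.
\item By $\mathcal E_1,\mathcal E_2$, at most $n/2-\eps n/6$ slots of $[1,n/2]$ are occupied when the chain enters at slot $1$. Hence it must pass all of the at least $\eps n/6$ empty ones.
\item Define $\mathcal G$ on fresh Bernoulli$(p(3n/2))$ blocks, monotonically coupled to the activation bit of whichever key is current at each slot. $\mathcal G$ asks each chain to hit one of those empty slots. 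This gives failure probability $(1-p(3n/2))^{\eps n/6}$ per chain, and a union bound over at most $\eps n/6$ chains finishes.
\end{itemize}
Your same-activation-bits coupling could also be made to work by deferred decisions, but the fresh blocks make the bound immediate.

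\textbf{The factor $4$ is a simple case split, not a priority argument.} If $x$ does not overflow in the right-unbounded model, its displacement is identical in both tables. If it does overflow, then $\disp^{(\textup{Ru})}(x)\ge n/4$, while $\disp^{(\textup{std})}(x)\le n$ because $x$'s chain stops in $[1,n/2]$.

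\textbf{A smaller error.} The window $[n/2,3n/2)$ mod $n$ you propose is the whole table. The tables cannot agree on $[1,n/2]$, where the wrapped keys sit.
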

\begin{proof}
  By \cref{lem:history-independence-of-rh-and-arh}, we may insert the keys in increasing order of their hash values.  Write $\mathcal E_1$ and $\mathcal E_2$ for the first and second events in the statement.

  \paragraph{The coupling.}
  Up to the moment when a key first crosses slot $n$, the standard cyclic table and the right-unbounded table use the same activation bits.  In the right-unbounded table, once a key crosses slot $n$, we mark it as \defn{overflowed} and do not expose its later probes; those later probes cannot affect any slot in $[n]$ in the right-unbounded model.

  We next specify the fresh randomness used by the standard table after a key wraps around to the left.  Let
  \[
    p_0 \defeq p(3n/2) = (3n/2)^{-\frac{d-1}{d}}.
  \]
  For each wrap-around kicking chain, draw a fresh block $(B_1,\ldots,B_{n/2})$ of independent Bernoulli$(p_0)$ random variables, one for each slot in $[1,n/2]$.  When the current key is tested at a slot $j\le n/2$ during this chain, its cyclic displacement is at most $3n/2$, so its true activation probability at that occurrence is at least $p_0$.  We use the monotone coupling in which $B_j=1$ forces the true activation bit for the current key at slot $j$ to be $1$; if $B_j=0$, the remaining activation randomness is filled in so that the marginal activation probability is correct.

  \paragraph{The good event.}
  During a wrap-around chain, let $E\subseteq[1,n/2]$ be the set of empty slots in $[1,n/2]$ just before the chain starts.  The set $E$ may depend on the right-unbounded process and on the pool blocks used by earlier wrap-around chains.  Define $\mathcal G$ to be the event that, for every wrap-around chain for which $\abs{E}\ge \frac{\eps n}{6}$, the fresh pool block for that chain has $B_j=1$ for at least one $j\in E$.

  We bound the probability of $\mathcal G$ conditional on $\mathcal E_1\cap \mathcal E_2$.  Fix any realization of the right-unbounded process satisfying $\mathcal E_1\cap\mathcal E_2$, and expose the wrap-around pool blocks adaptively.  Conditional on the past and on the current set $E$ with $\abs{E}\ge \frac{\eps n}{6}$, the probability that the current chain fails the requirement in $\mathcal G$ is at most
  \[
    (1-p_0)^{\frac{\eps n}{6}}
      \le
      \exp\bk*{-\Omega\bk*{\eps n^{\frac{1}{d}}}}.
  \]
  On $\mathcal E_2$, there are at most $\frac{\eps n}{6}$ wrap-around chains.  Hence a sequential union bound gives
  \[
    \Pr\Bk*{\mathcal G^c \mid \mathcal E_1\cap\mathcal E_2}
      \le
      \frac{\eps n}{6}\exp\bk*{-\Omega\bk*{\eps n^{\frac{1}{d}}}}
      \le
      \exp\bk*{-\Omega\bk*{\eps n^{\frac{1}{d}}}},
  \]
  where the last inequality uses $n\ge100\eps^{-10d}$.  This proves the stated conditional probability bound.

  \paragraph{Termination before $n/2$.}
  We now work deterministically on $\mathcal E_1\cap\mathcal E_2\cap\mathcal G$. We claim that every wrap-around kicking chain in the standard table terminates before reaching slot $n/2$.  Proceed by induction over the increasing-home insertion order.  Suppose all previous wrap-around chains terminated inside $[1,n/2]$.  Then the part of the standard table in $[n/2,n]$ is identical to the right-unbounded table, and every key that newly wraps around in the standard table is one of the overflowed keys in the right-unbounded table.

  At the start of the current wrap-around chain, every occupied slot in $[1,n/2]$ contains either a key whose hash location lies in $[1,n/2]$, or one of the keys that overflowed past slot $n$ in the right-unbounded table.  By $\mathcal E_1$ and $\mathcal E_2$, the number of such occupied slots is at most
  \[
    \bk*{1-\frac{2\eps}{3}}\frac n2 + \frac{\eps n}{6} = \frac n2 - \frac{\eps n}{6}.
  \]
  Thus there are at least $\frac{\eps n}{6}$ empty slots in $[1,n/2]$ at the start of the chain.  Since $\mathcal G$ occurs, the fresh pool block has $B_j=1$ at one of these empty slots.  That slot is activated for the current key when the chain reaches it, so the chain stops there.  This proves the induction claim.

  \paragraph{Comparing the displacement of $x$.}
  If $x$ does not overflow in the right-unbounded table, then all slots relevant to the placement of $x$ lie in $[n/2,n]$, where the coupled tables are identical.  Hence $\disp^{(\textup{std})}(x)=\disp^{(\textup{Ru})}(x)$.  If $x$ overflows in the right-unbounded table, then $h(x)=3n/4$ implies $\disp^{(\textup{Ru})}(x)\ge \frac n4$.  By the termination claim, the standard wrap-around chain of $x$ terminates before slot $n/2$, and in particular $\disp^{(\textup{std})}(x)\le n$.  Therefore
  \[
    \disp^{(\textup{std})}(x) \le n \le 4\disp^{(\textup{Ru})}(x).
  \]
  The desired comparison follows in both cases.
\end{proof}

Combining with Poissonization, it is helpful to study Robin Hood smoothed degree-$d$ probing in the following Poissonized right-unbounded model, where we have an array which is infinite to the right, labelled by all positive integers. There are $\Poi(1-\eps)$ keys ``hashed'' to each slot $i$ for $i \in \Z^+$, and they are inserted in increasing order of the slot index. The analysis of Robin Hood smoothed degree-$d$ probing in \cref{subsec:rh-frontier} just starts from this Poissonized right-unbounded model.



\subsection{The Robin Hood Frontier Process}
\label{subsec:rh-frontier}

To prove \cref{thm:rh-expected-bounds}, we begin in the \emph{Poissonized right-unbounded model}.  Thus the array is processed from left to right, there is no wraparound, and the number of keys with home slot $i$ is an independent random variable $X_i\sim\Poi(1-\eps)$. By history-independence, this left-to-right insertion order gives the same final state as any other insertion order in this right-unbounded instance.

To study the left-to-right insertion process in the Poissonized right-unbounded model, we keep track of the set of frontier of keys that have already passed the current slot but have not yet been placed.  After slot $t$ has been processed, let $S_t$ be the multiset of displacements of these unsettled keys.  Equivalently, the keys represented by $S_t$ are precisely the keys with home at most $t$ that are still unplaced.  The transition from $S_t$ to $S_{t+1}$ is:
\begin{enumerate}[label=\textup{(\roman*)}]
  \item increase every displacement in $S_t$ by one;
  \item add $X_{t+1}$ new copies of $0$;
  \item among the keys in the resulting multiset whose current offset is activated, delete one with largest displacement, if such a key exists.
\end{enumerate}
The deleted key is placed into slot $t+1$. We assume initially $S_0 = \emptyset$, and we use the convention that $\max(S_t)=0$ when $S_t=\emptyset$, where $\max(S_t)$ denotes the largest element from $S_t$.

The following lemma is the main estimate in the right-unbounded model.  Its two conclusions reinforce one another: a small frontier at an earlier time helps rule out very old surviving keys, while a bound on the largest displacement gives a uniform service rate whenever the frontier is large.

\begin{lemma}[Tail bound for the Robin Hood frontier]
  \label{lem:rh-frontier-tail}
  Fix a constant $d\ge1$ and a parameter $0<\eps\le1/2$.  There is a sufficiently large constant $k_0=k_0(d)$ such that, for every slot $t$ and every $k\ge k_0$ in the Poissonized right-unbounded model,
  \begin{align}
    \Pr\Bk*{\abs{S_t} > k\eps^{-1}}
      &\le \exp\bk*{-k^{\frac{1}{d}}/100}, \label{eq:rh-frontier-size-tail} \\
    \Pr\Bk*{\max(S_t) > 4k\eps^{-1}}
      &\le \exp\bk*{-k^{\frac{1}{d}}/100}. \label{eq:rh-frontier-max-tail}
  \end{align}
\end{lemma}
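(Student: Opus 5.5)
We prove both tail bounds simultaneously by strong induction on $t$, using the joint structure the lemma suggests: the max bound at earlier times gives a uniform service rate, and the size bound at earlier times rules out old survivors. Throughout, write $M\defeq 4k\eps^{-1}$ and note that whenever every key in the frontier has displacement at most $M$, each key's current offset is activated independently with probability at least $p(M)=M^{-(d-1)/d}$. Consequently, if $\abs{S}\ge m$, the probability that no key is served in the next step is at most $(1-p(M))^m\le\exp(-m M^{-(d-1)/d})$.

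\textbf{Size bound \eqref{eq:rh-frontier-size-tail}.} Suppose $\abs{S_t}>k\eps^{-1}$. Let $s<t$ be the last time with $\abs{S_s}\le \frac{k}{2}\eps^{-1}$; such a time exists because $S_0=\emptyset$. On $(s,t]$ the frontier size is a walk with increments $X_{u}-\ind[\text{some key served}]$.

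We first treat the case where, during this window, $\max(S_u)\le M$ at every step. Then each step serves a key except with probability at most $\exp(-\tfrac k2\eps^{-1}M^{-(d-1)/d})=\exp(-\Omega(k^{1/d}\eps^{-1/d}))\le\exp(-\Omega(k^{1/d}))$. So the walk has drift at most $-\eps+\exp(-\Omega(k^{1/d}))\le-\eps/2$ for $k\ge k_0$, with $\Poi$ increments. It must climb at least $\frac k2\eps^{-1}$ over the window $(s,t]$. A Chernoff bound for a walk of length $\ell=t-s$ with drift $-\eps/2$ gives probability at most $\exp(-\Omega(\max(k\eps^{-1}\cdot\eps,\ \eps^2\ell)))$ for this excursion. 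Summing over $\ell$ yields a bound of the form $\eps^{-2}e^{-\Omega(k)}$.

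The factor $\eps^{-2}$ is the first technical wrinkle. I would handle it by chopping the window into blocks of length $k\eps^{-2}$ and noting that the walk must gain $\Omega(k\eps^{-1})$ within the final block. This still leaves $e^{-\Omega(k)}\le e^{-k^{1/d}/100}$ after union bounding, with margin, since $k^{1/d}\le k$.

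It remains to handle the case where $\max(S_u)>M$ at some step of the window. We union-bound using the inductive max bound \eqref{eq:rh-frontier-max-tail} at each earlier time. Doing this naively over all $u\in(s,t]$ loses a factor $\ell$, so instead we only need the max bound at block endpoints; within a block of length $k\eps^{-1}$, the max grows by at most the block length. This is why we use $4k$ rather than $2k$ in $M$, and the loss is a polynomial factor in $k$ that is absorbed by the stretched exponential.

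\textbf{Max bound \eqref{eq:rh-frontier-max-tail}.} Suppose some key in $S_t$ has displacement above $4k\eps^{-1}$, with home $t_0=t-4k\eps^{-1}$ or earlier. Under Robin Hood ordering, that key is the highest priority among all unsettled keys during $[t_0+2k\eps^{-1},t]$, except when competing against keys that are even older. So at every step $u$ in that window in which its own offset is activated, either it or an older key is placed.

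By the size bound at time $t_0+2k\eps^{-1}$ (applied inductively), with probability at least $1-e^{-k^{1/d}/100}$ there are at most $k\eps^{-1}$ keys older than or equal to it. Over $2k\eps^{-1}$ steps, the key's offsets are activated with probability at least $(4k\eps^{-1})^{-(d-1)/d}$ each, independently. The expected number of activations is therefore $\Omega((k\eps^{-1})^{1/d})$, and each activation removes one older-or-equal key. Surviving requires fewer than $k\eps^{-1}$ activations among keys in this priority class. That is only a constant-fraction deficit, so this argument alone would not give the bound.

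The cleaner route is to show that the number of unsettled keys with displacement above $2k\eps^{-1}$ decreases at rate $\Omega(\min(1,|\cdot|\,M^{-(d-1)/d}))$ per step while gaining nothing, since no new keys enter this old class. This class has size at most $\abs{S}\le k\eps^{-1}$, so it empties within $2k\eps^{-1}$ steps except with probability $\exp(-\Omega((k\eps^{-1})^{1/d}))$, by \cref{lem:losinglotsofflips}.

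\textbf{Main obstacle.} The hardest step is making the induction close with the exact constant $1/100$ in the exponent. The two bounds feed each other, and the union bounds over earlier times cost polynomial factors in $k$ and $\eps^{-1}$. The $\eps^{-1}$ factors are the dangerous ones, since the target bound $e^{-k^{1/d}/100}$ is independent of $\eps$. I would try first to arrange every union bound to range over only $O(k)$ block endpoints, each of length $\Theta(k\eps^{-1})$, so that only $\mathrm{poly}(k)$ losses arise. Those are absorbed by choosing $k_0$ large, using that the per-event bounds are actually $\exp(-\Omega(k^{1/d}\eps^{-1/d}))$ or $e^{-\Omega(k)}$, both much stronger than needed.
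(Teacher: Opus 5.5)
\textbf{There is a genuine gap in the size bound, at exactly the step you flag as the ``main obstacle''; your proposed fix does not close it.}

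The walk comparison is valid only at steps where the frontier's maximum displacement is controlled. For a drift of $-\eps/2$ to pull the frontier down by $\Theta(k\eps^{-1})$, you need a deterministic window of length $\Theta(k\eps^{-2})$. The paper takes $L=\lceil 10k\eps^{-2}\rceil$ and anchors it with the inductive size bound at parameter $2k$ at its start; your random window $(s,t]$ has no a priori length bound, so you need such an anchor anyway.

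Covering that window with blocks of length $k\eps^{-1}$ needs $\Theta(\eps^{-1})$ block endpoints, not $O(k)$. Indeed, $O(k)$ such blocks span only $O(k^2\eps^{-1})$ steps, which is too short once $\eps<1/k$. At each endpoint the only input available is the inductive max bound, and that bound is $\eps$-free, namely $\exp(-\Theta(k^{1/d}))$. The stronger bounds $\exp(-\Omega((k\eps^{-1})^{1/d}))$ you appeal to hold only for activation failures \emph{given} a good frontier. The unconditional max tail is dominated by the chance that the frontier is large. So you get a total of $\eps^{-1}e^{-\Theta(k^{1/d})}$, which is not at most $e^{-k^{1/d}/100}$ uniformly in $\eps$, and the induction does not close.

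The missing idea is to run the drift comparison under a much weaker threshold, $B=8k\eps^{-1}\log^{2d}(k\eps^{-1})$. Concretely, invoke the inductive max bound with the inflated parameter $2k\log^{2d}(k\eps^{-1})$.
\begin{itemize}
\item Its failure probability is $\exp(-(2k)^{1/d}\log^{2}(k\eps^{-1})/100)$, which survives a union bound over all $O(k\eps^{-2})$ steps of the window.
\item A frontier of size at least $\frac12 k\eps^{-1}$ with all displacements at most $B$ still leaves a slot unserved with probability at most $\exp(-(k\eps^{-1})^{1/(2d)})\le\eps/2$. The larger threshold costs only a polylogarithmic factor in the activation probability.
\end{itemize}

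\textbf{Secondary steps that also need repair.}
\begin{itemize}
\item Summing the excursion probability over window lengths, or over terminal-segment lengths inside a block, leaves a factor of order $\eps^{-2}$. The paper avoids this with Ville's inequality (\cref{thm:ville-maximal-inequality}) applied to the time-reversed exponential supermartingale built from the comparison increments, with $\theta=\eps/16$. This bounds all terminal segments simultaneously by $e^{-k/48}$.
\item Your drift step must keep the $\eps^{-1/d}$. The claim $\exp(-\Omega(k^{1/d}))\le\eps/2$ is false uniformly in $\eps$, whereas $\exp(-\Omega((k/\eps)^{1/d}))\le\eps/2$ does hold for $k_0$ large.
\item In the max bound, \cref{lem:losinglotsofflips} is not strong enough. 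Its bound $\exp(\beta-\Omega(S/t))$ helps only when the total head probability is a large multiple of the $\beta\approx k\eps^{-1}$ heads required, and you have only about $2k\eps^{-1}$ trials.
\item Also in the max bound, you use the size bound at parameter $k$. That already costs the full budget $e^{-k^{1/d}/100}$, so the induction cannot close there either.
\end{itemize}

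The clean max argument is per key. Fix $x\in S_{t-M}$ with $M=\lceil 4k\eps^{-1}\rceil$, and control $|S_{t-M}|\le 2k\eps^{-1}$ and $\max(S_{t-M})\le 8k\eps^{-1}$ by induction at parameter $2k$; each costs at most $\frac14e^{-k^{1/d}/100}$. The slots taken by keys of higher priority than $x$ are determined independently of $x$'s activation bits, and there are at most $2k\eps^{-1}$ of them. So $x$ must miss all of at least $2k\eps^{-1}$ remaining slots, which has probability at most $(1-p(16k\eps^{-1}))^{2k\eps^{-1}}=\exp(-\Omega((k\eps^{-1})^{1/d}))$. The union bound over the $2k\eps^{-1}$ keys is absorbed because this exponent grows with $\eps^{-1}$.
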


\begin{proof}
  Fix $t$ and $k\ge k_0$.  We prove \eqref{eq:rh-frontier-size-tail} and \eqref{eq:rh-frontier-max-tail} simultaneously by induction on $t$, assuming both estimates at every earlier slot and for every parameter at least $k_0$.  The base case $t=0$ is immediate.

  The proof follows the two mechanisms described above.  First, the induction hypothesis bounds the number and displacement of the keys present sufficiently far in the past; each such key then has many independent opportunities to be placed, which controls the current maximum displacement.  Second, once all frontier displacements are moderate, a large frontier supplies a probe at almost every slot.  Since the mean number of new arrivals per slot is only $1-\eps$, the frontier size is then dominated by a random walk with drift $-\eps/2$.

  \paragraph{Step 1: bounding the maximum displacement.}
  Let
  \[
    M\defeq \ceil*{4k\eps^{-1}}.
  \]
  If $t<M$, then every key in $S_t$ has displacement less than $M$, so the desired maximum bound is immediate.  Suppose henceforth that $t\ge M$.  Since displacements are integers, the event $\max(S_t)>4k\eps^{-1}$ implies $\max(S_t)\ge M$.  Any key witnessing this event was already present in $S_{t-M}$ and survived all of the next $M$ slots.  By the induction hypothesis with parameter $2k$,
  \begin{align}
    \Pr\Bk*{\abs{S_{t-M}}>2k\eps^{-1}}
      &\le \exp\bk*{-(2k)^{\frac{1}{d}}/100}, \label{eq:max-proof-good-size-revised}\\
    \Pr\Bk*{\max(S_{t-M})>8k\eps^{-1}}
      &\le \exp\bk*{-(2k)^{\frac{1}{d}}/100}. \label{eq:max-proof-good-max-revised}
  \end{align}
  For $k_0$ sufficiently large, each right-hand side is at most $\frac14\exp(-k^{\frac{1}{d}}/100)$.  Condition on the complementary good event and on the whole multiset $S_{t-M}$.

  Fix a key $x$ represented in $S_{t-M}$.  Throughout the interval $\{t-M+1,\ldots,t\}$, the relative Robin Hood priority of two keys already in $S_{t-M}$ is fixed: the key with the smaller home slot has the larger displacement.  Moreover, every key arriving after time $t-M$ has lower priority than $x$.  Expose the probes in this interval of the keys in $S_{t-M}$ that have higher priority than $x$.  Each such key can occupy at most one slot, so these keys block at most $\abs{S_{t-M}}\le2k\eps^{-1}$ of the $M$ slots.  Consequently, at least
  \[
    M-2k\eps^{-1}\ge2k\eps^{-1}
  \]
  slots remain at which $x$ would be placed if it probed: each such slot is either empty or occupied only by a lower-priority key.

  At every one of these available slots, the displacement of $x$ is at most
  \[
    \max(S_{t-M})+M \le 16k\eps^{-1},
  \]
  after increasing $k_0$ to absorb rounding.  The activation decisions of $x$ at the available slots are independent of the probes exposed above, and each has probability at least $p(16k\eps^{-1})$.  Hence, conditionally on the good event and on the higher-priority probes,
  \begin{align*}
    \Pr\Bk*{x\in S_t}
      &\le \bk*{1-p(16k\eps^{-1})}^{2k\eps^{-1}} \\
      &\le \exp\bk*{-(k\eps^{-1})^{\frac{1}{d}}/16}.
  \end{align*}
  A union bound over the at most $2k\eps^{-1}$ keys in $S_{t-M}$ makes this conditional probability at most
  \[
      \frac12\exp\bk*{-k^{\frac{1}{d}}/100}
  \]
  when $k_0$ is sufficiently large.  Combining this estimate with \eqref{eq:max-proof-good-size-revised} and \eqref{eq:max-proof-good-max-revised} proves \eqref{eq:rh-frontier-max-tail} at time $t$.

  \paragraph{Step 2: coupling the size process to a negative-drift walk.}
  It remains to prove \eqref{eq:rh-frontier-size-tail} at time $t$.  Put
  \[
    L\defeq \ceil*{10k\eps^{-2}},
    \qquad
    B\defeq 8k\eps^{-1}\log^{2d}(k\eps^{-1}).
  \]
  The guiding idea is that, while $\max(S_{i-1})\le B$, a frontier of size at least $\frac12k\eps^{-1}$ contains many keys, each with a nonnegligible chance to probe slot $i$.  With probability at least $1-\eps/2$, some old key is therefore placed.  This service rate exceeds the mean arrival rate $1-\eps$ by $\eps/2$, creating negative drift.

  For each slot $i$, let $Y_i\in\{0,1\}$ indicate that some key is placed into slot $i$ during the transition from $S_{i-1}$ to $S_i$.  Then
  \begin{equation}
    \label{eq:frontier-size-transition-revised}
    \abs{S_i}=\abs{S_{i-1}}+X_i-Y_i.
  \end{equation}

  Conditional on the past, let $q_i$ be the probability that none of the keys already in $S_{i-1}$ probes slot $i$.  Draw independent random variables $U_i\sim\operatorname{Unif}[0,1]$, also independent of the arrivals, and use a monotone coupling in which the event that at least one old key probes slot $i$ is realized as $\{U_i\le1-q_i\}$.  Define
  \[
    \tilde Y_i\defeq \ind\nolimits_{\BK*{U_i\le1-\eps/2}},
    \qquad
    Z_i\defeq \ind\nolimits_{\BK*{q_i>\eps/2}},
    \qquad
    W_i\defeq X_i-\tilde Y_i.
  \]
  If $q_i\le\eps/2$, then $1-q_i\ge1-\eps/2$, so $\tilde Y_i=1$ forces an old key to probe and therefore forces $Y_i=1$.  If $q_i>\eps/2$, subtracting $Z_i=1$ makes the same lower bound trivial.  Thus, in all cases,
  \begin{equation}
    \label{eq:Yi-coupling-revised}
    Y_i\ge \tilde Y_i-Z_i.
  \end{equation}
  The variables $W_i$ are independent and satisfy
  \begin{equation}
    \label{eq:Wi-mean-var-revised}
    \E\Bk*{W_i}=-\eps/2,
    \qquad
    \Var(W_i)\le2.
  \end{equation}

  We next verify that the error term $Z_i$ vanishes whenever the frontier is large and its maximum displacement is controlled.  If
  \[
    \abs{S_{i-1}}\ge \frac12 k\eps^{-1}
    \qquad\text{and}\qquad
    \max(S_{i-1})\le B,
  \]
  then every old key probes slot $i$ with probability at least $p(B+1)$, independently of the others.  Consequently,
  \begin{align*}
    q_i
      &\le \bk*{1-p(B+1)}^{\frac12 k\eps^{-1}} \\
      &\le \exp\bk*{-\frac12 k\eps^{-1}(B+1)^{-\frac{d-1}{d}}} \\
      &\le \exp\bk*{-(k\eps^{-1})^{\frac{1}{2d}}}
       \le \eps/2,
  \end{align*}
  where the last two inequalities hold for $k_0$ sufficiently large as a function of $d$.  Hence $Z_i=0$, and \eqref{eq:frontier-size-transition-revised}--\eqref{eq:Yi-coupling-revised} give
  \[
    \abs{S_i}-\abs{S_{i-1}}\le W_i.
  \]
  Thus $W_i$ is the promised negative-drift comparison increment whenever the frontier is both large and low-maximum.

  \paragraph{Step 3: reducing a large frontier to four exceptional events.}
  Let
  \[
    a\defeq\max\{0,t-L\},
    \qquad
    T\defeq t-a=\min\{t,L\}.
  \]
  We claim that $\abs{S_t}>k\eps^{-1}$ implies at least one of the following events:
  \begin{align*}
    \mathcal F_1 &\defeq
      \BK*{\abs{S_a}>2k\eps^{-1}}, \\
    \mathcal F_2 &\defeq
      \BK*{\exists i\in\BK{a,\ldots,t-1}:
        \max(S_i)>B}, \\
    \mathcal F_3 &\defeq
      \BK*{T=L\ \text{ and }\
        \sum_{i=t-L+1}^{t} W_i>-k\eps^{-1}}, \\
    \mathcal F_4 &\defeq
      \BK*{\exists m\in\{1,\ldots,T\}:
        \sum_{i=t-m+1}^{t} W_i>\frac13 k\eps^{-1}}.
  \end{align*}
  The four events have a direct interpretation: the frontier is already too large at the start of the window, the maximum-displacement control fails somewhere in the window, the comparison walk does not realize its expected decrease over a full window, or a terminal segment of that walk has an unusually large upward excursion.

  Suppose none of $\mathcal F_1,\ldots,\mathcal F_4$ occurs.  First assume that
  \[
    \abs{S_i}\ge \frac12k\eps^{-1}
    \qquad\text{for every }i\in\{a,\ldots,t-1\}.
  \]
  Since $S_0=\emptyset$, this case forces $a>0$ and hence $T=L$.  The failure of $\mathcal F_2$ makes the comparison from Step 2 valid on every transition from $a+1$ through $t$.  Therefore, using the failures of $\mathcal F_1$ and $\mathcal F_3$,
  \[
    \abs{S_t}
      \le \abs{S_a}+\sum_{i=a+1}^{t}W_i
      \le 2k\eps^{-1}-k\eps^{-1}
      =k\eps^{-1}.
  \]

  Otherwise, let $s$ be the last index in $\{a,\ldots,t-1\}$ with $\abs{S_s}\le\frac12k\eps^{-1}$.  For each transition $i\in\{s+2,\ldots,t\}$, the previous frontier is larger than $\frac12k\eps^{-1}$; the failure of $\mathcal F_2$ again makes the comparison from Step 2 valid.  At the first transition after $s$, we use only
  \[
    X_{s+1}=W_{s+1}+\tilde Y_{s+1}\le W_{s+1}+1.
  \]
  Since $t-s\le T$, the failure of $\mathcal F_4$ now gives
  \begin{align*}
    \abs{S_t}
      &\le \abs{S_s}+1+\sum_{i=s+1}^{t}W_i \\
      &\le \frac12k\eps^{-1}+1+\frac13k\eps^{-1}
       \le k\eps^{-1},
  \end{align*}
  where the last inequality holds for $k\eps^{-1}\ge6$.  This proves the claimed implication.  In particular, it formalizes the drift intuition: a frontier that remains large for a whole window can stay large only if the walk fails to drift downward, while a frontier that became large more recently requires a large upward fluctuation after its last visit below the half-threshold.

  \paragraph{Step 4: bounding the four exceptional events.}
  If $a=0$, then $\mathcal F_1$ is impossible.  Otherwise, the induction hypothesis with parameter $2k$ gives
  \[
    \Pr\Bk*{\mathcal F_1}
      \le \exp\bk*{-(2k)^{\frac{1}{d}}/100}
      \le \frac14\exp\bk*{-k^{\frac{1}{d}}/100}.
  \]
  For $\mathcal F_2$, apply the induction hypothesis for the maximum tail at each fixed $i<t$ with parameter
  \[
    2k\log^{2d}(k\eps^{-1}).
  \]
  The corresponding maximum threshold is $B$.  Hence
  \[
    \Pr\Bk*{\max(S_i)>B}
      \le \exp\bk*{-(2k)^{\frac{1}{d}}\log^2(k\eps^{-1})/100}.
  \]
  A union bound over at most $L$ indices gives
  \[
    \Pr\Bk*{\mathcal F_2}
      \le \frac14\exp\bk*{-k^{\frac{1}{d}}/100}
  \]
  by the choice of $k_0$.

  If $T<L$, then $\mathcal F_3$ is empty.  If $T=L$, the sum in $\mathcal F_3$ has mean at most $-5k\eps^{-1}$ and variance at most $2L$.  The standard Chernoff bound for a centered Poisson-minus-Bernoulli sum gives, for any interval $I$ and any $u\ge0$,
  \begin{equation}
    \label{eq:poisson-bernoulli-chernoff-revised}
    \Pr\Bk*{\sum_{i\in I}(W_i-\E\Bk*{W_i})>u}
      \le \exp\bk*{-\Omega\bk*{\min\{u^2/\abs{I},u\}}}.
  \end{equation}
  Taking $I=\{t-L+1,\ldots,t\}$ and $u\ge4k\eps^{-1}$ yields
  \[
    \Pr\Bk*{\mathcal F_3}
      \le \exp(-\Omega(k))
      \le \frac14\exp\bk*{-k^{\frac{1}{d}}/100}.
  \]

  Finally, we bound $\mathcal F_4$ by an exponential supermartingale.  Let $\theta\defeq\eps/16$ and let $W=X-\tilde Y$, where $X\sim\Poi(1-\eps)$ and $\tilde Y\sim\operatorname{Bernoulli}(1-\eps/2)$ are independent.  Then
  \begin{align*}
    \log \E\Bk*{e^{\theta W}}
      &=(1-\eps)(e^\theta-1)
        +\log\bk*{\eps/2+(1-\eps/2)e^{-\theta}} \\
      &\le -\eps\theta/2+2\theta^2
       \le0.
  \end{align*}
  For $m=0,1,\ldots,T$, define
  \[
    M_m\defeq
      \exp\bk*{\theta\sum_{j=1}^{m}W_{t-j+1}}
  \]
  and let $\mathcal H_m=\sigma(W_t,W_{t-1},\ldots,W_{t-m+1})$.  Independence of the $W_i$'s and the preceding moment bound show that $(M_m)_{m=0}^{T}$ is a nonnegative supermartingale with respect to $(\mathcal H_m)_{m=0}^{T}$, with $M_0=1$. 
  The bound of $\mathcal F_4$ follows from the standard maximal inequality below.
\begin{theorem}[Ville's maximal inequality \cite{ville1939etude}]
  \label{thm:ville-maximal-inequality}
  Let $(M_s)_{s=0}^{T}$ be a nonnegative supermartingale with respect to a filtration $(\mathcal F_s)_{s=0}^{T}$.  Then, for every $a>0$,
  \[
    \Pr\Bk*{\max_{0\le s\le T} M_s\ge a}
      \le \frac{\E\Bk*{M_0}}{a}.
  \]
\end{theorem}
Applying \cref{thm:ville-maximal-inequality} at level $\exp(\theta k\eps^{-1}/3)$ gives
  \[
    \Pr\Bk*{\mathcal F_4}
      \le \exp\bk*{-\frac{\theta k\eps^{-1}}{3}}
      =\exp(-k/48)
      \le \frac14\exp\bk*{-k^{\frac{1}{d}}/100}.
  \]

  A union bound over $\mathcal F_1,\ldots,\mathcal F_4$ proves \eqref{eq:rh-frontier-size-tail} at time $t$, completing the induction and the proof of the lemma.
\end{proof}

\subsection{Displacement Upper Bound}
\label{subsec:rh-displacement-upper}

We continue in the Poissonized right-unbounded model of the previous subsection. The frontier estimate gives a tail bound for a distinguished key.  Only after this claim do we transfer the result back to the standard cyclic table.

\begin{claim}[Right-unbounded displacement tail]
  \label{claim:rh-displacement-tail}
  Fix a constant $d\ge1$ and a parameter $0<\eps\le1/2$.  Let $x$ be a distinguished key with home slot $t$ in the Poissonized right-unbounded model.  For every $k\ge k_0(d)$,
  \[
    \Pr\Bk*{\disp^{(\textup{Ru})}(x)>2k\eps^{-1}} \le 2\exp\bk*{-k^{\frac{1}{d}}/100}.
  \]
\end{claim}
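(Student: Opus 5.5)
We will reuse Step 1 of the proof of \cref{lem:rh-frontier-tail} almost verbatim, applied to the distinguished key $x$ instead of an arbitrary frontier key. Put $M\defeq\ceil*{2k\eps^{-1}}$. The event $\disp^{(\textup{Ru})}(x)>2k\eps^{-1}$ means that $x$ is still unplaced after the $M$ slots $t,t+1,\ldots,t+M-1$, i.e.\ $x$ survives in the frontier until time $t+M-1$. At the moment $x$ arrives, the frontier consists of $S_{t-1}$ plus the other keys with home $t$. Keys that arrive later have lower Robin Hood priority than $x$ throughout the window. Among the keys with home $t$, the ones that beat $x$ on tie-breaking number at most $X_t\sim\Poi(1-\eps)$.

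\textbf{Step 1: the good event.} Apply \cref{lem:rh-frontier-tail} to $S_{t-1}$ with parameter $k/2$ (taking $k_0$ larger if needed so that $k/2\ge k_0$). This gives
\[
  \Pr\Bk*{\abs{S_{t-1}}>\tfrac{k}{2}\eps^{-1}}\le\exp\bk*{-(k/2)^{1/d}/100}.
\]
We also need $X_t\le \frac{k}{4}\eps^{-1}$, and a Poisson tail bounds the failure probability of this by $\exp(-\Omega(k\eps^{-1}\log k))$. Together these two conditions mean that the set of keys with priority over $x$ during the window has size at most $\frac34k\eps^{-1}$. We do not need the maximum bound \eqref{eq:rh-frontier-max-tail} for $x$ itself, because $x$ starts at displacement $0$: its displacement stays at most $M$ throughout the window, so every activation probability is at least $p(M)\ge p(4k\eps^{-1})$.

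\textbf{Step 2: the blocking argument.} Condition on the good event and expose only the probes of the higher-priority keys. These probes are independent of the activation bits of $x$. Each higher-priority key occupies at most one slot, so at least
\[
  M-\tfrac34k\eps^{-1}\ge k\eps^{-1}
\]
slots in the window are either empty or held by a lower-priority key when $x$ reaches them. At each such slot $x$ would be placed if its offset is activated, independently with probability at least $p(4k\eps^{-1})$. Hence the conditional survival probability is at most
\[
  \bk*{1-p(4k\eps^{-1})}^{k\eps^{-1}}\le\exp\bk*{-(k\eps^{-1})^{1/d}/4^{(d-1)/d}}\le\exp\bk*{-k^{1/d}/4}.
\]
Adding the failure probabilities from Step 1, and enlarging $k_0$ so that the terms $\exp(-(k/2)^{1/d}/100)$ and $\exp(-k^{1/d}/4)$ are absorbed, gives the claimed bound $2\exp(-k^{1/d}/100)$.

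\textbf{Main obstacle.} The delicate point is the constant $1/100$. Since $(k/2)^{1/d}<k^{1/d}$, the factor-$2$ room in the statement is not quite enough to absorb the Step 1 term if we use parameter $k/2$ there. We therefore first try to apply \cref{lem:rh-frontier-tail} with parameter $k$ and window length $M\approx2k\eps^{-1}$, leaving $k\eps^{-1}$ free slots (this requires folding $X_t$ into the $S_{t-1}$ count). If the constants still do not close, the fallback is to rerun the lemma's induction with a slightly weaker exponent, which changes only constants.
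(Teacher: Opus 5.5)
Your Steps 1 and 2 are sound except for the final bookkeeping, and that bookkeeping fails, as you suspected. Since $\exp(-(k/2)^{1/d}/100)/\exp(-k^{1/d}/100)=\exp\bigl(k^{1/d}(1-2^{-1/d})/100\bigr)$, the Step 1 term alone exceeds $2\exp(-k^{1/d}/100)$ once $k^{1/d}(1-2^{-1/d})>100\ln 2$. No choice of $k_0$ can absorb it, so the argument as written proves only a weaker exponent. Your fallback of rerunning the induction with a weaker exponent would also prove a different statement, not this claim.

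The repair you list as your ``first try'' is exactly the paper's proof, and its constants close comfortably. Here is how it goes.
\begin{itemize}
\item Expose the table of all keys other than $x$. This is itself a Poissonized right-unbounded instance, so \cref{lem:rh-frontier-tail} applies to it.
\item Break same-home ties against $x$, which can only increase $\disp(x)$. The keys with priority over $x$ that are still unplaced after slot $t$ are then precisely the frontier $S_t$ of that table, and their evolution does not depend on $x$ or on later-home keys.
\item This is the folding of $X_t$ into the count. It is exact, since $\abs{S_t}\ge\abs{S_{t-1}}+X_t-1$, and no separate Poisson tail for $X_t$ is needed.
\item Applying the lemma with parameter $k$ gives $\Pr[\abs{S_t}>k\eps^{-1}]\le\exp(-k^{1/d}/100)$.
\item On the complement, at most $k\eps^{-1}$ of the slots $t+1,\ldots,t+\lfloor 2k\eps^{-1}\rfloor$ are taken by higher-priority keys. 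That leaves at least $k\eps^{-1}-1\ge k\eps^{-1}/2$ slots where $x$ is placed if its offset is activated. Each is activated independently with probability at least $p(2k\eps^{-1})$.
\item So the survival probability is at most $\exp\bigl(-\tfrac12\,2^{-(d-1)/d}(k\eps^{-1})^{1/d}\bigr)\le\exp(-k^{1/d}/4)\le\exp(-k^{1/d}/100)$, using $\eps\le1$.
\item Summing the two terms gives exactly $2\exp(-k^{1/d}/100)$.
\end{itemize}
The room you were looking for comes from the activation step. Its exponent has a $\Theta(1)$ constant rather than $1/100$, so nothing needs to be borrowed from the frontier term.
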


\begin{proof}
  Expose the table formed by all keys except $x$, and then insert $x$ with an independent probe sequence.  Same-home ties may be broken against $x$, which can only increase its displacement and is therefore harmless for an upper bound.  By \cref{lem:rh-frontier-tail},
  \[
    \Pr\Bk*{\abs{S_t}>k\eps^{-1}} \le \exp\bk*{-k^{\frac{1}{d}}/100}.
  \]
  Condition on the complementary event and on the whole frontier $S_t$.  In the next $\ceil*{2k\eps^{-1}}$ slots, at most $k\eps^{-1}$ slots can be occupied by keys that have higher Robin Hood priority than $x$.  Thus there are at least $k\eps^{-1}$ slots in this interval where $x$ would be placed if it probed the slot.  At each such slot, the displacement of $x$ is at most $2k\eps^{-1}$ up to rounding, so the activation probability is at least $p(2k\eps^{-1})$ after changing constants.  Therefore
  \begin{align*}
    \Pr\Bk*{\disp^{(\textup{Ru})}(x)>2k\eps^{-1}\mid \abs{S_t}\le k\eps^{-1},S_t}
      &\le \bk*{1-p(2k\eps^{-1})}^{k\eps^{-1}} \\
      &\le \exp\bk*{-(k\eps^{-1})^{\frac{1}{d}}/2}
       \le \exp\bk*{-k^{\frac{1}{d}}/100},
  \end{align*}
  for $k_0$ sufficiently large.  Combining the two estimates proves the claim.
\end{proof}

We now transfer \cref{claim:rh-displacement-tail} from the Poissonized right-unbounded model to the standard cyclic table using the reductions from \cref{subsec:rh-switching-to-poissonized-right-unbounded}.  The resulting corollary controls both the mean displacement and its $1/d$-moment; by \cref{lem:displacement-and-probe-complexity}, the latter is the quantity that determines expected probe complexity.

\begin{corollary}[Standard-model displacement upper bounds]
  \label{cor:rh-displacement-upper}
  Fix a constant $d\ge1$, a parameter $0<\eps\le1/2$, and an integer $n\ge100\eps^{-10d}$.  In a standard Robin Hood smoothed degree-$d$ probing hash table with $n$ slots and load $1-\eps$, every fixed key $x$ satisfies
  \[
    \E\Bk*{\disp(x)}=O(\eps^{-1}), \qquad \E\Bk*{\disp(x)^{\frac{1}{d}}}=O(\eps^{-\frac{1}{d}}).
  \]
\end{corollary}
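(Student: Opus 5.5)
The plan is to transfer the tail bound of \cref{claim:rh-displacement-tail} from the Poissonized right-unbounded model to the standard cyclic table using the two reductions of \cref{subsec:rh-switching-to-poissonized-right-unbounded}, and then integrate the resulting tail. By translation symmetry of the standard table, we may condition on $h(x)=3n/4$ without changing the distribution of $\disp(x)$. We first compare the standard table with the Poissonized standard table using \cref{cor:reduction-to-poissonized}, which applies since $n\ge100\eps^{-10d}\ge10\eps^{-2}$. This costs only a factor $3$ in every tail probability. Working in the Poissonized cyclic table, we then apply the coupling of \cref{lem:reduction-to-right-unbounded} (its proof does not use the exact key count, so it carries over). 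On $\mathcal E_1\cap\mathcal E_2\cap\mathcal G$ this coupling gives $\disp^{(\textup{std})}(x)\le4\disp^{(\textup{Ru})}(x)$. Combining, we aim for
\[
\Pr\Bk*{\disp(x)>8k\eps^{-1}}\le 3\Bigl(2e^{-k^{1/d}/100}+\Pr[\mathcal E_1^c]+\Pr[\mathcal E_2^c]+\Pr[\mathcal G^c\mid\mathcal E_1\cap\mathcal E_2]\Bigr)
\]
for $k\ge k_0$. Separately, we always have the trivial bound $\disp(x)\le n$ for an inserted key.

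Next we bound the exceptional events. $\mathcal E_1^c$ says that a $\Poi((1-\eps)n/2)$ variable exceeds $(1-2\eps/3)n/2$, i.e. deviates by $\eps n/6$ above its mean. A Poisson Chernoff bound gives probability $\exp(-\Omega(\eps^2 n))$. For $\mathcal E_2^c$, the number of keys crossing slot $n$ in the right-unbounded model is at most $\abs{S_n}$ plus the keys hashed beyond $n$ that are still unplaced; here only keys homed in $[1,n]$ matter, so it is $\abs{S_n}$. \cref{lem:rh-frontier-tail} with $k=\eps^2n/6$ gives
\[
\Pr\Bk*{\abs{S_n}>\eps n/6}\le\exp\bigl(-(\eps^2 n/6)^{1/d}/100\bigr).
\]
\cref{lem:reduction-to-right-unbounded} gives $\Pr[\mathcal G^c\mid\cdot]\le\exp(-\Omega(\eps n^{1/d}))$. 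Since $n\ge100\eps^{-10d}$, each of these three error terms is at most $n^{-2}$ times a constant, hence polynomially small in $n$ and far below $\eps/n$.

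Finally we integrate. Split
\[
\E[\disp(x)]\le 8k_0\eps^{-1}+\sum_{k\ge k_0}8\eps^{-1}\Pr\Bk*{\disp(x)>8k\eps^{-1}},
\]
stopping the sum at $k=n\eps/8$ because $\disp(x)\le n$. The main terms contribute $\eps^{-1}\sum_k e^{-k^{1/d}/100}=O(\eps^{-1})$, with the constant depending only on $d$. The error terms contribute at most $n\cdot O(n^{-2})=o(1)$. The same split applied to $\disp(x)^{1/d}$, using
\[
\E[\disp^{1/d}]=\int_0^\infty \Pr[\disp>s^d]\,ds,
\]
gives $O(\eps^{-1/d})$. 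The main obstacle is making sure the error events are small enough to survive multiplication by the trivial bound $n$. This is exactly where $n\ge100\eps^{-10d}$ is used, and it is the step to check carefully, especially for $\mathcal E_2$, where one must justify applying the frontier tail at time $n$ (rather than at a generic $t$) to count overflowing keys.
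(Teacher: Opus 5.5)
\textbf{There is a genuine gap.} Your outline follows the paper's route: symmetry to $h(x)=3n/4$, Poissonization, the right-unbounded coupling, bounding $\mathcal E_1^c$, $\mathcal E_2^c$ (via $|S_n|$) and $\mathcal G^c$, then integrating the tail. The step that closes your argument is false, though. In smoothed degree-$d$ probing there is no bound $\disp(x)\le n$. Displacement is the true offset $r_i(x)$, not its residue modulo $n$. For $d>1$ the key activates only about $n^{1/d}$ offsets in $[1,n]$. With small but positive probability all of them hit slots where $x$ is rejected, and $x$ then continues at offsets beyond $n$, wrapping around the table as often as needed. Two bounds you may have had in mind do not apply. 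The bound $\disp^{(\textup{std})}(x)\le n$ inside \cref{lem:reduction-to-right-unbounded} holds only on the good events. The cap at $n$ in \cref{def:poissonized-set-of-keys} only covers the overloaded case $N+1>n$.

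Consequently, truncating your sum at $k=n\eps/8$ simply discards the contribution of $\{\disp(x)>n\}$. Your combined tail bound cannot supply that contribution either. The additive bad-event term $\Pr[\mathcal E_1^c]+\Pr[\mathcal E_2^c]+\Pr[\mathcal G^c\mid\mathcal E_1\cap\mathcal E_2]$ is the same for every threshold. It can be summed over thresholds up to $n$ (your $n\cdot O(n^{-2})$), but not over all thresholds. Your integral for $\disp(x)^{1/d}$ has the same hole.

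\textbf{The fix.} What is missing is an unconditional tail bound beyond one full cycle, and this is exactly what the paper adds:
\begin{itemize}
\item By history independence, insert $x$ last and condition on the table of the other keys, which has at least $\eps n$ empty slots.
\item For $T\ge n$, the offsets in $[0,T]$ land on these empty slots at least $\eps T/2$ times. Each such offset is activated independently with probability at least $p(T)$.
\item An activation at such an offset places $x$ no later than that offset. This holds because the kicking chain never fills an originally empty slot before it terminates.
\end{itemize}
This gives $\Pr[\disp(x)>T]\le\exp(-\Omega(\eps T^{1/d}))$ for $T\ge n$. A dyadic summation, using $n\ge100\eps^{-10d}$, shows this range contributes only $O(\eps^{-1})$ to $\E[\disp(x)]$.

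With this added, your argument goes through. The $1/d$-moment then follows from Jensen, $\E[\disp(x)^{1/d}]\le\E[\disp(x)]^{1/d}=O(\eps^{-1/d})$, without a second tail integration.
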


\begin{proof}
  By symmetry of the cyclic table, we may condition on $h(x)=3n/4$.  Applying the Poissonization reduction \cref{cor:reduction-to-poissonized} conditionally on this home slot, it is enough to control the corresponding Poissonized cyclic table up to a factor $3$ in tail probabilities.

  Couple the Poissonized cyclic table to the Poissonized right-unbounded table as in \cref{lem:reduction-to-right-unbounded}.  Let $\mathcal E_0$ be the event that the Poissonized number of keys is at most $n$.  Recall that $\mathcal E_1$ is the event that at most $(1-2\eps/3)n/2$ keys hash to $[1,n/2]$ in the right-unbounded model, and $\mathcal E_2$ is the event that at most $\eps n/6$ keys travel beyond slot $n$.  Let $\mathcal G$ be the good event from that lemma.  Standard Chernoff bounds give
  \[
    \Pr\Bk*{\mathcal E_0^c}+\Pr\Bk*{\mathcal E_1^c} \le \exp(-\Omega(\eps^2 n)).
  \]
  Moreover, the number of keys traveling beyond slot $n$ in the right-unbounded table is $\abs{S_n}$, so \cref{lem:rh-frontier-tail} with $k=\eps^2 n/6$ gives
  \[
    \Pr\Bk*{\mathcal E_2^c}
      \le
      \exp\bk*{-\Omega\bk*{(\eps^2 n)^{\frac{1}{d}}}}.
  \]
  Finally, \cref{lem:reduction-to-right-unbounded} gives
  \[
    \Pr\Bk*{\mathcal G^c\mid \mathcal E_1\cap\mathcal E_2}
      \le
      \exp\bk*{-\Omega\bk*{\eps n^{\frac{1}{d}}}}.
  \]
  Write
  \[
    \eta_n
      \defeq
      \exp(-\Omega(\eps^2 n))
      +\exp\bk*{-\Omega\bk*{(\eps^2 n)^{\frac{1}{d}}}}
      +\exp\bk*{-\Omega\bk*{\eps n^{\frac{1}{d}}}}.
  \]
  On $\mathcal E_0\cap\mathcal E_1\cap\mathcal E_2\cap\mathcal G$, the cyclic displacement is at most four times the right-unbounded displacement.  Hence, for every $T\ge8k_0\eps^{-1}$,
  \begin{equation}
    \label{eq:standard-rh-displacement-tail}
    \Pr\Bk*{\disp(x)>T}
      \le
      O\left(
        \exp\bk*{-\Omega\bk*{(\eps T)^{\frac{1}{d}}}}
        +\eta_n
      \right),
  \end{equation}
  where we used \cref{claim:rh-displacement-tail} with $k=\Theta(\eps T)$ and absorbed the constant factor from Poissonization.  All hidden constants in this proof may depend on the fixed constant $d$.

  Although the comparison with the right-unbounded model is valid for the entire tail, the additive bad-event probability $\eta_n$ in \eqref{eq:standard-rh-displacement-tail} cannot be summed over infinitely many values of $T$.  We therefore use a separate unconditional bound beyond one full cycle.  Insert $x$ last, which is allowed by history independence, and condition on the table formed by all other keys.  There are at least $\eps n$ empty physical slots.  For $T\ge n$, the offsets at most $T$ contain at least $\frac12\eps T$ occurrences of these empty slots.  If any such occurrence is activated, then $x$ is placed no later than that offset.  Since each occurrence has activation probability at least $p(T)$ and the activation bits are independent,
  \begin{equation}
    \label{eq:standard-crude-long-tail}
    \Pr\Bk*{\disp(x)>T}
      \le
      \exp\bk*{-\Omega\bk*{\eps T^{\frac{1}{d}}}}
      \qquad(T\ge n).
  \end{equation}

  Summing the tail probabilities, \eqref{eq:standard-rh-displacement-tail} contributes $O(\eps^{-1})$ over $T\le n$, while the additive term contributes only $n\eta_n=O(\eps^{-1})$ under the assumption $n\ge100\eps^{-10d}$.  A dyadic decomposition in \eqref{eq:standard-crude-long-tail} gives
  \[
      \sum_{T\ge n}
          \exp\bk*{-\Omega\bk*{\eps T^{\frac{1}{d}}}}
      =O\left(
          n\sum_{\ell\ge0}
              2^\ell
              \exp\bk*{-\Omega\bk*{\eps(2^\ell n)^{\frac{1}{d}}}}
        \right)
      =O(\eps^{-1}),
  \]
  again using $n\ge100\eps^{-10d}$.  Therefore $\E\Bk*{\disp(x)}=O(\eps^{-1})$.  Jensen's inequality then gives
  \[
    \E\Bk*{\disp(x)^{\frac{1}{d}}} \le \E\Bk*{\disp(x)}^{\frac{1}{d}} =O(\eps^{-\frac{1}{d}}),
  \]
  completing the proof.
\end{proof}

We are now ready to assemble the upper bounds of \cref{thm:rh-expected-bounds}, restated below for convenience:

\rhexpectedbounds*
\begin{proof}[Proof of \cref{thm:rh-expected-bounds}]
  Fix a key $x$ in the standard Robin Hood smoothed degree-$d$ probing hash table with $n$ slots at load $1-\eps$.  \cref{cor:rh-displacement-upper} gives
  \[
    \E\Bk*{\disp(x)}=O(\eps^{-1})
    \qquad\text{and}\qquad
    \E\Bk*{\disp(x)^{\frac{1}{d}}}=O(\eps^{-\frac{1}{d}}),
  \]
  which is the first of the two claimed bounds.  For the probe-complexity bound, \cref{lem:displacement-and-probe-complexity} gives
  \[
    \E\Bk*{\pc(x)}
    =\Theta\bk*{\E\Bk*{\disp(x)^{\frac{1}{d}}}+1}
    =O(\eps^{-\frac{1}{d}}),
  \]
  as desired.
\end{proof}

\section{Lower Bounds for Both Orderings}
\label{sec:lower-bounds}

In this section, we prove the matching lower bounds for both \cref{thm:anti-robinhood-expected-bounds} (anti-Robin Hood random fixed-offset degree-$d$ probing) and \cref{thm:rh-expected-bounds} (Robin Hood smoothed degree-$d$ probing).  

We begin in Subsections \ref{subsec:rh-displacement-lower} and \ref{subsec:query-time-from-displacement} with a sequence of arguments that gives tight lower bounds both for anti-Robin Hood when $d < 2$ and for Robin Hood for all $d$. An interesting feature of this argument is that, to prove \emph{lower bounds}, we make critical use of the tail bounds in our \emph{upper bounds} from earlier. The argument has two components.  First, we prove an ordering-independent lower bound on the sum of the \emph{displacements} of all keys. Then, we prove that in order for the upper tail bounds from earlier sections to hold, the only way for our displacement lower bounds to be true is if we also get corresponding lower bounds on expected probe complexity. This gives us the desired lower bounds for anti-Robin Hood when $d < 2$ and for Robin Hood for all $d$.

For anti-Robin Hood with $d\ge2$, however, the resulting $\Omega(\eps^{-1})$ lower bound on expected displacement is not tight enough to imply the logarithmic query-time lower bound that we need. To handle that regime, we prove a more general result, showing that the expected query time in \emph{any} fixed-offset anti-Robin Hood probing scheme (no matter whether it uses polynomial offsets or not) is $\Omega(\log \epsilon^{-1})$. This implies that Theorem \ref{thm:anti-robinhood-expected-bounds} is asymptotically tight for all $d\ge2$.

\subsection{Displacement Lower Bound}
\label{subsec:rh-displacement-lower}

We begin with a bound that does not depend on either the probing scheme or the collision-ordering rule.  Orient the table cyclically in the probing direction.  Whenever an interval receives more keys than it has slots, every excess key must cross the outgoing edge of the interval, and these forced crossings contribute directly to total displacement.

\begin{lemma}[Total-displacement lower bound]
  \label{lem:rh-total-displacement-lower}
  Fix a parameter $0<\eps\le1/2$ and an integer $n\ge10\eps^{-2}$.  For any probing scheme and any ordering in a hash table with $n$ slots and load $1-\eps$,
  \[
    \E\Bk*{\sum_{x}\disp(x)}=\Omega(n\eps^{-1}),
  \]
  where the sum is over all keys in the table.
\end{lemma}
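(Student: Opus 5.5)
We plan to prove the bound by charging displacement to edge crossings. Orient the cyclic table in the probing direction and, for each slot boundary $b$ (the edge between slot $b$ and slot $b+1$), let $C_b$ be the number of keys whose path from $h(x)$ to its final slot crosses $b$. A key of displacement $\disp(x)$ crosses at least $\disp(x) \bmod n$ consecutive boundaries. We lose nothing by truncating: either $\disp(x)\ge n$ (which only helps), or the key crosses exactly $\disp(x)$ boundaries. So $\sum_x \disp(x)\ge \sum_b C_b$, and it suffices to show $\E[C_b]=\Omega(\eps^{-1})$ for every boundary $b$. By rotational symmetry all $\E[C_b]$ are equal, so it is enough to handle one fixed boundary.

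Fix the boundary after slot $b$ and, for $\ell\ge1$, let $J_\ell$ be the interval of the $\ell$ slots ending at $b$. Every key that hashes into $J_\ell$ either is stored in $J_\ell$ or leaves it, and a key can only leave $J_\ell$ by moving forward past $b$ (with a displacement below $n$; keys with longer displacement cross $b$ anyway). Only $\ell$ keys fit inside $J_\ell$. Hence, deterministically,
\[
C_b\ \ge\ \max_{1\le \ell\le n/2}\bigl(N(J_\ell)-\ell\bigr),
\]
where $N(J_\ell)$ is the number of keys hashing into $J_\ell$. This holds for every probing scheme and ordering, since it only uses the fact that keys move forward.

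It remains to show $\E\bigl[\max_{\ell\le n/2}(N(J_\ell)-\ell)\bigr]=\Omega(\eps^{-1})$. This is the main step, and it is a ballot or random-walk estimate. The process $\ell\mapsto N(J_\ell)-\ell$ is a walk with increments of mean $-\eps$ and variance $\Theta(1)$; the exact fixed-load multinomial version is the same up to constants. Take $\ell^*=\eps^{-2}\le n/10$, which is allowed since $n\ge 10\eps^{-2}$. Then $N(J_{\ell^*})-\ell^*$ has mean $-\eps^{-1}$ and standard deviation $\Theta(\eps^{-1})$. An anticoncentration statement shows it exceeds $c\,\eps^{-1}$ with constant probability: in the Poisson model use a normal or Berry–Esseen approximation, and in the fixed-load model use the binomial with $p=\ell^*/n$ and mean $(1-\eps)\ell^*$. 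Because the maximum dominates this single term, $\E[C_b]\ge c\eps^{-1}\cdot\Omega(1)$.

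To transfer to exactly $(1-\eps)n$ keys, apply the binomial anticoncentration directly: the count $\mathrm{Bin}((1-\eps)n,\ell^*/n)$ has variance $\Theta(\ell^*)$, because $\ell^*/n\le 1/10$. Summing over the $n$ boundaries then gives $\E[\sum_x\disp(x)]\ge n\cdot\Omega(\eps^{-1})$.
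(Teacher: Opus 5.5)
Your proposal is correct and follows essentially the same route as the paper. You charge displacement to directed edge crossings, lower-bound the crossings of each edge by the overload $(X-L)_+$ of the length-$\lfloor\eps^{-2}\rfloor$ interval ending there, use binomial anticoncentration (mean $L-\eps^{-1}$, standard deviation $\Theta(\eps^{-1})$), and sum over the $n$ edges. The differences are cosmetic. You count crossings without multiplicity, so you get $\sum_x\disp(x)\ge\sum_b C_b$ where the paper has equality. You also phrase the overload as a maximum over $\ell$ but only use $\ell=\eps^{-2}$; there you should pass explicitly to the positive part using $C_b\ge0$.
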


\begin{proof}
  We view the slots cyclically and orient every edge from a slot to its successor in the probing direction.  For an edge $e_i$ from slot $i$ to slot $i+1$, let $C_i$ be the number of keys whose displacement path crosses $e_i$, counted with multiplicity if a key wraps around more than once.  Then
  \begin{equation}
    \label{eq:total-displacement-edge-crossings}
    \sum_x \disp(x)=\sum_{i=1}^{n} C_i.
  \end{equation}

  Fix an integer $L\le n/2$, and let $I_i$ be the cyclic interval of $L$ slots ending at $i$.  Let $X_{i,L}$ be the number of keys whose home slots lie in $I_i$.  At most $L$ of these keys can be stored in physical slots of $I_i$. Every remaining key must leave $I_i$ in the probing direction and therefore must cross the edge $e_i$ at least once.  Hence, deterministically,
  \begin{equation}
    \label{eq:edge-overload-lower-bound}
    C_i\ge \bk*{X_{i,L}-L}_+.
  \end{equation}

  Take $L=\floor*{\eps^{-2}}$; the assumption on $n$ ensures $L\le n/2$.  For a fixed edge $i$, the random variable $X_{i,L}$ has binomial distribution
  \[
    X_{i,L}\sim \operatorname{Binomial}\bk*{(1-\eps)n,\frac{L}{n}}.
  \]
  Its mean is $(1-\eps)L=L-\Theta(\eps^{-1})$ and its standard deviation is $\Theta(\eps^{-1})$.  Consequently,
  \begin{equation}
    \label{eq:binomial-overload-positive-part}
    \E\Bk*{\bk*{X_{i,L}-L}_+}=\Omega(\eps^{-1}).
  \end{equation}
  For completeness, when $\eps$ is below a sufficiently small constant, a uniform Berry--Esseen bound gives constants $c_1,c_2>0$ such that
  \[
      \Pr\Bk*{X_{i,L}\ge L+c_1\eps^{-1}}\ge c_2.
  \]
  This implies \eqref{eq:binomial-overload-positive-part}.  When $\eps$ is bounded away from zero, both $L$ and $\eps^{-1}$ are bounded by constants, and the same conclusion follows by decreasing the implicit constant.

  Combining \eqref{eq:total-displacement-edge-crossings}, \eqref{eq:edge-overload-lower-bound}, and \eqref{eq:binomial-overload-positive-part},
  \[
    \E\Bk*{\sum_x\disp(x)} =\sum_{i=1}^{n}\E\Bk*{C_i} \ge n\cdot\Omega(\eps^{-1}) =\Omega(n\eps^{-1}),
  \]
  as claimed.
\end{proof}

\subsection{Query-Time Lower Bounds from Displacement}\label{subsec:query-time-from-displacement}

We next combine \cref{lem:rh-total-displacement-lower} with the upper-tail estimates proved earlier.  Those estimates rule out the possibility that the mean displacement is carried by an extremely small number of exceptionally far-displaced keys.  Truncating at the natural displacement scale then converts the first-moment lower bound into a lower bound on the $1/d$-moment.

\begin{corollary}[Lower bound for Robin Hood smoothed degree-$d$ probing]
  \label{cor:rh-displacement-lower}
  Fix a constant $d\ge1$, a parameter $0<\eps\le1/2$, and an integer $n\ge100\eps^{-10d}$.  In a Robin Hood smoothed degree-$d$ probing hash table with $n$ slots and load $1-\eps$, every fixed key $x$ satisfies
  \[
    \E\Bk*{\disp(x)}=\Omega(\eps^{-1}), \qquad \E\Bk*{\disp(x)^{\frac{1}{d}}}=\Omega(\eps^{-\frac{1}{d}}).
  \]
\end{corollary}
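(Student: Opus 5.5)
By symmetry, every key in the standard cyclic table has the same displacement distribution. So \cref{lem:rh-total-displacement-lower} applied to the $(1-\eps)n$ keys gives
\[
\E\Bk*{\disp(x)}\ge \frac{1}{(1-\eps)n}\,\E\Bk*{\sum_y\disp(y)}=\Omega(\eps^{-1})
\]
directly, using $n\ge100\eps^{-10d}\ge10\eps^{-2}$. The substance of the statement is the $1/d$-moment bound. For this I plan to combine the first-moment lower bound with the upper tail \eqref{eq:standard-rh-displacement-tail} and \eqref{eq:standard-crude-long-tail} from the proof of \cref{cor:rh-displacement-upper}. Those tails show that the mean is not carried by rare, very long displacements.

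Concretely, fix a large constant $\Lambda$ (depending on $d$) and split
\[
\E\Bk*{\disp(x)}=\E\Bk*{\disp(x)\ind_{\BK{\disp(x)\le \Lambda\eps^{-1}}}}+\E\Bk*{\disp(x)\ind_{\BK{\disp(x)>\Lambda\eps^{-1}}}}.
\]
For the second term I will write it as
\[
\Lambda\eps^{-1}\Pr\Bk*{\disp(x)>\Lambda\eps^{-1}}+\sum_{T>\Lambda\eps^{-1}}\Pr\Bk*{\disp(x)>T}.
\]
I will bound this using \eqref{eq:standard-rh-displacement-tail} for $T\le n$, where the main term is $\exp(-\Omega((\eps T)^{1/d}))$, and using \eqref{eq:standard-crude-long-tail} for $T\ge n$. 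The stretched-exponential sum then gives $\eps^{-1}\int_{\Lambda}^\infty e^{-\Omega(s^{1/d})}\,ds$ plus $\Lambda\eps^{-1}e^{-\Omega(\Lambda^{1/d})}$. Both are at most a small constant times $\eps^{-1}$ once $\Lambda$ is large. The additive $\eta_n$ contributions total $n\eta_n=O(\eps^{-1})$ as before. I need this last quantity to be \emph{small} relative to the constant in the first-moment bound, not merely $O(\eps^{-1})$. I expect this to be the main technical care point. I will handle it by noting that under $n\ge100\eps^{-10d}$ each exponent in $\eta_n$ is a positive power of $\eps^{-1}$ times a constant, so $n\eta_n=o(\eps^{-1})$ for $\eps$ small. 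For $\eps$ bounded below by a constant, the claim is trivial since $\disp(x)\ge0$ and $\E\Bk*{\disp(x)^{1/d}}$ is bounded below by a constant: with constant probability, the home slot of $x$ is occupied by a higher-priority key.

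Hence the first term is $\Omega(\eps^{-1})$. On the event $\disp(x)\le\Lambda\eps^{-1}$ we have the pointwise bound
\[
\disp(x)=\disp(x)^{1/d}\disp(x)^{1-1/d}\le \disp(x)^{1/d}(\Lambda\eps^{-1})^{1-1/d}.
\]
Dividing gives
\[
\E\Bk*{\disp(x)^{1/d}}\ge \frac{\Omega(\eps^{-1})}{(\Lambda\eps^{-1})^{1-1/d}}=\Omega(\eps^{-1/d}),
\]
with the constant depending on $d$ through $\Lambda$. This completes the plan. The only delicate step is choosing $\Lambda$, and controlling the $\eta_n$ term, so that the truncated tail is at most half of the $\Omega(\eps^{-1})$ lower bound.
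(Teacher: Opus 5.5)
Your proposal is correct and follows essentially the same route as the paper: exchangeability plus \cref{lem:rh-total-displacement-lower} for the first moment, truncation of $D=\disp(x)$ at a constant multiple of $\eps^{-1}$ using the upper-tail estimates from the proof of \cref{cor:rh-displacement-upper}, and the pointwise bound $D\le D^{1/d}(\Lambda\eps^{-1})^{1-1/d}$ on the truncated event. Your explicit treatment of the additive $n\eta_n$ term is more careful than the paper's proof, which only records $n\eta_n=O(\eps^{-1})$; that alone does not make the truncated tail smaller than half the first-moment constant. You show $n\eta_n=o(\eps^{-1})$ as $\eps\to0$ under $n\ge100\eps^{-10d}$, and handle $\eps$ bounded below separately by noting that $\disp(x)\ge1$ with constant probability.
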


\begin{proof}
  Let $m=(1-\eps)n$ be the number of keys, ignoring harmless rounding.  By exchangeability of the keys and \cref{lem:rh-total-displacement-lower},
  \[
    \E\Bk*{\disp(x)} =\frac1m\E\Bk*{\sum_y\disp(y)} =\Omega(\eps^{-1}).
  \]
  Write $D=\disp(x)$ and let $c>0$ be such that $\E\Bk*{D}\ge c\eps^{-1}$.  The tail estimates in the proof of \cref{cor:rh-displacement-upper} imply that, for a sufficiently large constant $k_0=k_0(d)$,
  \[
    \E\Bk*{D\ind\nolimits_{\BK*{D>k_0\eps^{-1}}}} \le \frac{c}{2}\eps^{-1}.
  \]
  Consequently,
  \[
    \E\Bk*{D\ind\nolimits_{\BK*{D\le k_0\eps^{-1}}}} \ge \frac{c}{2}\eps^{-1}.
  \]
  On the event $D\le k_0\eps^{-1}$,
  \[
    D^{\frac{1}{d}} \ge \frac{D}{(k_0\eps^{-1})^{\frac{d-1}{d}}}.
  \]
  Therefore
  \[
    \E\Bk*{D^{\frac{1}{d}}}
      \ge
      \E\Bk*{D^{\frac{1}{d}}\ind\nolimits_{\BK*{D\le k_0\eps^{-1}}}}
      =\Omega(\eps^{-\frac{1}{d}}),
  \]
  as desired.  By \cref{lem:displacement-and-probe-complexity}, this also gives $\E\Bk*{\pc(x)}=\Omega(\eps^{-1/d})$.
\end{proof}

\begin{corollary}[Lower bound for anti-Robin Hood random fixed-offset degree-$d$ probing when \texorpdfstring{$d<2$}{d < 2}]
  \label{cor:anti-robinhood-displacement-lower}
  Fix a constant $d\in[1,2)$, a parameter $0<\eps\le1/2$, and an integer $n$ satisfying $C_{\mathrm{fin}}\log n\cdot\max\{n^{-1/2},n^{-1/d}\}\le\eps$.  In an anti-Robin Hood random fixed-offset degree-$d$ probing hash table with $n$ slots and load $1-\eps$, every fixed key $x$ satisfies
  \[
    \E\Bk*{\disp(x)^{\frac{1}{d}}}=\Omega\bk*{\eps^{1-\frac{2}{d}}}.
  \]
\end{corollary}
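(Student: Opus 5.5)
We follow the template of \cref{cor:rh-displacement-lower}: start from the ordering-independent first-moment bound of \cref{lem:rh-total-displacement-lower}, use the upper-tail estimates from the proof of \cref{thm:anti-robinhood-expected-bounds} to show that the mean displacement cannot be carried by rare, very distant keys, and then convert the truncated first moment into a lower bound on the $1/d$-moment. The difference from the Robin Hood case is the scale. Here the natural truncation scale is $K_\star=\Theta(\eps^{-2})$, not $\eps^{-1}$. Since $d<2$, we have $K_\star^{1/d}\cdot\eps = \eps^{1-2/d}$, which is the target.

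The first step is the mean. By exchangeability of keys and \cref{lem:rh-total-displacement-lower}, $\E[D]\ge c\eps^{-1}$ with $D=\disp(x)$. The second step removes the large-displacement part, and we need $\E\bigl[D\ind_{\{D>K_\star\}}\bigr]\le \frac c2\eps^{-1}$ for $K_\star=L'\eps^{-2}$ with $L'$ a large constant.
\begin{itemize}
\item For $K\ge 5K_\theta$ (with $\theta=\eps/2$), write the tail via \eqref{eq:ptheta1} transferred to the fixed-load model. This gives $\Pr[D>K]=O\bigl(\eps\exp(-\Omega(\min\{\eps^2K,\eps K^{1/d}\}))\bigr)$.
\item For $d<2$ and $K\ge\eps^{-2}$, the minimum is $\eps^2K$, since $\eps K^{1/d}\ge\eps^2K$ iff $K\le\eps^{-d/(d-1)}$. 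On the range $[\eps^{-2},\eps^{-d/(d-1)}]$ the tail is $\eps e^{-\Omega(\eps^2K)}$. Integrating over $K\ge K_\star$ gives $O(\eps\cdot\eps^{-2}e^{-\Omega(L')})$, which is small relative to $\eps^{-1}$ once $L'$ is large.
\item Beyond $\eps^{-d/(d-1)}$, the tail $\eps e^{-\Omega(\eps K^{1/d})}$ integrates to $O(\eps\cdot\eps^{-d})$ times an exponentially small factor in $\eps^{-1/(d-1)}$, which is negligible.
\item The part beyond $n/10$ is handled by the polynomial-completion bound, exactly as in the proof of \cref{thm:anti-robinhood-expected-bounds}. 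The constraint on $\eps$ makes that contribution $O(1)$.
\end{itemize}
This step is the main obstacle. One must check that the constants from \cref{lem:longreveal} allow starting the exponential tail at a scale $\Theta(\eps^{-2})$ whose multiplier can be pushed up to make the leftover mass at most $\frac c2\eps^{-1}$. Note that $K_\theta=L\max\{\theta^{-2},\theta^{-d}\}=\Theta(\eps^{-2})$ for $d<2$, so this holds with $K_\star=5K_\theta\cdot\Theta(1)$ and $L'$ enlarged. If the constant in the exponent is too weak, we instead take $K_\star=L'\eps^{-2}$ with $L'$ large and sum dyadically, which gains a factor $e^{-\Omega(L')}$.

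The final step is the conversion. We get $\E\bigl[D\ind_{\{D\le K_\star\}}\bigr]\ge \frac c2\eps^{-1}$. On this event $D^{1/d}\ge D/K_\star^{(d-1)/d}$, so
\[
\E[D^{1/d}]\ge \frac{c\eps^{-1}}{2(L'\eps^{-2})^{(d-1)/d}}=\Omega\bigl(\eps^{-1+2(d-1)/d}\bigr)=\Omega\bigl(\eps^{1-2/d}\bigr),
\]
as claimed. Since the probe complexity in the fixed-offset scheme is $\Theta(D^{1/d}+1)$, this also yields the matching query-time lower bound.
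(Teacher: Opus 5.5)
Your proposal is correct and follows the paper's proof essentially step for step. You use exchangeability plus the total-displacement lemma to get $\E[D]=\Omega(\eps^{-1})$, truncate at $\Theta(\eps^{-2})$ using \eqref{eq:ptheta1} together with the polynomial-completion bound beyond $n/10$, and convert via $D^{1/d}\ge D/K_\star^{(d-1)/d}$, which is exactly the paper's argument.

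Two small points are worth noting. First, the paper bounds the minimum in the exponent by the sum of the two exponentials instead of splitting at $\eps^{-d/(d-1)}$. At that crossover, $\eps K^{1/d}=\eps^{-(2-d)/(d-1)}$, not $\eps^{-1/(d-1)}$, so the smallness in that region really comes from integrating only from $K_\star$ onward. Second, the completion term has to be made $o(\eps^{-1})$ (the same argument in fact gives $n^{-\Omega(1)}$), not merely $O(1)$.
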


\begin{proof}
  As above, exchangeability and \cref{lem:rh-total-displacement-lower} give
  \[
      \E\Bk*{D}=\Omega(\eps^{-1}),
      \qquad D\defeq\disp(x).
  \]
  Let $c>0$ satisfy $\E\Bk*{D}\ge c\eps^{-1}$.  We claim that, for a sufficiently large constant $k_0=k_0(d,\alpha)$,
  \begin{equation}
      \label{eq:anti-rh-truncated-displacement}
      \E\Bk*{D\ind\nolimits_{\BK*{D>k_0\eps^{-2}}}}
      \le \frac{c}{2}\eps^{-1}.
  \end{equation}
  Indeed, the \todo{what does fixed-size mean here? Jingxun: Edited.}
  tail bounds \eqref{eq:ptheta1} and \eqref{eq:ptheta2} from \cref{thm:anti-robinhood-expected-bounds}, applied to a uniformly random key, imply that for $k_0\eps^{-2}\le K\le n/10$,
  \[
      \Pr[D>K]
      \le
      C\eps\left(e^{-c_1\eps^2K}+e^{-c_1\eps K^{1/d}}\right).
  \]
  Summing this tail from $k_0\eps^{-2}$ to $n/10$ contributes at most
  \[
      C'\eps^{-1}e^{-c_2k_0^{1/d}}.
  \]
  It remains to control displacements beyond $n/10$.  By \eqref{eq:ptheta1} at $K=n/10$, the theorem's lower bound on $\eps$ makes $\Pr[D>n/10]$ smaller than $n^{-M}$ for any prescribed constant $M$, after increasing $C_{\mathrm{fin}}$.  Let $\tau_n$ be the completion threshold from \cref{lem:polynomial-completion}.  Its offsets visit every residue modulo $n$, so the probe prefix of every key contains every physical slot by true displacement $\tau_n$.  The table has an empty slot, and hence $D\le\tau_n$.  By definition, $\tau_n$ is independent of the keys and of the offset variables whose sampling intervals intersect $[0,n/10]$, which determine the event $\{D>n/10\}$.  Consequently,
  \[
      \E\Bk*{D\ind\nolimits_{\BK*{D>n/10}}}
      \le
      \E\Bk*{\tau_n}\Pr[D>n/10]
      =n^{O(1)}n^{-M}
      =o(\eps^{-1}),
  \]
  where the final equality follows by choosing $M$ sufficiently large.
  Choosing $k_0$ large now proves \eqref{eq:anti-rh-truncated-displacement}.

  It follows that
  \[
      \E\Bk*{D\ind\nolimits_{\BK*{D\le k_0\eps^{-2}}}}
      \ge \frac{c}{2}\eps^{-1}.
  \]
  On this event,
  \[
      D^{1/d}
      \ge
      \frac{D}{(k_0\eps^{-2})^{(d-1)/d}}.
  \]
  Taking expectations yields
  \[
      \E\Bk*{D^{1/d}}
      =\Omega\left(
          \frac{\eps^{-1}}{(\eps^{-2})^{(d-1)/d}}
      \right)
      =\Omega\bk*{\eps^{1-2/d}}.
  \]
  For random fixed-offset degree-$d$ probing, $\pc(x)=\Theta(D^{1/d}+1)$ pointwise, so this is also the required query-time lower bound.
\end{proof}

\subsection{A Logarithmic Lower Bound for Fixed-Offset Anti-Robin Hood Probing}
\label{subsec:anti-rh-harmonic-lower}

For $d\ge2$, and for the anti-Robin Hood ordering, the lower bound $\E\Bk*{\disp(x)}=\Omega(\eps^{-1})$ from the preceding argument is not tight enough to give the desired query-time lower bound.  We therefore lower-bound query time directly.  The argument applies to every fixed-offset probing scheme under anti-Robin Hood ordering; it uses only that all keys share an increasing offset sequence, not the spacing or randomness of that sequence.

\begin{theorem}[Logarithmic lower bound for fixed-offset probing]
  \label{thm:anti-rh-harmonic-probe-tail}
  Fix an integer $n\ge2$, a parameter $0<\eps\le1/2$, and any increasing offset sequence
  \[
      0= r_0 < r_1<r_2<\cdots
  \]
  shared by all keys.  Suppose $4\frac{\log n}{\sqrt n}\le\eps$, and let the table contain $m=\floor*{(1-\eps)n}$ keys under anti-Robin Hood ordering.  Then every fixed key $x$ satisfies
  \[
      \Pr\Bk*{\pc(x)>j}\ge\frac{1}{12j}
      \qquad\text{for every integer }1\le j\le\frac{1}{12\eps}.
  \]
  Consequently,
  \[
      \E\Bk*{\pc(x)}=\Omega\bk*{\log\eps^{-1}}.
  \]
\end{theorem}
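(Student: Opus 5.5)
The approach is a \emph{level-by-level recurrence} on the fraction of keys that survive to each probe index, organized around the same reveal structure used in \cref{subsec:anti-robinhood-concentration}, but indexed by probe index instead of displacement. Since all keys share $r_0<r_1<\cdots$, anti-Robin Hood priority at any slot is exactly priority by smaller probe index. So the state of the table restricted to keys with $\pc\le i$ (the ``index-$i$ reveal'') is obtained from the index-$(i-1)$ reveal by letting every surviving key $y$ try slot $h(y)+r_i$. The attempt succeeds only if that slot is free in the index-$(i-1)$ reveal, and each free slot accepts at most one index-$i$ key. Write
\[
  q_i \defeq \Pr\Bk*{\pc(x)\ge i}
\]
for a fixed key $x$. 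By exchangeability, $\E\Bk*{\#\{y:\pc(y)\ge i\}} = m q_i$. Counting slots shows that the number of free slots in the index-$(i-1)$ reveal is exactly $n-m+\#\{y:\pc(y)\ge i\}$, whose expectation is $\approx(\eps+q_i)n$.

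The central step is the one-step inequality
\[
  q_i-q_{i+1}\le c\,q_i(\eps+q_i)\bigl(1+o(1)\bigr)
\]
for an absolute constant $c$. In words, a key that reaches index $i$ is placed there with probability about the free fraction $(\eps+q_i)$. I would prove it by inserting $x$ last (\cref{lem:history-independence-of-rh-and-arh}) and bounding
\[
  \Pr\Bk*{\pc(x)=i}\le\Pr\Bk*{\pc(x)\ge i\text{ and }h(x)+r_i\text{ is free in the index-}(i-1)\text{ reveal of the other keys}}.
\]
If the two events were independent, the bound would follow from translation invariance, since the free fraction of the other keys' reveal has mean $(\eps+q_i)$ up to an $O(1/n)$ correction. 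The event $\{\pc(x)\ge i\}$ says that the slots $h(x)+r_0,\dots,h(x)+r_{i-1}$ are occupied in the corresponding reveals. That should only make the neighbouring structure \emph{more} crowded, so it should be positively correlated with $h(x)+r_i$ being occupied. I would try to justify this with a Harris/FKG-type monotonicity in the key set, in the spirit of the monotone coupling in \cref{lem:one-lipschitz}, where adding keys only fills reveals. If that fails, the fallback is to replace correlation by concentration: apply \cref{prop:concentration} (with an index reveal in place of a displacement reveal) to show the global free count is within $O(\sqrt{n\log n})$ of its mean with probability $1-n^{-2}$. This is where $\eps\ge4\log n/\sqrt n$ enters, because it makes the fluctuation $\ll\eps n$.

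Given the recurrence with $q_0=1$, an induction shows $q_{i}\ge 1/(12i)$ as long as $i\le 1/(12\eps)$. In that range $\eps\le q_i$, so the recurrence reads $q_{i+1}\ge q_i-2c\,q_i^2$. The map $q\mapsto q-2cq^2$ is increasing for small $q$, and $1/(12i)-2c/(144i^2)\ge 1/(12(i+1))$ for a suitable constant. The first few indices need a separate crude bound (for example $q_1\ge$ the probability that the home slot of $x$ is taken by another key, which is a constant) to get the induction started. Summing the tail then gives
\[
  \E\Bk*{\pc(x)}\ge\sum_{j\le1/(12\eps)}\frac1{12j}=\Omega(\log\eps^{-1}).
\]

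The main obstacle is the correlation step: $\pc(x)\ge i$ biases the neighbourhood of $h(x)$, and I must show that the bias only helps, or is controlled up to constant factors. This is exactly where the constants $12$ have slack built in. I would first attempt the monotone-coupling (positive-association) argument, and only resort to the concentration fallback if association cannot be established for anti-Robin Hood reveals.
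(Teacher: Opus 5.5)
Your overall strategy is the paper's. You view anti-Robin Hood insertion as synchronous rounds indexed by probe number. You then get a one-step recurrence from negative correlation between ``still unplaced'' (increasing in the key set) and ``target slot free'' (decreasing in the key set), and sum the resulting $1/j$ tail. Made exact, your survival recurrence is the paper's recurrence in disguise. Let $f_i=\eps+(1-\eps)q_i$ be the expected free fraction of the index-$(i-1)$ reveal (exact in the Poissonized model). Then $q_i-q_{i+1}\le q_if_i$ is equivalent to $f_{i+1}\ge f_i-(f_i-\eps)f_i\ge f_i(1-f_i)$. The paper iterates this as $1/f_{i+1}\le1/f_i+3$, using $f_1=e^{-(1-\eps)}<2/3$. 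That is cleaner than your induction on $q\mapsto q-2cq^2$, which is not monotone on the whole relevant range, and it needs $c$ small to reach the stated constant $1/(12j)$.

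However, the step you flag as the crux is not closed as written, because you set everything up with exactly $m$ keys. Harris/FKG needs a product measure with coordinatewise monotone events. With $m$ keys the home counts are multinomial, and ``monotone in the key set'' alone gives no correlation inequality for a fixed-size key set.

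The missing ingredient is Poissonization. With independent $\Poi(1-\eps)$ home counts, Harris applies and gives $c=1$.
\begin{itemize}
\item The paper applies it per home, pairing ``home $a$ still has an unplaced key after $j$ probes'' with ``$a+r_{j+1}$ is free''.
\item Your per-key version also works. You must first check, via the round-by-round monotone coupling, that $\{\pc(x)\ge i\}$ is increasing in the counts at other homes and in the number of same-home keys ahead of $x$.
\end{itemize}

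You then have to return to exactly $m$ keys. The backlog $U_j$ is $1$-Lipschitz under adding a key, so $\E\Bk*{U_j^{\mathrm{fix}}}\ge\E\Bk*{U_j}-\E\Bk*{\abs{N-m}}\ge n/(6j)-\sqrt n-1$. This depoissonization is exactly where the hypothesis $\eps\ge4\log n/\sqrt n$ enters: it gives $\sqrt n+1\le\eps n\le n/(12j)$. It is not used in a concentration step.

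Your concentration fallback cannot replace the correlation step. Concentration of the global free count says nothing about whether the particular slot $h(x)+r_i$ is free given $\pc(x)\ge i$. Averaging over $h(x)$ without decorrelation only yields $\Pr\Bk*{\pc(x)=i}\lesssim\eps+q_i$. That loses the factor $q_i$ and gives no decay at all.
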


\begin{proof}
  We first analyze a Poissonized table in which the numbers of keys at the $n$ home positions are independent $\Poi(1-\eps)$ random variables.  Condition throughout on the shared offset sequence.

  Anti-Robin Hood insertion has an exact synchronous description.  In round $j$, every home position that still has an unplaced key tries the slot $a+r_j$.  If the slot is free, one key from that home is placed; otherwise no key from that home is placed in this round.  Indeed, proposals from earlier rounds have smaller displacement and hence higher anti-Robin Hood priority.  Moreover, for fixed $j$, the map $a\mapsto a+r_j\pmod n$ is a permutation, so distinct home positions never compete in the same round.

  Let $U_j$ be the number of keys that remain unplaced after the first $j$ probes, and let $q_j$ be the expected fraction of slots that are free at that time.  If $N$ denotes the total number of Poisson keys, then exactly $N-U_j$ slots are occupied.  Therefore
  \begin{equation}
      \label{eq:anti-rh-free-backlog-relation}
      \E\Bk*{U_j}=n(q_j-\eps).
  \end{equation}
  After the home probe, a slot is free exactly when no key hashes to it, and hence
  \begin{equation}
      \label{eq:anti-rh-free-initial}
      q_1=e^{-(1-\eps)}.
  \end{equation}
  In particular, $1/3<q_1<2/3$ for $0<\eps\le1/2$.

  We next derive a recurrence that no longer contains $\eps$.  Fix a home position $a$.  As functions of the independent Poisson home counts, the event that $a$ still has an unplaced key after $j$ probes is increasing, whereas the event that $a+r_{j+1}$ is free is decreasing.  These monotonicities follow by coupling two inputs that differ by added keys: after every round, each home has at least as many unplaced keys in the larger input, and every slot occupied in the smaller input is also occupied in the larger input. With these monotonicities, we can apply the following Harris's inequality.

  \begin{theorem}[Harris's Inequality \cite{harris1960lower}]
    \label{thm:harris-inequality}
    Let $Z_1,\ldots,Z_N$ be independent random variables taking values in totally ordered spaces.  If $f$ and $g$ are integrable coordinatewise nondecreasing functions of $(Z_1,\ldots,Z_N)$, then
    \[
        \E\Bk*{fg}\ge\E\Bk*{f}\,\E\Bk*{g}.
    \]
    Consequently, if $f$ is coordinatewise nondecreasing and $g$ is coordinatewise nonincreasing, then
    \[
        \E\Bk*{fg}\le\E\Bk*{f}\,\E\Bk*{g}.
    \]
  \end{theorem}
  
  By \cref{thm:harris-inequality} and translation invariance, the expected number of placements in round $j+1$ is at most
  \begin{align*}
      & \E\Bk*{\sum_{a\in[n]} \ind\nolimits_{\BK*{a\text{ has an unplaced key after }j\text{ probes}}} \cdot \ind\nolimits_{\BK*{a+r_{j+1}\text{ is free after }j\text{ probes}}}}\\
      \le{}&\sum_{a\in[n]}
      \Pr\Bk*{a\text{ has an unplaced key after }j\text{ probes}}
      \Pr\Bk*{a+r_{j+1}\text{ is free after }j\text{ probes}}\\
      \le{}&
      \E\Bk*{U_j}\,q_j
      =n(q_j-\eps)q_j
      \le nq_j^2.
  \end{align*}
  Every placement fills one free slot, so
  \begin{equation}
      \label{eq:anti-rh-free-recurrence}
      q_{j+1}\ge q_j-q_j^2=q_j(1-q_j).
  \end{equation}

  The sequence $(q_j)$ is nonincreasing, and hence $q_j<2/3$ by \eqref{eq:anti-rh-free-initial}.  Taking reciprocals in \eqref{eq:anti-rh-free-recurrence} gives
  \[
      \frac1{q_{j+1}}
      \le
      \frac1{q_j(1-q_j)}
      =
      \frac1{q_j}+\frac1{1-q_j}
      \le
      \frac1{q_j}+3.
  \]
  Since $1/q_1<3$, induction yields
  \begin{equation}
      \label{eq:anti-rh-harmonic-free-tail}
      q_j\ge\frac1{3j}
      \qquad(j\ge1).
  \end{equation}
  Combining \eqref{eq:anti-rh-free-backlog-relation} and \eqref{eq:anti-rh-harmonic-free-tail}, for every $j\le1/(6\eps)$ we obtain
  \begin{equation}
      \label{eq:anti-rh-harmonic-poisson-tail}
      \E\Bk*{U_j}
      =n(q_j-\eps)
      \ge\frac{n}{6j}.
  \end{equation}

  It remains to return to exactly $m$ keys.  Couple the Poissonized and fixed-load tables by taking the first $N$ and the first $m$ keys, respectively, from one infinite i.i.d. key sequence.  Adding or deleting one key changes $U_j$ by at most one.  Indeed, the round-by-round coupling above shows that adding a key cannot reduce any home backlog and cannot make a previously occupied slot free.  Since $U_j$ equals the number of keys minus the number of occupied slots, its increase is therefore between $0$ and $1$.  Since $\E\Bk*{N}=(1-\eps)n$ and $\abs{m-(1-\eps)n}<1$,
  \[
      \E\Bk*{\abs{N-m}}\le\sqrt n+1.
  \]
  Therefore, for $j\le1/(6\eps)$,
  \[
      \E\Bk*{U_j^{\mathrm{fix}}}
      \ge
      \frac{n}{6j}-\sqrt n-1.
  \]
  The theorem's assumption and $n\ge2$ imply
  \[
      \sqrt n+1\le\eps n.
  \]
  Hence, for every integer $1\le j\le1/(12\eps)$,
  \[
      \sqrt n+1
      \le
      \eps n
      \le
      \frac{n}{12j},
  \]
  and therefore
  \[
      \E\Bk*{U_j^{\mathrm{fix}}}
      \ge
      \frac{n}{12j}.
  \]
  The $m$ fixed-load keys are exchangeable, and $U_j^{\mathrm{fix}}$ counts exactly the keys with probe complexity greater than $j$.  Thus every fixed key $x$ satisfies
  \[
      \Pr\Bk*{\pc(x)>j}
      =\frac{\E\Bk*{U_j^{\mathrm{fix}}}}{m}
      \ge
      \frac{1}{12j}
  \]
  throughout this range, proving the stated tail.  Finally, the tail-sum formula and the trivial bound $\pc(x)\ge1$ give
  \[
      \E\Bk*{\pc(x)}
      \ge
      1+\frac1{12}
        \sum_{j=1}^{\floor*{1/(12\eps)}}\frac1j
      =\Omega\bk*{\log\eps^{-1}}. \qedhere
  \]
\end{proof}

\begin{corollary}[Lower bound for anti-Robin Hood random fixed-offset degree-$d$ probing when \texorpdfstring{$d\ge2$}{d >= 2}]
  \label{cor:anti-robinhood-displacement-lower-d-ge-2}
  Fix a constant $d\ge2$, a parameter $0<\eps\le1/2$, and an integer $n$ satisfying $C_{\mathrm{fin}}\log n\cdot\max\{n^{-1/2},n^{-1/d}\}\le\eps$.  In an anti-Robin Hood random fixed-offset degree-$d$ probing hash table with $n$ slots and load $1-\eps$, every fixed key $x$ satisfies
  \[
    \E\Bk*{\pc(x)}=\Omega\bk*{\log \eps^{-1}}.
  \]
\end{corollary}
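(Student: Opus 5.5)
This corollary is a direct specialization of \cref{thm:anti-rh-harmonic-probe-tail}. The plan is to condition on the realized random offset sequence $r_1<r_2<\cdots$ and apply that theorem to the resulting deterministic shared offset sequence. Since the $r_j$ are sampled from the disjoint consecutive intervals $I_j=[\alpha j^d,\alpha(j+1)^d)$, the sequence is strictly increasing with $r_0=0$. Every realization is therefore an admissible input to \cref{thm:anti-rh-harmonic-probe-tail}. That theorem does not use the spacing or randomness of the offsets, only that they are shared and increasing.

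The one thing to check is the hypothesis $4\frac{\log n}{\sqrt n}\le\eps$. The corollary assumes $\eps\ge C_{\mathrm{fin}}\log n\cdot\max\{n^{-1/2},n^{-1/d}\}$, and the maximum is at least $n^{-1/2}$. Hence $\eps\ge C_{\mathrm{fin}}\log n/\sqrt n$, which suffices once $C_{\mathrm{fin}}\ge4$; we may assume this by enlarging $C_{\mathrm{fin}}$, since the upper-bound theorem only needs it large. The small-$n$ edge case $n\ge2$ is harmless: for $n=1$ the condition on $\eps$ is vacuous or trivial, and the bound $\E[\pc(x)]\ge 1=\Omega(\log\eps^{-1})$ holds for $\eps$ bounded below. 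A similar remark covers any rounding discrepancy in the load, since the theorem is stated for $m=\lfloor(1-\eps)n\rfloor$ keys, matching the table's load $1-\eps$.

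Conditional on each offset realization, \cref{thm:anti-rh-harmonic-probe-tail} gives $\E[\pc(x)\mid r_1,r_2,\ldots]\ge c\log\eps^{-1}$ for an absolute constant $c>0$, uniformly over realizations. Averaging over the offset sequence by the tower property yields $\E[\pc(x)]=\Omega(\log\eps^{-1})$. I expect no real obstacle here; the only potential subtlety is confirming that the theorem's constants do not depend on the offsets, which holds because its proof conditions on the offset sequence throughout.
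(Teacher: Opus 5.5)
Your proposal is correct and matches the paper's proof, which likewise observes that the corollary's range on $\eps$ implies the hypothesis $4\log n/\sqrt n\le\eps$ of \cref{thm:anti-rh-harmonic-probe-tail} (whose proof conditions on the shared offset sequence throughout) and reads off the expectation bound. One minor quibble: in your $n=1$ aside, $\eps$ is not bounded below, but the statement is vacuous there anyway, since the table holds $\floor*{1-\eps}=0$ keys.
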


\begin{proof}
  The parameter range of the corollary implies the hypothesis of \cref{thm:anti-rh-harmonic-probe-tail}, whose expectation conclusion gives the claim directly.
\end{proof}


\section{AI Acknowledgment}

Several of the sections in this paper were written with the assistance of GPT 5.5 Pro, which was used to convert detailed proof outlines into first drafts for sections (which were then heavily edited by the authors). Additionally, GPT 5.5 Pro suggested the use of Harris's inequality in Section \ref{sec:lower-bounds} as a way to simplify the anti-Robin Hood lower bound proof.

\section{Funding Acknowledgments}

William Kuszmaul, Jingxun Liang, and Renfei Zhou were partially supported by NSF grant CNS-2504471 and by a Jane Street Research Grant. Huacheng Yu was supported by NSF CAREER award CCF-2339942. Renfei Zhou was partially supported by the Jane Street Graduate Research Fellowship and the MongoDB PhD Fellowship. Part of this work was initiated by Dagstuhl Seminar 25191``Adaptive and
Scalable Data Structures''.

\bibliography{reference}

\appendix

\section{Completion Bound for the Offset Sequence}
\label{app:completion}
The query-time bounds invoke \cref{lem:polynomial-completion} only after a probe sequence has reached a displacement on the order of the table size.  The expected completion threshold is enough: once a sufficiently far tail of the offset sequence has represented every residue class, a query has inspected every physical slot and must have stopped.  We restate the lemma for convenience.

\polycompletion*

\begin{proof}[Proof of \cref{lem:polynomial-completion}]
  We first consider random fixed-offset degree-$d$ probing.  Let
  \[
      m=\lceil Cn\log(2n)\rceil,
  \]
  where $C$ is a sufficiently large constant, and partition the probe indices beginning at $m$ into consecutive blocks of length $m$.  Consider a block
  \[
      J,J+1,\ldots,J+m-1,
      \qquad J\ge m.
  \]
  The sampling intervals in this block have comparable lengths, namely $|I_J|=\Theta(J^{d-1})$, with this quantity interpreted as a constant when $d=1$.

  Fix a residue class $a$ modulo $n$.  The union of the sampling intervals in the block has length $\Theta(J^{d-1}m)$ and hence contains $\Theta(J^{d-1}m/n)$ integer offsets congruent to $a$ modulo $n$.  Each such integer belongs to one sampling interval, which selects it with probability $\Omega(J^{1-d})$.  Therefore
  \[
      \sum_{j=J}^{J+m-1}\Pr[r_j\equiv a\pmod n]
      \ge
      c\frac{m}{n}
      \ge
      cC\log(2n)
  \]
  for a constant $c=c(d,\alpha)>0$.  Independence within the block shows that the probability of missing $a$ is at most $(2n)^{-cC}$.  After a union bound over all residue classes and an appropriate choice of $C$, every block covers all residues with probability at least $1/2$, independently of the other blocks.

  Let $G$ be the index of the first successful block.  Then $G$ is stochastically dominated by a geometric random variable of constant mean.  
  \todo{What is $s_n$? Do you just mean $n$? Jingxun: Edited.}
  By the end of the first successful block, the offsets allowed in the definition of $\tau_n$ have visited every residue class, and the largest exposed offset is $O(((G+1)m)^d)$.  Therefore
  \[
      \tau_n
      =
      O(((G+1)m)^d).
  \]
  A geometric random variable has a finite $d$-th moment, and hence
  \[
      \E\Bk*{\tau_n}
      =
      O(m^d)
      =
      n^{O(1)}.
  \]
  By its definition through indices $j\ge s_n$, $\tau_n$ is measurable only with respect to offset variables whose sampling intervals are disjoint from $[0,n/10]$.

  Now consider smoothed degree-$d$ probing for a fixed key $x$.  Choose
  \[
      T=\lceil(C'n\log(2n))^d\rceil
  \]
  for a sufficiently large constant $C'$, and partition the offsets above $T$ into dyadic blocks $[2^\ell T,2^{\ell+1}T)$, $\ell\ge0$.  For any residue class $a$ modulo $n$,
  \[
      \sum_{\substack{2^\ell T\le k<2^{\ell+1}T\\ k\equiv a\pmod n}}p(k)
      \ge
      c'2^{\ell/d}\frac{T^{1/d}}{n}
      =
      \Omega\Bk*{2^{\ell/d}C'\log(2n)}.
  \]
  Thus the probability that the $\ell$-th block fails to contain an activated representative of every residue class is at most
  \[
      \exp\Bk*{-\Omega(2^{\ell/d}\log(2n))}.
  \]
  Let $L$ be the first successful dyadic block.  The blocks use disjoint activation bits, and the preceding estimate implies
  \[
      \E\Bk*{2^L}
      \le
      1+
      \sum_{\ell\ge0}2^{\ell+1}\Pr[L>\ell]
      =
      O(1).
  \]
  Since $T>n$ and $\tau_n(x)\le2^{L+1}T$,
  \[
      \E\Bk*{\tau_n(x)}
      =
      O(T)
      =
      n^{O(1)}.
  \]
  Taking a common exponent $B$ large enough proves both assertions.
\end{proof}

\section{Proofs of Facts from Section~\ref{sec:prelim}}
\label{app:reductions}
In this section, we provide proofs for the facts \cref{lem:history-independence-of-rh-and-arh,lem:displacement-and-probe-complexity} from \cref{sec:prelim}.

\subsection{History Independence of the Two Orderings}

History independence for priority-based open-addressing tables, including Robin Hood-style rules, has been studied in the hashing literature; see, for example, \cite{attiya2025historyindependent} and the references therein.  Our setting differs in one bookkeeping detail: in a cyclic table, the same key may reach the same physical slot at several true offsets that differ by multiples of $n$, and these occurrences can have different priorities.  The standard deferred-acceptance proof otherwise applies unchanged.  For completeness, we spell it out while retaining the true offset in every proposal contract.

\historyindependence*

\begin{proof}[Proof of \cref{lem:history-independence-of-rh-and-arh}]
  We use key-proposing deferred acceptance in the sense of Gale and Shapley~\cite{gale1962college}, expressed in the matching-with-contracts framework of Hatfield and Milgrom~\cite{hatfield2005matching}.  Fix the set of keys, their base hashes, their realized probe sequences, and their tie-breaking values.  For every occurrence of a true offset $k$ in the probe sequence of a key $x$, introduce a contract $(x,k)$ with the physical slot
  \[
      h(x)+k \pmod n.
  \]
  Contracts $(x,k)$ and $(x,k')$ remain distinct even if $k\equiv k'\pmod n$.  Each key ranks its contracts in probe order, equivalently by increasing true offset.  Each slot ranks the contracts pointing to it according to the insertion rule: Robin Hood ordering prefers larger true offset, anti-Robin Hood ordering prefers smaller true offset, and equal offsets are ordered by the fixed tie-breaking values.  All preference orders are therefore strict.

  Run key-proposing deferred acceptance.  An unmatched key proposes its next contract; the receiving slot keeps the contract it prefers and rejects the other one, if any; and a rejected key continues with its next contract.  This is exactly the insertion-and-kicking algorithm.  Choosing an insertion order merely determines when each key makes its first proposal and hence selects an asynchronous proposal schedule for the same deferred-acceptance instance.

  We use the standard schedule-independence property of deferred acceptance with finite strict preference lists.  Its proof is the usual Gale--Shapley proof, which has two steps.  First, the rejection lemma says that no rejected contract belongs to a stable matching.  Indeed, let $(x,s)$ be the earliest rejected contract in a purported stable matching.  Slot $s$ rejects it for a key $y$ that it prefers to $x$.  Since $y$ has proposed to $s$, it has already been rejected from every contract it prefers to $s$; by minimality, its purported match is not one of them.  Thus $y$ prefers $s$ to its match, and $(y,s)$ blocks the purported matching.  Second, a terminating fair schedule is stable, and each key has been rejected from every contract it prefers to its terminal one.  The rejection lemma makes the terminal matching key-optimal; strict preferences make it unique.  Hence the outcome is schedule independent.

  Now compare two completed insertion executions.  For each key, truncate its contract list after the largest index proposed in either execution.  Both executions are terminating fair schedules for this finite instance, so both produce its unique key-optimal matching.  They therefore accept the same contract $(x,k)$ for every key.  Since a contract records both the physical slot and the true offset, the two executions have the same table state and displacements.  Hence the table is history independent.
\end{proof}

History independence allows us to choose any convenient insertion order in the analysis.  In the random models considered here, exchangeability of the keys additionally implies that identically distributed inserted keys have the same expected displacement and probe complexity.

\subsection{Reducing Probe Complexity to Displacement}
Suppose that a key $x$ is stored on its $i$-th probe, at
\[
    h_i(x)=h(x)+r_i(x)\pmod n.
\]
Then $\disp(x)=r_i(x)$ and $\pc(x)=i$.  For random fixed-offset degree-$d$ probing, the sampling-interval bounds give the pointwise relation $\pc(x)=\Theta(\disp(x)^{1/d}+1)$.  In smoothed degree-$d$ probing, the number of activated offsets below a given displacement fluctuates, so no analogous pointwise relation holds in general.  The following expectation identity is the appropriate substitute.

\displacementprobecomplexity*

\begin{proof}[Proof of \cref{lem:displacement-and-probe-complexity}]
  By history independence, we may insert $x$ last.  Let $D=\disp(x)$, and let $A_{x,r}$ indicate that offset $r$ is activated for $x$, with $A_{x,0}=1$.  The probe complexity is exactly the number of activated offsets up to and including the final displacement, so
  \begin{equation}
    \label{eq:pc-from-displacement}
    \pc(x)=\sum_{r\ge 0} A_{x,r}\ind\nolimits_{\BK*{D\ge r}}.
  \end{equation}
  Fix $r\ge1$, condition on the table formed by the other keys, and expose all randomness of $x$ except $A_{x,r}$.  Whether $x$ advances past every offset smaller than $r$ is decided before the bit $A_{x,r}$ is examined, and this event is precisely $\{D\ge r\}$.  Hence $A_{x,r}$ is independent of $\{D\ge r\}$.  Taking expectations in \eqref{eq:pc-from-displacement} and using $p(0)=1$ gives
  \begin{align*}
    \E\Bk*{\pc(x)}
      &= \sum_{r\ge 0} p(r)\Pr\Bk*{D\ge r} \\
      &= \E\Bk*{\sum_{r=0}^{D}p(r)}.
  \end{align*}
  Since $p(r)=\Theta(r^{-(d-1)/d})$ for $r\ge1$, the claimed identity follows by substituting $m=D$ into
  \[
    \sum_{r=0}^{m}p(r)=\Theta\bk*{m^{1/d}+1}
    \qquad\text{for every integer }m\ge0.
    \qedhere
  \]
\end{proof}

\end{document}